 \def\thm@space@setup{%
   \thm@preskip=\parskip \thm@postskip=0pt
 }
\newtheorem{theorem}{Theorem}
\newtheorem{corollary}{Corollary}
\newtheorem{lemma}{Lemma}
\newtheorem{example}{Example}
\newtheorem{remark}{Remark}
\theoremstyle{definition}
\newtheorem{assumption}{Assumption}
\newcommand{\bc}{\mathbf{c}}
\newcommand{\bz}{\mathbf{z}}
\newcommand{\bw}{\mathbf{w}}
\newcommand{\bx}{\mathbf{x}}
\newcommand{\tbz}{\mathbf{\tilde{z}}}
\newcommand{\tbw}{\mathbf{\tilde{w}}}
\newcommand{\cC}{\mathcal{C}}
\newcommand{\cc}{\mathfrak{c}}
\newcommand{\cN}{\mathcal{N}}
\newcommand{\cS}{\mathcal{S}}
\newcommand{\cT}{\mathcal{T}}
\newcommand{\cU}{\mathcal{U}}
\newcommand{\cV}{\mathcal{V}}
\newcommand{\cW}{\mathcal{W}}
\newcommand{\cX}{\mathcal{X}}
\newcommand{\Ind}{\mathbb{I}}
\DeclareMathOperator*{\E}{\mathbb{E}}
\DeclareMathOperator{\TTE}{TTE}
\DeclareMathOperator{\var}{var}
\DeclareMathOperator{\cov}{cov}
\DeclareMathOperator{\Comp}{Comp}
\DeclareMathOperator{\Bern}{Bern}
\DeclareMathOperator{\GCR}{GCR}
\pgfplotsset{compat=1.17}
\newcommand{\dmax}{d_{\max}}
\renewcommand{\hat}{\widehat}
\title{Low-order outcomes and clustered designs: combining design and analysis for causal inference under network interference}
\author{Matthew Eichhorn$^*$$^\dagger$\and Samir Khan$^*$$^\ddagger$ \and
     Johan Ugander$^\S$\and Christina Lee Yu$^\dagger$
}
\date{\today}
\begin{document}

\maketitle
\def\thefootnote{*}\footnotetext{Equal contribution. Please direct correspondence to \texttt{meichhorn@cornell.edu}.}
\def\thefootnote{$\dagger$}\footnotetext{Cornell University, Departments of Computer Science and Operations Research and Information Engineering }
\def\thefootnote{$\ddagger$}\footnotetext{Meta}
\def\thefootnote{$\S$}\footnotetext{Yale University, Departments of Statistics and Data Science}

\begin{abstract}
    \noindent 
Variance reduction for causal inference in the presence of network interference is often achieved through either outcome modeling, typically analyzed under unit-randomized Bernoulli designs, or clustered experimental designs, typically analyzed without strong parametric assumptions. 
In this work, we study the intersection of these two approaches and make the following threefold contributions. 
First, we present an estimator of the total treatment effect (or global average treatment effect) in low-order outcome models when the data are collected under general experimental designs, generalizing previous results for Bernoulli designs. We refer to this estimator as the pseudoinverse estimator and give bounds on its bias and variance in terms of properties of the experimental design. Second, we evaluate these bounds for the case of Bernoulli graph cluster randomized (GCR) designs. 
Its variance scales like the smaller of the variance obtained by the estimator derived under a low-order assumption, and the variance obtained from cluster randomization, showing that combining these variance reduction strategies is preferable to using either individually.
When the order of the potential outcomes model is correctly specified, our estimator is always unbiased, and under a misspecified model, we upper bound the bias by the closeness of the ground truth model to a low-order model.
Third, we give empirical evidence that our variance bounds can be used to select a good clustering that minimizes the worst-case variance under a cluster randomized design from a set of candidate clusterings. Across a range of graphs and clustering algorithms, our method consistently selects clusterings that perform well on a range of response models, suggesting the practical use of our bounds.

\end{abstract}

\section{Introduction} \label{sec:intro}

Standard methods for estimating causal effects from randomized experiments typically rely on the stable unit treatment value assumption (SUTVA), which ensures that a unit's outcome depends only on its treatment, and not the treatments of other units. A recent line of work has revisited this assumption and developed methods for estimating treatment effects while allowing for interference between units. One common approach in this line of work is to model the interference through an exposure mapping \citep{aronow2017estimating}, and then construct experimental designs that mitigate the impact of the interference under the assumed exposure mapping \citep{ugander2013, eckles2017design, holtz2020reducing, karrer2021network, li2022interference}. Another common approach is to specify a parametric outcome model that captures the interference, and then estimate causal effects as functions of the model parameters \citep{toulis2013estimation, cai2015social, basse2018model, chin2019regression, cortez2023exploiting}. 
\cyedit{Both approaches reduce variance by exploiting either structure in the interference network or the outcomes model. In this paper, we consider the joint benefits of using an estimator derived from a low-order interactions model combined with a cluster randomized design that exploits network structure. We show that jointly optimizing over the choice of the estimator and the design can lead to performance gains that combine the benefits of both.}

We assume the interference has a network structure, in which an underlying graph dictates the extent of the spillovers in treatment. We take as a starting point the work of \citet{YuCortezEichhorn22} and \citet{cortez2023exploiting}, which introduced a low-order interaction model for network interference, requiring that if a unit's outcome depends on the treatment status of neighbors, it does so in a way that is linear in the treatment status of only small subsets of neighbors. \citet{cortez2023exploiting} study the problem of estimating the parameters of this model using data collected under Bernoulli designs, in which each unit $i$ is assigned to treatment independently with probability $p_i$. We go beyond their setting and study the problem of estimating the parameters of a low-order interaction model using data collected under general designs, with specialized results for graph cluster randomized (GCR) designs \citep{ugander2013}. Graph cluster randomized designs partition the units into clusters based on the underlying interference graph, and then assign treatments at the cluster level rather than the individual level. This increases the probability of a unit and its neighbors being treated together, which reduces the variance of treatment effect estimators under a neighborhood exposure assumption.

Our setting raises several questions, both narrow and broad. \juedit{Under parametric outcome model assumptions, specifically a low-order interaction model, can clustered experimental designs still reduce estimator variance under interference?} How do we estimate the model parameters from cluster-randomized data? What are the bias and variance properties of the resulting estimators? How should we select the clustering for a cluster-randomized experiment? More generally, is there value in using both a complex outcome model and a complex experimental design, or is using either one alone sufficient to handle interference effectively? In this paper, we answer all of these questions by developing a general theory for estimation in the low-order interaction model using data collected from an arbitrary experimental design, and then instantiating this theory for the case of graph cluster randomized (GCR) designs. \cyedit{The main text focuses on Bernoulli randomization (over individuals or clusters), while extensions to complete randomization are included in the appendix.}

\subsection{Summary Of Results}

As mentioned above, we present both a general theory on estimation for low-order outcome models under arbitrary experimental designs and specific results for GCR designs. \cyedit{The $\beta$-order interactions assumption posits that the potential outcome $Y_i(\bz)$ of unit $i$ with respect to treatment vector $\bz$ can be written as the inner product of an unknown coefficient vector $\bc_i$, and the vector denoting whether each neighborhood subset of size at most $\beta$ is treated or not, represented by $\tbz_i^\beta$.}

\paragraph{Pseudoinverse estimators.} \cyedit{Assuming a $\beta$-order outcome model, we extend the SNIPE estimator introduced in \citet{cortez2023exploiting} from Bernoulli unit randomized designs to arbitrary experimental designs. We refer to the general estimator as the $\beta$-order pseudoinverse estimator due to its connection to a similar estimator proposed in \cite{swaminathan2017off} for the seemingly different setting of semi-parametric estimation. 
While the estimator $\widehat{\TTE}_{\beta}$ is derived assuming the class of $\beta$-order interaction models, we analyze its performance under general potential outcomes models, providing} an exact characterization for the bias of the estimator in Theorem \ref{thm:bias_bound_general}, and an upper bound for the variance of this estimator in Theorem \ref{thm:tte_var_bound} as a function of the general experimental design. This bound can be used to evaluate any proposed experimental design with only sampling access to the design, which can be used to select designs that have lower variance in the worst case. 

\paragraph{Bernoulli GCR designs.} We provide simplified and interpretable bounds on the bias and variance of the pseudoinverse estimators for Bernoulli randomized GCR designs, \juedit{where Theorem~\ref{thm:onesummary} below summarizes the results of Corollaries~\ref{cor:gcr_unbiased} and \ref{cor:gcr_bias_bound} from Section~\ref{sec:bias} and Corollary~\ref{thm:pi_gcr} from Section~\ref{sec:variance}.} 
\cyedit{
\begin{theorem}[Re-statement of Corollaries \ref{cor:gcr_unbiased}, \ref{cor:gcr_bias_bound} and \ref{thm:pi_gcr}] \label{thm:onesummary}
\[|\E[\widehat{\TTE}_{\beta}] - \TTE| \leq \|\bc_i^{ >\beta}\|_1, \]
where $\bc_i^{>\beta}$ contains coefficients that correspond to causal effects of neighborhood subsets of size larger than $\beta$. The variance is upper-bounded by
\[
    \var(\widehat{\TTE}_{\beta}) \leq
        \frac{2 B^2 C N \dmax}{n} \cdot \min\left(\frac{1}{p^C}, \frac{C^\beta}{p^\beta}\right),
\]
where $B$ is an absolute bound on the potential outcomes, $C$ is the maximum number of clusters an individual is influenced by, $N$ is the maximum cluster size, $\dmax$ is the maximum neighborhood size of an individual, $n$ is the number of individuals, and $p$ is the marginal treatment probability of a cluster.       
\end{theorem}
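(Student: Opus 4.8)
The plan is to derive all three bounds as specializations of the two general results already established: the exact bias characterization of Theorem~\ref{thm:bias_bound_general} and the design-dependent variance bound of Theorem~\ref{thm:tte_var_bound}. I would treat the bias and the variance separately, since they stem from different general theorems, and in each case the substantive work is to instantiate the abstract design-dependent quantities for a Bernoulli GCR design.

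For the bias, I would start from the exact expression in Theorem~\ref{thm:bias_bound_general}. Because the $\beta$-order pseudoinverse estimator is constructed to be exactly unbiased for the $\beta$-order projection of the potential outcomes model, the entire residual bias must be a linear functional of the higher-order coefficients $\bc_i^{>\beta}$. The first step is to read off this functional from the general theorem and to observe that its coefficients -- which are expectations of products of centered treatment indicators under the design -- are each bounded in magnitude by one after the inverse-probability normalization built into the estimator. Applying the triangle inequality then collapses the functional to $\|\bc_i^{>\beta}\|_1$, giving Corollary~\ref{cor:gcr_bias_bound}; setting $\bc_i^{>\beta}=0$ recovers exact unbiasedness under a correctly specified $\beta$-order model, which is Corollary~\ref{cor:gcr_unbiased}.

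For the variance, I would begin with the general upper bound of Theorem~\ref{thm:tte_var_bound} and specialize its design-dependent factor to Bernoulli GCR. The covariance between the per-unit contributions of $i$ and $j$ is nonzero only when their neighborhoods share a common cluster, so the double sum over pairs contracts to a sum over at most $CN$ partners for each of the $n$ units, the neighborhood sums in the SNIPE-type estimator contribute the factor $\dmax$, and together with the outcome bound $B$ these produce the prefactor $2B^2 C N \dmax / n$. The remaining task, and the crux of the argument, is to bound the inverse-probability weights in two different ways. The first, design-only bound notes that fully controlling the exposure of a unit's neighborhood requires coordinating all $C$ influencing clusters, each treated with probability $p$, giving weights of order $1/p^C$. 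The second, low-order bound exploits that the $\beta$-order estimator only ever weights the joint treatment of cluster-subsets of size at most $\beta$; there are at most $\binom{C}{\beta}\le C^\beta$ such subsets, each contributing inverse probability of order $1/p^\beta$, giving $C^\beta/p^\beta$. Taking the smaller of the two bounds yields the $\min$ and hence Corollary~\ref{thm:pi_gcr}.

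The main obstacle I anticipate is the low-order variance bound $C^\beta/p^\beta$: it requires tracking exactly which joint cluster-treatment events the pseudoinverse weights depend on and verifying that the $\beta$-order structure genuinely caps this at subsets of size $\beta$ rather than the full set of $C$ influencing clusters. This is where the combinatorial counting of relevant subsets and the factorization of the Bernoulli cluster-treatment probabilities must be carried out carefully; by contrast, the design-only $1/p^C$ bound and the entire bias argument are comparatively mechanical once the general theorems are in hand.
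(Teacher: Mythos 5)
Your variance argument follows the paper's route: Theorem~\ref{thm:tte_var_bound} plus the observation that $\Ind(\bz_{\cN_i}\not\perp\bz_{\cN_j})=\Ind(\cC(\cN_i)\cap\cC(\cN_j)\neq\emptyset)$, the $nCN\dmax$ pair count, and a two-regime bound on $\gamma_i$; you also correctly identify the $C^\beta p^{-\beta}$ regime as the crux (the paper executes it by evaluating the quadratic form exactly, $\gamma_i^2=\psi_i^\intercal\E[\tbw_i\tbw_i^\intercal]^\dagger\psi_i=\sum_{x=1}^{\min(\beta,|\cC(\cN_i)|)}\binom{|\cC(\cN_i)|}{x}\big[\big(\tfrac{1-p}{p}\big)^x-2(-1)^x+\big(\tfrac{p}{1-p}\big)^x\big]$ in Lemmas~\ref{lem:gamma_bound} and~\ref{lem:quadratic_form_sum}, then bounding the truncated sum). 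But your bias argument has a genuine gap. You assert that because the estimator is ``constructed to be exactly unbiased for the $\beta$-order projection,'' the entire residual bias is a functional of $\bc_i^{>\beta}$. That is false for general designs: the first term of Theorem~\ref{thm:bias_bound_general}, $\big\langle \bc_i^\beta,\big(\E[\tbz_i^\beta(\tbz_i^\beta)^\intercal]\E[\tbz_i^\beta(\tbz_i^\beta)^\intercal]^\dagger - I\big)\theta_i^\beta\big\rangle$, measures the failure of the rank-deficient GCR design matrix to be invertible in the direction $\theta_i^\beta$, and the pseudoinverse construction does not make it vanish automatically --- the completely randomized designs in Appendix~\ref{sec:crd} are a counterexample where the estimator is biased even with $\beta\geq\beta^*$. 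Killing this term requires the GCR-specific column-space verification of Corollary~\ref{cor:gcr_unbiased}: $\theta_i^\beta=D_i\psi_i$ lies in the column space of $D_i\E[\tbw_i\tbw_i^\intercal]D_i^\intercal$ because $\E[\tbw_i\tbw_i^\intercal]D_i^\intercal$ has full row rank. Your proposal offers no substitute for this step, and without it neither the unbiasedness claim nor the reduction of the bias to the misspecification term goes through.

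The second gap is your claim that the coefficients of the misspecification functional are ``each bounded in magnitude by one after the inverse-probability normalization.'' This is precisely the hard content of Corollary~\ref{cor:gcr_bias_bound}, and it is not true at the level of individual terms: the entries of $\E[\tbw_i^\beta(\tbw_i^\beta)^\intercal]^\dagger$ scale like $p^{-|\cV|-|\cW|}$ (equation~\eqref{eq:pseudo_entry}), so the raw summands are large and the bound emerges only through cancellation. The paper's proof first recasts everything at the cluster level via $\bx_i=D_i^\intercal\bc_i$ and $\psi_i$, then applies the explicit pseudoinverse formula, reorders summations, and invokes the binomial theorem repeatedly to show that the entries of $\E\big[\tbw_i^{>\beta}(\tbw_i^\beta)^\intercal\big]\E\big[\tbw_i^\beta(\tbw_i^\beta)^\intercal\big]^\dagger\psi_i^\beta$ equal $p^{|\cU|}\sum_{\cX\subseteq\cU,\,\cX\in\cC_i^\beta}\big[\big(\tfrac{1-p}{p}\big)^{|\cX|}-(-1)^{|\cX|}\big]$, which lies in $[0,1]$ --- a step that moreover requires $0<p\leq\tfrac12$, a hypothesis absent from your sketch. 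Only then does subtracting $\mathbf{1}$ and applying the triangle inequality yield $|\E[\widehat{\TTE}_\beta]-\TTE|\leq\|\bx_i^{>\beta}\|_1\leq\|\bc_i^{>\beta}\|_1$ (the last inequality itself using that each entry of $\bx_i^{>\beta}$ aggregates entries of $\bc_i^{>\beta}$). Declaring the coefficient bound by fiat skips the entire combinatorial mechanism that makes the result true.
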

}
\cyedit{If the potential outcomes indeed satisfy $\beta$-order interactions, then the $\beta$-order pseudoinverse estimator is unbiased. The variance scales with the minimum of $p^{-C}$ and $(C/p)^{\beta}$, and inversely in $n/N$, the number of clusters. In comparison, if we focus on the term exhibiting exponential dependence, the variance of the Horvitz-Thompson estimator under a graph cluster randomized design scales as $p^{-C}$ \citep{ugander2013}, and the variance of the $\beta$-order pseudoinverse estimator under a Bernoulli unit-randomized design scales as $(C/p)^\beta$ \citep{cortez2023exploiting}. In our fine-grained variance bound, we show that the selection of the best between these two scalings occurs at the granularity of unit-level contributions to the overall variance. In particular, for units in the interior of a cluster such the unit is only adjacent to a single cluster, i.e. $C=1$, they enjoys the benefit of contributing very little to the variance due to the GCR design; for units at the boundaries of clusters that could be adjacent to many more than $\beta$ clusters, their contribution to the variance is bounded by $p^{-\beta}$ due to using the $\beta$-order pseudoinverse estimator. This results in a performance that can be better than that achieved by using the vanilla Horvitz-Thompson estimator with GCR designs or using the $\beta$-order pseudoinverse estimator with simple Bernoulli unit randomized designs. Our result exhibits the mutually beneficial gains of jointly optimizing over the selection of the design and estimator, which has not been previously studied in the literature.}

\paragraph{Empirical validation for optimizing design of clustered experiments.} Finally, we consider the practical problem of choosing the experimental design that optimizes performance for the $\beta$-order pseudoinverse estimator. We propose selecting a clustering for a GCR experiment by minimizing a \juedit{tractable} upper bound on the mean-squared-error formed by combining our bias and variance bounds. We show that this approach exhibits favorable behavior in \cyedit{synthetic} simulations across a range of response models, graphs, and clustering algorithms.

\subsection{Related Work}
\label{subsec:related}

The problem of causal inference under fully general interference is known to be intractable \citep{basse2018limitations}, so the literature on this problem contains many different assumptions that aim to make the problem more tractable. In this section, we survey several strands of work that either make similar assumptions to ours or study similar problems.

\paragraph{Neighborhood exposure mappings.}

Exposure mappings, introduced by \citet{aronow2017estimating} and closely related to work on ``effective treatments'' \citep{manski2013identification}, specify exactly which treatment assignments impact the outcome of a unit. We follow much of the existing network interference literature in making a neighborhood exposure mapping assumption \citep{ugander2013, eckles2017design, sussman2017elements, chin2019regression, bargagli2020heterogeneous}, in which the outcome of an individual depends only on the treatment assignments within their local neighborhood in an interference graph. More specifically, we make what is known as a neighborhood treatment response assumption, in which an individual with all their neighbors treated responds the same as they would to global treatment (see Assumption~\ref{ass:neighborhood_interference} for an exact statement). This approach differs from other works that make a fractional neighborhood treatment response assumption, in which an individual responds as they would to global treatment if a certain fraction of their neighbors are treated, such as \citet{eckles2017design} or \citet{holtz2020limiting}. 

\paragraph{Graph structure assumptions.}

Even under a neighborhood exposure assumption, the set of possible outcomes for each individual is exponential in the size of the neighborhood. Thus, na\"{i}vely running a Bernoulli design and using the Horvitz--Thompson estimator \citep{horvitz1952generalization} will lead to a variance that scales exponentially with the maximum degree in the interference graph. As such, it is standard in the network interference setting to assume that this maximum degree is bounded \citep{ugander2013, ugander2023randomized, cortez2023exploiting}, an assumption we make in most of our bounds as well. 

Going beyond Bernoulli designs, one approach to variance reduction is to use a graph cluster randomized (GCR) design \citep{ugander2013}, which lowers the variance of the Horvitz--Thompson estimator under appropriate assumptions on the clustering and graph. For example, the results of \citet{ugander2013} assume that the graph satisfies $\kappa$-restricted growth, meaning that the neighborhoods of a node increase in size by at most a multiplicative factor of $\kappa$ as their size increases. Thus a node has at most $\dmax$ nodes in its neighborhood, at most $\kappa\dmax$ nodes in its 2-neighborhood, and so on. When analyzing \meedit{the variance of the pseudoinverse estimator under} GCR designs in Section~\ref{sec:variance}, we consider \meedit{the dependence of our variance bounds on $\kappa$ under such a $\kappa$-restricted growth assumption.} 
We remark that the $\kappa$-restricted growth assumption is closely related to certain assumptions made by \citet{leung2022causal}. In particular, $\kappa$-restricted growth implies the exponential growth rate condition given in (18) of \citet{leung2022causal} with $C=\dmax$ and $\beta=\log\kappa$, and \citet{leung2022causal} show that this condition (together with appropriate corresponding conditions on the interference) is sufficient to obtain a central limit theorem for their proposed estimator. The common idea between \citet{ugander2013}, \citet{leung2022causal}, and the present work is to use growth rate conditions\juedit{, when true,} to control the size of neighborhoods in the graph, and thus bound the amount of interference.

Lastly, in a different kind of assumption on the graph structure, \citet{li2022random} assume the graph is sampled from a graphon, and derive asymptotics under this assumption. Such assumptions are not necessary in our work since we do not work in a large sample limit, but deriving central limit theorems for our estimators under such assumptions is a promising area for future work.

\paragraph{Outcome model assumptions.}

Rather than positing assumptions on the graph structure and constructing experimental designs to exploit them, another line of work instead makes assumptions on the structure of the potential outcomes that go beyond the neighborhood exposure assumption. Much of this work assumes that the potential outcomes are linear in an individual's own treatment and some simple statistics of the treatment assignments of their neighbors, such as the number or proportion of treated neighbors \citep{toulis2013estimation,gui2015network,parker2017optimal, holtz2020reducing}, or linear in engineered features of the treatment assignments \citep{chin2019regression,han2023model}. Under such assumptions, the number of model parameters is less than the number of individuals, so standard regression-based techniques for parameter estimation are applicable. 

In contrast, the recent work by \cite{yu22, YuCortezEichhorn22, cortez2023exploiting} that we build on also assumes a parametric model for the potential outcomes, but allows the number of model parameters to potentially exceed the number of observations, making regression-based estimators infeasible. That work instead assumes that the order of interactions present between units is bounded, enabling parameter estimation and allowing for variance bounds that scale polynomially in the graph degree (rather than exponentially). This setting, in which the potential outcome functions are confined to a function space without rendering the problem fully parametric, is also the setting of \citet{harshaw2022design}, who propose the use of so-called Riesz estimators. Indeed, \citet{cortez2023exploiting} show that their estimator is an example of a Riesz estimator \juedit{under a Bernoulli design. When considering more complex experimental designs, positivity---which is a necessary condition for the existence of a Riesz estimator---may not be satisfied. } \cyedit{The pseudoinverse estimator we study can be derived for any arbitrary experimental design, and is similar to the estimator proposed in \cite{swaminathan2017off} in the seemingly different context of semi-parametric estimation. The analysis, however, is quite different as they assume independent samples, whereas our measured outcomes are correlated via the interference network.}

\paragraph{Design of clustered experiments.} 

In Section~\ref{subsec:choose_cluster}, we show how our bounds can inform the choice of clustering for an experiment. A related problem of designing the clustering in a GCR experiment has also been considered by \citet{leung2022rate}, who calculate the optimal number of square clusters in a spatial experiment as a function of the rate of decay of the interference. This spatial decay model is also studied by \citet{leung2023design}, who give results on the number of clusters that should be used for a particular downstream estimator that drops certain units on the boundaries of the clusters. 

For a more general take on the problem of optimizing over the space of ``causal clusterings,'' \citet{viviano2023causal} develop an algorithm for constructing a clustering to minimize an upper bound on the mean-squared-error in a GCR experiment, though without any consideration to low-order outcome models. Another key difference between our work and theirs is the choice of estimator: \citet{viviano2023causal} study a modification of the difference-in-means estimator that is a sum of independent terms, and thus more straightforward to analyze. In contrast, because it assumes and exploits a low-order outcome model, our estimator is a sum of dependent terms, and much of our effort focuses on controlling this dependence. Comparing our approach to their approach \juedit{(or the more recent work of \citet{zhao2024simple} on optimized causal clustering for Horvitz--Thompson estimators)}, but under a low-order outcome model, is an interesting future direction that would further explore the relationship between designs and estimators for causal inference under interference.

Finally, in a different vein, \citet{candogan2023correlated} consider correlating the treatment assignments between clusters to achieve nearly equal numbers of treated and control units. We do not consider such designs in this work, but our methods and bounds extend readily and could be used to obtain new results for such correlated designs.

\section{Set-up and Notation} \label{sec:model}

Here, we introduce the class of outcome models, the target estimand, and the experimental designs of interest to us. Throughout, we consider a population of $n$ individuals, denoted $[n]$, each of whom is assigned a binary treatment $z_i \in \{0,1\}$. We interpret $z_i = 0$ as an assignment of $i$ to the control group and $z_i = 1$ as an assignment of $i$ to the treatment group. We collect these treatments into the vector $\bz = (z_1, \hdots, z_n)$. We measure outcomes $Y_i$, which may depend on the treatment assignments of other individuals due to interference. Thus each $Y_i$ can be naturally encoded as a function $Y_i \colon \{0,1\}^n \to \mathbb{R}$. We assume a design-based framework in which these functions $Y_i$ are deterministic so that the only randomness in the problem is in the distribution of $\bz$. As such, all expectations we take are with respect to the distribution $\bz$, which is specified where necessary.

\subsection{Low-order Outcome Models}

Our outcome model of interest is the low-order model of \citet{cortez2023exploiting}. This model is characterized by two assumptions, which we introduce below. For the remainder of the paper, we will always assume that Assumption~\ref{ass:neighborhood_interference} and \ref{ass:low_deg} hold without explicit statement.

For the first assumption, we will suppose that the interference can be modeled by a directed graph $G$ in which the edge $(j,i)$ indicates that the treatment of individual $j$ has an effect on the outcome of individual $i$. For individual $i$, we let $\cN_i$ represent the in-neighborhood of $i$ in this graph (which includes individual $i$ itself, since the treatment of individual $i$ certainly affects its outcome).

\begin{assumption}[Neighborhood interference] 
	\label{ass:neighborhood_interference}
	If treatment vectors $\bz$ and $\bz'$ have $z_j=z_j'$ for all $j\in \cN_i$, then $Y_i(\bz)=Y_i(\bz')$.
\end{assumption}

Assumption~\ref{ass:neighborhood_interference} corresponds to a common choice of exposure mapping \citep{aronow2017estimating} that is often referred to as \textit{neighborhood interference}. Our bounds are often stated in terms of the degrees $d_i = |\cN_i|$ of these in-neighborhoods. In particular, we use $\dmax = \max_{i} d_i$ to denote the maximum degree, which we assume to be a fixed constant that does not depend on $n$. Using Assumption~\ref{ass:neighborhood_interference} along with the fact that each $z_i \in \{0,1\}$, we can (without loss of generality) represent $Y_i$ as a polynomial,
\begin{align} 
	Y_i(\bz) &= \sum_{T \subseteq \cN_i} a_{i,T} \; \Ind \big( z_j=1\text{ for all }j\in T, z_j=0\text{ for all }j\not\in T\big) \notag \\
	&= \sum_{T \subseteq \cN_i} a_{i,T} \prod_{j \in T} z_j \prod_{j' \in \cN_i \setminus T} (1 - z_{j'}).\label{eq:general_pom_a}
\end{align}
Here, $a_{i,T}$ is a parameter that captures the contribution to $i$'s outcome when the individuals in $T$ within its neighborhood are treated and the individuals in $\cN_i \setminus T$ are not treated. Following \citet{cortez2023exploiting}, we reparametrize this polynomial representation of $Y_i$ in terms of the additive treatment effect on individual $i$ from each subset of their neighborhood that is completely treated, regardless of the treatment assignments of its other neighbors. This corresponds to writing
\begin{equation} \label{eq:general_pom_c}
	Y_i(\bz) = \sum_{S \subseteq \cN_i} c_{i,S} \; \Ind \big( z_j=1\text{ for all }j \in S \big) = \sum_{S \subseteq \cN_i} c_{i,S} \prod_{j \in S} z_j,
\end{equation}
and is a transformation of \eqref{eq:general_pom_a} to the monomial polynomial basis, with the exact mapping between coefficients given by
\[
    c_{i,S} = \sum_{T \subseteq S} (-1)^{|S \setminus T|} \cdot a_{i,T}.  
\]

Using the representation of $Y_i$ from \eqref{eq:general_pom_c}, we can now introduce the second assumption, which is the central assumption of the low-order model.

\begin{assumption}[$\beta$-order interactions]
	\label{ass:low_deg}
	We have $c_{i,S}=0$ whenever $|S|>\beta$.
\end{assumption}

Assumption~\ref{ass:low_deg} posits that only sufficiently small subsets of an individual's neighborhood can have an additive effect on their outcome. The parameter $\beta$ represents the maximum possible size of such a subset. Under this assumption, we hypothesize a simplification of the outcome model \eqref{eq:general_pom_c} to
\begin{equation} \label{eq:pom_beta}
    Y_i(\bz) = \sum_{\substack{S \subseteq \cN_i \\ |S| \leq \beta}} c_{i,S} \prod_{j \in S} z_j =: \sum_{S \in \cS_i^{\beta}} c_{i,S} \prod_{j \in S} z_j.
\end{equation}
Here we write $\cS_i^\beta$ to represent the set of all subsets of $\cN_i$ with size at most $\beta$. Note that the empty set always belongs to $\cS_i^\beta$ and its corresponding coefficient $c_{i,\varnothing}$ is the observed outcome under global control, i.e.,~$c_{i, \varnothing}=Y_i(\mathbf{0})$. The parameter $\beta$ controls the richness of the assumed model class: $\beta = 1$ corresponds to a heterogeneous linear potential outcomes model \citep{yu22}, and $\beta = \dmax$ corresponds to fully general neighborhood interference. \meedit{Throughout, we view $\beta$ as a parameter chosen by the practitioner. Many of our results (e.g., Theorem~\ref{thm:bias_bound_general} in Section~\ref{sec:bias}) consider the possibility of \textit{misspecification}, where the chosen $\beta$ differs from the order of the ground-truth potential outcomes model. In this case, we'll use $\beta^*$ to denote this ground-truth model order.}

\skedit{
\begin{remark}
    While Assumption~\ref{ass:low_deg} was introduced by \citet{YuCortezEichhorn22}, many other models of graph interference in the literature are implicitly low-degree. The model of \citet{holtz2020reducing} (which is a simplification of the model in \citet{eckles2017design}) is, up to a noise term, 
\begin{equation}
    Y_i(\bz)=\alpha_i+\delta z_i+\gamma \rho_i,
\end{equation}
where $\alpha_i$ is an intercept, $\delta$ (relabeled from $\beta$ in their model to avoid a clash of notation) is the direct effect of treatment, $\gamma$ is the indirect effect of treatment, and $\rho_i$ is the fraction of treated neighbors. This model can be written as a $\beta=1$ low-order model with
\begin{equation}
    c_{i,\emptyset} = \alpha_i,
    \qquad c_{i, \{i\}}=\delta,
    \qquad c_{i,\{j\}}=\frac{\gamma}{d_i}\text{ for } j\neq i.
\end{equation}

\citet{gui2015network} analyze a similar ``linear-in-number-of-treated-neighbors'' interference model that is also equivalent to a low-order model with $\beta=1$; \citet{sussman2017elements} analyze a model of the form $Y_i(\bz)=\alpha_i+\delta_i z_i+\Gamma(\text{\# of treated neighbors})$, where $\Gamma:\{1,\cdots, d_i\}\to \mathbb{R}$ is an arbitrary function, that is again equivalent to a low-order model with $\beta=1$. 

\meedit{\citet{deng2024unbiased} consider a setting of interference, in which outcomes are modeled as a sum of effects on an individual's incoming edges in the interference network. The effects on each edge depend on the treatment assignments of \textit{both} endpoints, giving rise to the model
\begin{equation}
    Y_i(\bz) = \sum_{j \in \cN_i} \big( \alpha_{j,i} + \delta_{j,i} \cdot z_j + \gamma_{j,i} \cdot z_i + \zeta_{j,i} \cdot z_i z_j \big).
\end{equation}
This can be rewritten as a $\beta=2$ low-order model with
\begin{equation}
    c_{i,\emptyset} = \sum_{j \in \cN_i} \alpha_{j,i},
    \qquad c_{i, \{i\}} = \sum_{j \in \cN_i} \gamma_{j,i},
    \qquad c_{i,\{j\}} = \delta_{j,i}, 
    \qquad c_{i,\{i,j\}} = \zeta_{j,i}.
\end{equation}

Finally, \citet{shankar2024b} and \citet{yuan2021causal} consider potential outcomes that depend on ``motifs'' of an individual's neighborhood, which are small induced subgraphs with labeled treatment assignments. The model of \citet{shankar2024b} models outcomes with additive effects for a collection of motifs of size at most $\beta$, while \citet{yuan2021causal} posits a model where the outcomes can be expressed as a weighted linear combination of the count of each from a sufficiently small set of motifs. Both models are subsumed by a $\beta$-order interactions model with $\beta$ equal to the largest motif size.}
\end{remark}
}

\noindent Some of our bounds further require a standard boundedness assumption\mereplace{, which is analogous to an assumption on the boundedness of $Y_i(\mathbf{0})$ and $Y_i(\mathbf{1})$ in a standard potential outcomes model without interference.}{.}
\cyreplace{\begin{assumption}[Bounded outcomes]
	\label{ass:bounded_old}
	We have $\|\bc_i\|_2\leq B$ for all $i \in [n]$.
\end{assumption}
}{
\begin{assumption}[Bounded outcomes]
	\label{ass:bounded}
	We have $|Y_i(\bz)| \leq B$ for all $i \in [n]$ and $\bz \in \{0,1\}^n$.
\end{assumption}
}
\noindent \meedit{This assumption says that each individual's outcome $Y_i(\bz)$ is uniformly bounded under \textit{any} treatment configuration. This is analogous to an assumption on the boundedness of $Y_i(\mathbf{0})$ and $Y_i(\mathbf{1})$ in a standard potential outcomes model without interference.}

\juedit{Without knowing the ground truth $\beta^*$, in order to employ our estimator with some low-order model, we must somehow choose $\beta$}. Given a choice of $\beta$, we let $\tbz_i^\beta$ denote the vector indexed by $S \in \cS_i^\beta$ (in some canonical order with $\varnothing$ in the first position) with entries $(\tbz_i^\beta)_S = \prod_{j \in S} z_j$. We similarly collect the model coefficients $c_{i, S}$ into a vector $\bc_i^\beta$, where $(\bc_i^\beta)_S = c_{i,S}$. In this notation, we have that $Y_i(\bz) = \big\langle \tbz_i^\beta, \bc_i^{\beta} \big\rangle$. \cyedit{For notational simplicity, we sometimes suppress the dependence on $\beta$ in $\tbz_i$, $\bc_i$, etc.} Throughout, we use both $\langle v, w\rangle$ and $v^\intercal w$ to denote the Euclidean inner product of $v$ and $w$.

\subsection{Total Treatment Effect}
\label{subsec:estimand}

Our causal estimand of interest is the \textit{total treatment effect} (otherwise referred to as the \textit{global average treatment effect} or GATE),
\[
    \TTE = \frac{1}{n} \sum_{i=1}^{n} \Big( Y_i(\mathbf{1}) - Y_i(\mathbf{0}) \Big),
\]
the average difference between an individual's outcome under global treatment and their outcome under global control. Under Assumptions~\ref{ass:neighborhood_interference} and \ref{ass:low_deg}, we can use the representation of $Y_i$ given in \eqref{eq:pom_beta} to express the total treatment effect in terms of the coefficients $c_{i,S}$:
\begin{equation} \label{eq:tte_c}
    \TTE = \frac{1}{n} \sum_{i=1}^{n} \sum_{\substack{S \in \cS_i^{\beta} \\ S \ne \varnothing}} c_{i,S} = \frac{1}{n} \sum_{i=1}^{n} \langle \theta_i^\beta, \bc_i^{\beta} \rangle,
\end{equation}
where we define $\theta_i^\beta = [ \, 0 \; 1 \; \cdots \; 1 \,]^\intercal$ to be the length-$|\cS_i^\beta|$ column vector with $0$ in its first entry (corresponding to $S = \varnothing$) and $1$ in all other entries.

We remark that although we focus on the $\TTE$ in this work, we expect that our results extend to other causal estimands, such as the direct or indirect effects of \citet{hu2022average}, that can be written as linear combinations of the $\bc_i^{\beta}$. That is to say, our results do not rely at all on the particular form of $\theta_i^\beta$ \textemdash if we were interested in $\langle v, \bc_i^{\beta}\rangle$ for some other vector $v$, our analysis would go through identically with $\theta_i^\beta$ replaced by $v$, and we would obtain similar estimators and bounds.

\subsection{Experimental Designs}
\label{subsec:designs}

While we present bias and variance bounds that can be computed for any general randomized design, to obtain interpretable insights, we focus on the class of {\em Bernoulli graph cluster randomized} designs, which we denote $\textrm{GCR}(\cC,p)$. Under such a design, we first fix a clustering $\cC$ of the individuals, which partitions the units into a set of $m$ clusters $\cC=\{C_1,\cdots, C_m\}$ with each $C_j\subseteq [n]$. Then, each cluster $C \in \cC$ is independently assigned to treatment with probability $p$, where $w_C \sim \textrm{Bern}(p)$ denotes the treatment decision of cluster $C$. For an individual $j$, we set $z_j = w_{\cC(j)}$ where $\cC(j)$ is the cluster containing individual $j$. That is, a unit inherits the treatment of the cluster that contains it. When $\cC$ consists of the collection of singleton sets, then $\textrm{GCR}(\cC,p)$ is equivalent to a Bernoulli unit randomized design. In Appendix~\ref{sec:crd} we extend some of our results to \textit{completely randomized GCR designs}, where a uniform random subset $S$ of $k$ clusters is selected, which introduces some negative correlation between the cluster treatment assignments.
\section{The Pseudoinverse Estimator} \label{sec:pinv}

In this section, we introduce the \textit{pseudoinverse estimator} for the TTE, which depends on a practitioner-selected order parameter $\beta$ as well as the $\beta^{\text{th}}$-moments of the design. This generalizes the estimator of \citet{cortez2023exploiting} beyond unit-randomized designs to any choice of randomized design. 

\subsection{Deriving the Pseudoinverse Estimator} \label{subsec:derive}

Recall from Section~\ref{subsec:estimand} that, under an assumed $\beta$-order outcome model (Assumption~\ref{ass:low_deg}), the total treatment effect is a linear combination of the effect coefficients $c_{i,S}$. Thus, to estimate the $\TTE$, it is natural to consider an approach employing good estimators for each $\bc_i^\beta$, which we can then extend by linearity to a good estimator for the $\TTE$. We restrict our attention to estimating $\bc_i^\beta$ for an arbitrary $i \in [n]$. \cyedit{Note that $\bc_i^\beta$ only affects the outcome $Y_i(\bz)$ of unit $i$ and not the outcomes of any other units, thus we can focus on the relationship $Y_i(\bz) = \big\langle \tbz_i^\beta, \bc_i^\beta \big\rangle$. By pre-multiplying both sizes by $\tbz_i^\beta$ and taking an expectation with respect to the randomized design, it follows that
\begin{equation} \label{eq:normal}
    \E \big[ Y_i(\bz) \cdot \tbz_i^\beta \big] = \E \big[ \tbz_i^\beta {(\tbz_i^\beta)}^\intercal \big] \bc_i^\beta. 
\end{equation}}

The matrix $\E \big[ \tbz_i^\beta {(\tbz_i^\beta)}^\intercal \big]$ has entries that are functions only of the experimental design. As such, we refer to it as the \textit{design matrix}.  Note that the design matrix is not data-dependent and can be computed by the practitioner given the exact distribution of $\bz$, or approximated to arbitrarily \mereplace{prevision}{precision} given sampling access to the distribution of $\bz$ (see Section~\ref{subsec:monte_carlo} for details). From here, we consider two possibilities.

\textbf{Case 1:} First, suppose that the design matrix is invertible, as was shown for unit randomized designs by \citet{cortez2023exploiting}. Then we can rearrange \eqref{eq:normal} to solve for $\bc_i^\beta$:
\[
    \bc_i^\beta = \E \big[ \tbz_i^\beta {(\tbz_i^\beta)}^\intercal \big]^{-1} \E \big[  Y_i(\bz) \cdot \tbz_i^\beta \big].
\]
By approximating this latter expectation by its experimental realization, we obtain the estimator,
\[
    \widehat{\bc}_i^\beta := Y_i(\bz) \, \E \big[ \tbz_i^\beta {(\tbz_i^\beta)}^\intercal \big]^{-1} \tbz_i^\beta,
\]
which is unbiased by linearity. The core ideas in this case --- deriving a system of unbiasedness equations whose solution identifies an estimator --- are akin to those of \citet{harshaw2022design}, and we expect that our estimator coincides with the Riesz estimator whenever the design matrix is invertible, just as it does for Bernoulli designs \citep{cortez2023exploiting}.

\textbf{Case 2:} Alternatively, suppose that the design matrix is not invertible. This can be the case in cluster-randomized designs, as the rank of the design matrix is limited to the number of subsets including units from at most $\beta$ clusters, which can be significantly smaller than $|\cS_i^{\beta}|$. In this case, the system of equations given by \eqref{eq:normal} is underdetermined, and we can use the Moore-Penrose pseudoinverse to obtain the minimum $\ell^2$-norm solution,
\[
    \tilde{\bc}_i^\beta = \E \big[ \tbz_i^\beta {(\tbz_i^\beta)}^\intercal \big]^{\dagger} \, \E \big[ Y_i(\bz) \tbz_i^\beta \big].
\]

Using the same reasoning as in Case 1, we obtain an estimator of $\bc_i^\beta$ by replacing the second expectation by the single observation:
\[
    \widehat{\bc}_i^\beta := Y_i(\bz) \, \E \big[ \tbz_i^\beta {(\tbz_i^\beta)}^\intercal \big]^{\dagger} \, \tbz_i^\beta.
    \label{eq:c_hat}
\]
This estimator subsumes the estimator from Case 1, so we utilize this form throughout the rest of the paper. Now, substituting the estimated $\widehat{\bc}_i^\beta$ for each $i$ into \eqref{eq:tte_c}, we obtain an estimator for the total treatment effect:

\begin{equation} \label{eq:tte_hat_bstar}
    \widehat{\TTE}_{\beta} := \frac{1}{n} \sum_{i=1}^{n} \sum_{\substack{S \in \cS_i^{\beta} \\ S \ne \varnothing}} (\widehat{\bc}_i^\beta)_S 
    \quad = \frac{1}{n} \sum_{i=1}^{n} \big\langle \theta_i^\beta, \widehat{\bc}_i^\beta \big\rangle
    \quad = \frac{1}{n} \sum_{i=1}^{n} Y_i(\bz) \Big\langle \E \big[ \tbz_i^\beta {(\tbz_i^\beta)}^\intercal \big]^{\dagger} \theta_i^\beta, \tbz_i^\beta \Big\rangle,
\end{equation}

where $\theta_i^\beta$ is the vector with $(\theta_i^\beta)_\varnothing = 0$ and with all other entries $(\theta_i^\beta)_S = 1$. We refer to \eqref{eq:tte_hat_bstar} as the \textit{pseudoinverse estimator} for the $\TTE$. \meedit{We borrow this nomenclature from \citet{swaminathan2017off}, who use a similar reasoning to develop an estimator in the setting of off-policy evaluation. In fact, under Bernoulli unit-randomized designs, our estimator coincides with theirs for a particular choice of parameters, as noted in Section 4.3 of \citep{cortez2023exploiting}.}

We remark that the form of this estimator was identified by \citet{cortez2023exploiting} for the invertible case, but not evaluated or analyzed for any design other than unit-level Bernoulli randomization. In contrast, we handle both the invertible and non-invertible cases and derive new results for more complex designs in both cases.

\meedit{Note that the construction of $\widehat{\TTE}_\beta$ depends on the value of $\beta$ selected by the practitioner, and our motivation implicitly assumes its correct specification. In Sections~\ref{sec:bias} and \ref{sec:variance}, we will contend with the possibility of misspecification, where the choice of $\beta$ differs from the ground-truth potential outcomes model order $\beta^*$. In this case, the observations $Y_i(\bz) = \langle \tbz_i^{\beta^*}, \bc_i^{\beta^*} \rangle$ depends on $\beta^*$, whereas the weights on these observations in $\widehat{\TTE}_\beta$ depend on the value $\beta$ used to construct the estimator. This results in the following ``mixed superscripts'':}
\meedit{
\begin{equation} \label{eq:tte_hat}
    \widehat{\TTE}_{\beta} \quad= \frac{1}{n} \sum_{i=1}^{n} \langle \tbz_i^{\beta^*}, \bc_i^{\beta^*} \rangle \Big\langle \E \big[ \tbz_i^{\beta} (\tbz_i^{\beta})^\intercal \big]^{\dagger} \theta_i, \tbz_i^{\beta} \Big\rangle.
\end{equation}
For notational tractability, we will often suppress the dependence on $\beta$ in the estimator weights.}

\subsection{Instantiating the Pseudoinverse Estimator for Bernoulli GCR designs} \label{subsec:pi_gcr}

Recall that a clustering $\cC$ partitions the individuals $[n]$ into $m$ disjoint clusters $\{ C_1, \hdots, C_m \}$. We can view $\cC$ as a map $[n] \to [m]$, where $\cC(j)$ is the cluster containing node $j$, and we extend this notation to define $\cC(S) := \big\{ \cC(j) \colon j \in S \big\}$. In particular, $\cC(\cN_i)$ denotes the set of clusters that include at least one neighbor of $i$. Throughout this section, we will use standard uppercase letters (e.g., $S, T$) to denote sets of \textit{individuals} and calligraphic letters (e.g., $\cU, \cV$) to denote sets of \textit{clusters}.

GCR designs assign a treatment decision $w_{C} \in \{0,1\}$ to cluster $C \in \cC$. Then, the treatment assignments of an individual $j \in [n]$ are inherited from their cluster, so that $z_j = w_{\cC(j)}$. \meedit{To analyze the pseudoinverse estimator $\widehat{\TTE}_\beta$ under GCR designs, we must define cluster-analogs of our previous notation. We'll let $\cC_i^\beta$ denote the set of all subsets of \textit{clusters} in $\cC(\cN_i)$ with size at most $\beta$. Additionally, we'll let $\tbw_i^\beta$ (abbreviated $\tbw_i$) be the analog of $\tbz_i^\beta$ for the cluster treatment decisions. That is, $\tbw_i$ is a $\{0,1\}$-vector indexed by $\cC_i^\beta$ with $\big[\tbw_i\big]_{\cU} = \prod_{C \in \cU} w_C$. From the definition of a GCR design, each entry of $\tbz_i$ is the same as some entry of $\tbw_i$. Given $S \in \cS_i^\beta$
\begin{equation} \label{eq:ztilde_to_wtilde}
    \big(\tbz_i\big)_S = \prod_{j \in S} z_j = \prod_{j \in S} w_{\cC(j)} = \prod_{C \in \cC(S)} w_C = \big( \tbw_i \big)_{\cC(S)}.
\end{equation}
We can define a (tall and narrow) $|\cS_i^{\beta}| \times |\cC_i^{\beta}|$ transformation matrix $D_i^\beta$ (abbreviated $D_i$) with $(D_i)_{S, \cU} = \Ind(\cC(S) = \cU)$ for $S \in \cS_i^{\beta}$ and $\cU \in \cC_i^{\beta}$ that encodes cluster membership. From (\ref{eq:ztilde_to_wtilde}), we see that $D_i$ has a block diagonal structure,
\[
    D_i = \begin{bmatrix}
        \mathbf{1}_{M_i^\beta(\cU_1) \times 1} & & 0 \\
        & \ddots & \\
        0 & & \mathbf{1}_{M_i^\beta(\cU_\ell) \times 1}
    \end{bmatrix}
    \in \mathbb{R}^{|\cS_i^\beta| \times |\cC_i^\beta|}.
\]
where $\mathbf{1}_{a \times b}$ refers to the $a \times b$ submatrix of all ones. The height of each block, corresponding to some cluster subset $\cU \in \cC_i^\beta$, is denoted by $M_i^\beta(\cU)$ and is the number of subsets $S \in \cS_i^\beta$ with $\cC(S) = \cU$.} One can calculate $M_i^\beta(\cU)$ for $\cU = \{ C_{j_1}, \hdots, C_{j_\ell} \}$ with the formula
\[
    M_i^\beta(\cU) = \sum_{\substack{ a_1\,,\,\hdots\,,\,a_\ell \\ a_1 \,+\, \hdots \,+\, a_\ell \;\leq\; \beta \\ a_1 \,,\, \hdots \,,\, a_\ell \;\geq\; 1}} 
    \hspace{4pt} \prod_{k=1}^{\ell} \binom{|C_{j_k} \cap \cN_i|}{a_j}.
\]

\cyedit{By construction, $\tbz_i = D_i \tbw_i$, resulting in the design matrix $\E[\tbz_i \tbz_i^\intercal]$ having a block structure due to the block diagonal matrix $D_i$. The below examples illustrate the block structure of the design matrix as it relates to the design matrix of a smaller instance of unit-randomized design.}

\begin{example} \label{ex:gcr_design_matrix_1}
Consider the design matrix for individual $I$ with neighborhood $\cN_I = \{ I, J \}$ in an interference graph with $\beta = 2$ under a Bern($p$) \textit{unit-randomized} design.

\begin{center}
\begin{minipage}{0.3\textwidth}
    \begin{center}
    \begin{tikzpicture}[-latex,>=latex]
        \node[circle,inner sep=0pt,minimum size=6pt,fill] (i) at (1.5,0) {};
        \node at (1.5,-0.4) {$I$};
    
        \node[circle,inner sep=0pt,minimum size=6pt,fill] (j) at (0,0) {};
        \node at (0,-0.4) {$J$};
    
        \draw (i) edge[loop above] (i);
        \draw (j) edge (i);
    \end{tikzpicture}
    \end{center}
\end{minipage}
\begin{minipage}{0.5\textwidth}
    \begin{center}
    \begin{tikzpicture}
    \node at (-1.25,1.5) {$\displaystyle \E \big[ \tbz_I \tbz_I^\intercal \big] =$};
    \node at (0,2.25) {\scriptsize $\varnothing$};
    \node at (0,1.5) {\scriptsize $\{I\}$};
    \node at (0,0.75) {\scriptsize $\{J\}$};
    \node at (0,0) {\scriptsize $\{I,J\}$};

    \node at (1,3) {\scriptsize $\varnothing$};
    \node at (1.75,3) {\scriptsize $\{I\}$};
    \node at (2.5,3) {\scriptsize $\{J\}$};
    \node at (3.25,3) {\scriptsize $\{I,J\}$};

    \draw[thick] (0.75,-0.5) -- (0.5,-0.5) -- (0.5,2.75) -- (0.75,2.75);

    \draw[thick] (3.5,-0.5) -- (3.75,-0.5) -- (3.75,2.75) -- (3.5,2.75);

    \node at (1,2.25) {$1$};
    \node at (1,1.5) {$p$};
    \node at (1,0.75) {$p$};
    \node at (1,0) {$p^2$};

    \node at (1.75,2.25) {$p$};
    \node at (1.75,1.5) {$p$};
    \node at (1.75,0.75) {$p^2$};
    \node at (1.75,0) {$p^2$};

    \node at (2.5,2.25) {$p$};
    \node at (2.5,1.5) {$p^2$};
    \node at (2.5,0.75) {$p$};
    \node at (2.5,0) {$p^2$};

    \node at (3.25,2.25) {$p^2$};
    \node at (3.25,1.5) {$p^2$};
    \node at (3.25,0.75) {$p^2$};
    \node at (3.25,0) {$p^2$};
    \end{tikzpicture}
    \end{center}
\end{minipage}

\vspace{20pt}
\end{center}

Next let's consider a neighborhood $\cN_i = \{i,i',j\}$ in an interference graph with $\beta = 2$. Suppose that the graph has been clustered so that $\cC(i) = \cC(i') \ne \cC(j)$. \cyedit{Letting $\cC(i) = I$ and $\cC(j) = J$, we see that the connectivity between clusters mirrors the connections between $I$ and $J$ above.}

    \vspace{10pt}
    \begin{center}
    \begin{tikzpicture}[-latex,>=latex]
        \node[circle,inner sep=0pt,minimum size=6pt,fill] (i) at (1,0) {};
        \node at (1.25,0) {$i$};
    
        \node[circle,inner sep=0pt,minimum size=6pt,fill] (ip) at (0,1.5) {};
        \node at (-0.25,1.5) {$i'$};
    
        \node[circle,inner sep=0pt,minimum size=6pt,fill] (j) at (2,1.5) {};
        \node at (2.25,1.5) {$j$};
    
        \draw (i) edge[loop below] (i);
        \draw (ip) edge (i);
        \draw (j) edge (i);
    
        \draw[rotate=123,dashed] (0.3,-0.8) ellipse (48pt and 20pt);
        \node at (-1,2.2) {$\cC(i)$};
    
        \draw[dashed] (2,1.5) circle (16pt);
        \node at (2.8,2.2) {$\cC(j)$};
    \end{tikzpicture}
    \end{center}

    The design matrix for a $\GCR(\cC,p)$ design for this instance is shown below. \cyedit{The dashed lines show the block structure, where the values of the matrix align with the unit-randomized instance above.}
    
    \vspace{10pt}
    \begin{center}
    \begin{tikzpicture}
        \node at (-1.5,2.625) {$\E \big[ \tbz_i \tbz_i^\intercal \big] =$};
    
        \node at (0,4.5) {\scriptsize $\varnothing$};
        \node at (0,3.75) {\scriptsize $\{i\}$};
        \node at (0,3) {\scriptsize $\{i'\}$};
        \node at (0,2.25) {\scriptsize $\{i,i'\}$};
        \node at (0,1.5) {\scriptsize $\{j\}$};
        \node at (0,0.75) {\scriptsize $\{i,j\}$};
        \node at (0,0) {\scriptsize $\{i',j\}$};
    
        \node at (1,5.25) {\scriptsize $\varnothing$};
        \node at (2,5.25) {\scriptsize $\{i\}$};
        \node at (3,5.25) {\scriptsize $\{i'\}$};
        \node at (4,5.25) {\scriptsize $\{i,i'\}$};
        \node at (5,5.25) {\scriptsize $\{j\}$};
        \node at (6,5.25) {\scriptsize $\{i,j\}$};
        \node at (7,5.25) {\scriptsize $\{i',j\}$};
    
        \draw[thick] (0.75,-0.5) -- (0.5,-0.5) -- (0.5,5) -- (0.75,5);
    
        \draw[thick] (7.5,-0.5) -- (7.75,-0.5) -- (7.75,5) -- (7.5,5);
    
        \node at (1,4.5) {$1$};
        \node at (1,3.75) {$p$};
        \node at (1,3) {$p$};
        \node at (1,2.25) {$p$};
        \node at (1,1.5) {$p$};
        \node at (1,0.75) {$p^2$};
        \node at (1,0) {$p^2$};
    
        \node at (2,4.5) {$p$};
        \node at (2,3.75) {$p$};
        \node at (2,3) {$p$};
        \node at (2,2.25) {$p$};
        \node at (2,1.5) {$p^2$};
        \node at (2,0.75) {$p^2$};
        \node at (2,0) {$p^2$};
    
        \node at (3,4.5) {$p$};
        \node at (3,3.75) {$p$};
        \node at (3,3) {$p$};
        \node at (3,2.25) {$p$};
        \node at (3,1.5) {$p^2$};
        \node at (3,0.75) {$p^2$};
        \node at (3,0) {$p^2$};
    
        \node at (4,4.5) {$p$};
        \node at (4,3.75) {$p$};
        \node at (4,3) {$p$};
        \node at (4,2.25) {$p$};
        \node at (4,1.5) {$p^2$};
        \node at (4,0.75) {$p^2$};
        \node at (4,0) {$p^2$};
    
        \node at (5,4.5) {$p$};
        \node at (5,3.75) {$p^2$};
        \node at (5,3) {$p^2$};
        \node at (5,2.25) {$p^2$};
        \node at (5,1.5) {$p$};
        \node at (5,0.75) {$p^2$};
        \node at (5,0) {$p^2$};
    
        \node at (6,4.5) {$p^2$};
        \node at (6,3.75) {$p^2$};
        \node at (6,3) {$p^2$};
        \node at (6,2.25) {$p^2$};
        \node at (6,1.5) {$p^2$};
        \node at (6,0.75) {$p^2$};
        \node at (6,0) {$p^2$};
    
        \node at (7,4.5) {$p^2$};
        \node at (7,3.75) {$p^2$};
        \node at (7,3) {$p^2$};
        \node at (7,2.25) {$p^2$};
        \node at (7,1.5) {$p^2$};
        \node at (7,0.75) {$p^2$};
        \node at (7,0) {$p^2$};
    
        \draw[dashed] (0.5,1.125) -- (7.75,1.125);
        \draw[dashed] (0.5,1.875) -- (7.75,1.875);
        \draw[dashed] (0.5,4.125) -- (7.75,4.125);
        \draw[dashed] (1.5,-0.5) -- (1.5,5);
        \draw[dashed] (4.5,-0.5) -- (4.5,5);
        \draw[dashed] (5.5,-0.5) -- (5.5,5);

        \draw[decorate,decoration={brace,amplitude=5pt,mirror}] (8,4.14) -- (8,5) node[midway, xshift=36pt] {$M_i^\beta(\varnothing) = 1$};
        \draw[decorate,decoration={brace,amplitude=5pt,mirror}] (8,1.89) -- (8,4.11) node[midway, xshift=52pt] {$M_i^\beta\Big(\big\{\cC(i)\big\}\Big) = 3$};
        \draw[decorate,decoration={brace,amplitude=5pt,mirror}] (8,1.14) -- (8,1.86) node[midway, xshift=52pt] {$M_i^\beta\Big(\big\{\cC(j)\big\}\Big) = 1$};
        \draw[decorate,decoration={brace,amplitude=5pt,mirror}] (8,-0.5) -- (8,1.11) node[midway, xshift=68pt] {$M_i^\beta\Big(\big\{\cC(i),\cC(j)\big\}\Big) = 2$};
    \end{tikzpicture}
    \end{center}
\end{example}

\vspace{10pt}

\cyedit{Given the mapping $\tbz$ and $\tbw$, we can analytically invert the design matrix for Bernoulli GCR designs to show the explicit form of the estimator as it relates to the clustering.}

\begin{lemma} \label{lem:gcr_pi_design}
    Fix an individual $i \in [n]$ and a clustering $\cC$ of $[n]$ with $|\cC| = m$. If the design of $\bz$ assigns treatment at the cluster level with cluster treatment indicators $\bw \in \{0,1\}^m$, for each $S,T \in \cS_i^\beta$, the pseudoinverse of the design matrix has entries
    \[
        \Big[ \E \big[ \tbz_i \tbz_i^\intercal \big]^\dagger \Big]_{S,T} = \frac{1}{M_i^\beta(\cC(S)) \cdot M_i^\beta(\cC(T))} \cdot \Big[ \E \big[ \tbw_i \tbw_i^\intercal \big]^{\dagger} \Big]_{\cC(S),\cC(T)}.
    \]
\end{lemma}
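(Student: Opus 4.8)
The plan is to leverage the factorization $\tbz_i = D_i \tbw_i$ recorded in \eqref{eq:ztilde_to_wtilde}. Since the transformation matrix $D_i$ depends only on the clustering and cluster membership---not on the random assignment $\bw$---it is deterministic, so taking expectations commutes through it to give the congruence
\[
    \E\big[\tbz_i \tbz_i^\intercal\big] = D_i \, \E\big[\tbw_i \tbw_i^\intercal\big] \, D_i^\intercal.
\]
Writing $A := \E[\tbw_i \tbw_i^\intercal]$ for the cluster-level design matrix, the entire lemma reduces to computing $\big(D_i A D_i^\intercal\big)^\dagger$ and reading off its entries, where the block-diagonal, all-ones structure of $D_i$ should make the bookkeeping explicit.

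The first step is to record two structural facts about $D_i$. Because its columns (indexed by cluster-subsets $\cU \in \cC_i^\beta$) have pairwise disjoint supports---each $S$ satisfies $\cC(S) = \cU$ for exactly one $\cU$---and each column is a nonzero all-ones vector, $D_i$ has full column rank and $D_i^\intercal D_i = \Lambda$, the diagonal matrix with $\Lambda_{\cU,\cU} = M_i^\beta(\cU) \geq 1$. These block heights are exactly the quantities appearing in the denominator of the claimed formula, and $\Lambda$ is invertible. The main step is then to verify that
\[
    X := D_i \, \Lambda^{-1} A^\dagger \Lambda^{-1} D_i^\intercal
\]
is the Moore--Penrose pseudoinverse of $M := D_i A D_i^\intercal$, since reading off $X_{S,T}$ and using $\big(\Lambda^{-1}\big)_{\cU,\cU} = 1/M_i^\beta(\cU)$ together with the indicator structure of $D_i$ produces precisely the stated entrywise expression. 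I would check the four defining identities directly: the two product conditions $MXM = M$ and $XMX = X$ fall out immediately after substituting $D_i^\intercal D_i = \Lambda$ (so the interior $\Lambda^{-1}\Lambda$ factors cancel) and invoking $A A^\dagger A = A$ and $A^\dagger A A^\dagger = A^\dagger$.

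The hard part will be the two symmetry conditions $(MX)^\intercal = MX$ and $(XM)^\intercal = XM$. Computing $MX = D_i \big(A A^\dagger \Lambda^{-1}\big) D_i^\intercal$ and using full column rank of $D_i$ to strip the outer factors (if $D_i B_1 D_i^\intercal = D_i B_2 D_i^\intercal$ then $B_1 = B_2$, since $D_i^\dagger D_i = I$), symmetry of $MX$ becomes equivalent to $A A^\dagger \Lambda^{-1} = \Lambda^{-1} A A^\dagger$, i.e.\ to the orthogonal projector $A A^\dagger$ commuting with the diagonal matrix $\Lambda$. This is the crux of the argument, and it is where the Bernoulli GCR structure enters: as the examples illustrate, $A$ is exactly the $\beta$-order design matrix of a unit-randomized Bernoulli instance on the clusters of $\cC(\cN_i)$, which is invertible by the argument of \citet{cortez2023exploiting}. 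Hence $A A^\dagger = I$ commutes with $\Lambda$ trivially, both symmetry conditions hold, and $A^\dagger$ may equivalently be read as $A^{-1}$. (For a general cluster-level design with singular $A$, the same computation shows the identity persists exactly when $A A^\dagger$ commutes with $\Lambda$, a condition I would flag explicitly.) An alternative, more conceptual route to the same conclusion is to orthonormalize via $D_i = Q \Lambda^{1/2}$ with $Q = D_i \Lambda^{-1/2}$ satisfying $Q^\intercal Q = I$, reducing the claim to the identity $\big(Q B Q^\intercal\big)^\dagger = Q B^\dagger Q^\intercal$ (valid for any $Q$ with orthonormal columns) applied to $B = \Lambda^{1/2} A \Lambda^{1/2}$; invertibility of $A$ then gives $B^{-1} = \Lambda^{-1/2} A^{-1} \Lambda^{-1/2}$ and the formula follows.
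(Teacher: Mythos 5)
Your proof is correct, and it reaches the paper's formula by a genuinely different verification route. Both arguments open identically, with the congruence $\E\big[\tbz_i\tbz_i^\intercal\big] = D_i\,\E\big[\tbw_i\tbw_i^\intercal\big]\,D_i^\intercal$ and the observation $D_i^\intercal D_i = \Lambda := \mathrm{diag}\big(M_i^\beta(\cU)\big)$; but where the paper invokes the reversal law $(AB)^\dagger = B^\dagger A^\dagger$ for the rank decomposition $D_i \cdot \big(\E[\tbw_i\tbw_i^\intercal] D_i^\intercal\big)$ and the explicit pseudoinverses $D_i^\dagger = \Lambda^{-1}D_i^\intercal$, $(D_i^\intercal)^\dagger = D_i\Lambda^{-1}$, you verify the four Penrose axioms directly for the candidate $D_i\Lambda^{-1}A^\dagger\Lambda^{-1}D_i^\intercal$ (with the orthonormalization $Q = D_i\Lambda^{-1/2}$ and $(QBQ^\intercal)^\dagger = QB^\dagger Q^\intercal$ as a second, equivalent route). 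Your approach is more elementary and buys something concrete: it isolates exactly where the argument depends on the design. The two product axioms hold for any symmetric $A$, while the symmetry axioms hold if and only if the projector $AA^\dagger$ commutes with $\Lambda$ --- automatic when $A = \E[\tbw_i\tbw_i^\intercal]$ is invertible, as it is for Bernoulli GCR by Lemma 1 of \citet{cortez2023exploiting}. This caveat is substantive rather than pedantic. The lemma as stated permits an arbitrary cluster-level design, and the paper's own justification ("as $D_i$ has full column rank, this factorization is a rank decomposition") is incomplete on the same point: a rank decomposition also requires $\E[\tbw_i\tbw_i^\intercal]D_i^\intercal$ to have full row rank, i.e., invertibility of the cluster design matrix. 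Indeed, for completely randomized GCR designs the cluster design matrix can be singular (Lemma~\ref{lem:crd_det}, when $|\cC(\cN_i)| = m$), and then the entrywise formula can genuinely fail whenever the null vector of $A$ is not an eigenvector of $\Lambda$ --- consistent with Appendix~\ref{sec:crd} working with $\tbw_i$ directly rather than through this lemma. So your explicit commutation condition both recovers the paper's result in the Bernoulli case and correctly delimits the statement's scope for general cluster-level designs.
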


\proof{\it Proof.}
    Recall that $\cC_i^\beta$ refers to all subsets of clusters $\cU \subseteq \cC(\cN_i)$ with $|\cU| \leq \beta$. As $\tbz_i = D_i \tbw_i$, it follows that 
    $\E \big[ \tbz_i \tbz_i^\intercal \big] = D_i^\beta \E \big[ \tbw_i \tbw_i^\intercal \big] D_i^\intercal$. This means that the design matrix $\E \big[ \tbz_i \tbz_i^\intercal \big]$ has a block structure with blocks corresponding to pairs of subsets $\cU, \cV \in \cC_i^\beta$. The $(\cU,\cV)$ blocks have dimension $M_i^\beta(\cU) \times M_i^\beta(\cV)$ and entry $\Big[ \E \big[ \tbw_i \tbw_i^\intercal \big] \Big]_{\cS,\cT}$. 
    As the matrix $D_i$ has full (column) rank, this factorization $\E \big[ \tbz_i \tbz_i^\intercal \big] = D_i \cdot \E \big[ \tbw_i \tbw_i^\intercal \big] D_i^\intercal$ is a rank decomposition. Using the fact that $(AB)^{\dagger}=B^{\dagger}A^{\dagger}$ when $AB$ is a rank decomposition, it follows that
\begin{align} \label{eq:inverse_design_GCR}
        \E \big[ \tbz_i \tbz_i^\intercal \big]^{\dagger}
        = \big(D_i^\intercal\big)^\dagger \E \big[ \tbw_i \tbw_i^\intercal \big]^\dagger D_i^\dagger 
        = D_i \Big( D_i^\intercal D_i \Big)^{-1} \E \big[ \tbw_i \tbw_i^\intercal \big]^\dagger \Big( D_i^\intercal D_i \Big)^{-1} D_i^\intercal.
\end{align}
    Noting that $D_i^\intercal D_i = \textrm{diag} \Big( M_i^\beta(\cS_1), \hdots, M_i^\beta(\cS_\ell) \Big)$, we have
    \begin{align*}
        \Big[ \E \big[ \tbz_i \tbz_i^\intercal \big]^{\dagger} \Big]_{S,T} = \tfrac{1}{M_i^\beta(\cC(S))} \cdot \Big[ \E \big[ \tbw_i \tbw_i^\intercal \big]^{\dagger} \Big]_{\cC(S),\cC(T)} \cdot \tfrac{1}{M_i^\beta(\cC(T))}.
    \end{align*}
\endproof

This lemma tells us that the entries within each block of this cluster design matrix are normalized entries of the unit-randomized design matrix corresponding to $\bw$. The normalization factors correspond to the sizes of each matrix block. \cyedit{As a result, we can recast the analysis of cluster-randomized designs as the analysis of a unit-randomized design in which clusters play the roles of units.}

\meedit{Since $\E \big[ \tbw_i \tbw_i^\intercal \big]$ has the structure of a $\Bern(p)$ design matrix, we use the explicit formula for these entries derived by \citet{cortez2023exploiting} (their Lemma 1), where the clusters are viewed as units that are independently treated with probability $p$.}
\cyedit{This relationship between the Bernoulli unit randomized design and the Bernoulli graph clustered randomized design enables us to easily obtain an explicit form for the pseudoinverse estimator under GCR designs, as stated below.}

\cyedit{
\begin{lemma}
\label{lem:pi_gcr_est_form}
    Fix a clustering $\cC$ and suppose that $\bz \sim \GCR(\cC,p)$. The potential outcomes model can be equivalently expressed as a function of $\bw$ instead of $\bz$, for each individual $i \in [n]$,
\begin{align*}
Y_i(\bz) = \langle \tbw_i, \bx_i \rangle =: Y_i(\bw), ~\text{where}~ \bx_i = D_i^\intercal \bc_i.
\end{align*}
The pseudoinverse estimator for the $\TTE$ is
    \begin{align}
        \label{eq:gcr_bern_tte_hat}
        \widehat{\TTE}_{\beta} 
        &= \frac{1}{n} \sum_{i=1}^{n} Y_i(\bw) \Big\langle \E \big[ \tbw_i \tbw_i^\intercal \big]^{\dagger} \psi_i, \tbw_i \Big\rangle, \\
        &= \frac{1}{n} \sum_{i=1}^{n} \; Y_i(\bw) \sum_{\cU \in \cC_i^\beta} \bigg( \prod_{C \in \cU} \frac{w_C - p}{p} - \prod_{C \in \cU} \frac{w_C - p}{p-1} \bigg), \nonumber
    \end{align}
where $\psi_i = [ \, 0 \; 1 \; \cdots \; 1 \,]^\intercal$ denotes the length-$|\cC_i^\beta|$ column vector with $0$ in its first entry (corresponding to $\cU = \varnothing$) and $1$ in all other entries.
\end{lemma}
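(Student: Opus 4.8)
The plan is to reduce every quantity to the cluster level through the factorization $\tbz_i = D_i \tbw_i$ from \eqref{eq:ztilde_to_wtilde}, and then to invoke the explicit Bernoulli formula of \citet{cortez2023exploiting} with clusters playing the role of units. For the reparametrization claim, I would simply write $Y_i(\bz) = \langle \tbz_i, \bc_i \rangle = \langle D_i \tbw_i, \bc_i \rangle = \langle \tbw_i, D_i^\intercal \bc_i \rangle$ and set $\bx_i := D_i^\intercal \bc_i$, so that $Y_i(\bz) = \langle \tbw_i, \bx_i \rangle =: Y_i(\bw)$. Since a GCR design makes $Y_i$ depend on $\bz$ only through the cluster decisions $\bw$, this is an exact identity, not an approximation.

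For the first line of \eqref{eq:gcr_bern_tte_hat}, I would start from the general estimator \eqref{eq:tte_hat_bstar}, $\widehat{\TTE}_\beta = \frac1n \sum_i Y_i(\bz) \langle \E[\tbz_i \tbz_i^\intercal]^\dagger \theta_i, \tbz_i \rangle$, and substitute both $\tbz_i = D_i \tbw_i$ and the rank-decomposition form of the pseudoinverse from \eqref{eq:inverse_design_GCR}. Writing the inner product as $\tbz_i^\intercal \E[\tbz_i \tbz_i^\intercal]^\dagger \theta_i$ and substituting, it becomes $\tbw_i^\intercal D_i^\intercal D_i (D_i^\intercal D_i)^{-1} \E[\tbw_i \tbw_i^\intercal]^\dagger (D_i^\intercal D_i)^{-1} D_i^\intercal \theta_i$, where the leading factor $D_i^\intercal D_i (D_i^\intercal D_i)^{-1}$ collapses to the identity. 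The crux is then to verify $(D_i^\intercal D_i)^{-1} D_i^\intercal \theta_i = \psi_i$: using $D_i^\intercal D_i = \mathrm{diag}\big(M_i^\beta(\cU)\big)_{\cU \in \cC_i^\beta}$, the $\cU$-entry of $D_i^\intercal \theta_i$ equals $\sum_{S : \cC(S) = \cU} (\theta_i)_S$, which is $0$ for $\cU = \varnothing$ (only $S = \varnothing$ maps there) and equals $M_i^\beta(\cU)$ otherwise (every such $S$ is nonempty, contributing a $1$). Dividing by the diagonal entry $M_i^\beta(\cU)$ yields exactly the indicator vector $\psi_i$. Combined with $Y_i(\bz) = Y_i(\bw)$, this establishes $\langle \E[\tbz_i \tbz_i^\intercal]^\dagger \theta_i, \tbz_i \rangle = \langle \E[\tbw_i \tbw_i^\intercal]^\dagger \psi_i, \tbw_i \rangle$ and hence the first equality.

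For the second equality, I would observe that $\tbw_i$, $\psi_i$, and $\E[\tbw_i \tbw_i^\intercal]$ are precisely the pseudoinverse-estimator ingredients for a $\Bern(p)$ design in which the clusters of $\cC(\cN_i)$ act as the units. The weight $\langle \E[\tbw_i \tbw_i^\intercal]^\dagger \psi_i, \tbw_i \rangle$ is therefore the Bernoulli pseudoinverse weight, whose closed form \citet{cortez2023exploiting} derive in their Lemma 1; applying that formula with clusters in place of units gives the product expression $\sum_{\cU \in \cC_i^\beta} \big( \prod_{C \in \cU} \frac{w_C - p}{p} - \prod_{C \in \cU} \frac{w_C - p}{p-1} \big)$. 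I expect the main obstacle to be the bookkeeping in the pseudoinverse collapse, namely justifying the identity $(AB)^\dagger = B^\dagger A^\dagger$ (valid because $D_i$ has full column rank, as already used in Lemma~\ref{lem:gcr_pi_design}) and carefully carrying out the $\theta_i \mapsto \psi_i$ reduction; once these are in place, the explicit form follows immediately from the cited Bernoulli formula.
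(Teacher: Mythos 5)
Your proposal is correct and follows essentially the same route as the paper's proof: rewrite $Y_i(\bz)=\langle \tbw_i, D_i^\intercal \bc_i\rangle$, substitute $\tbz_i = D_i\tbw_i$ and the rank-decomposition pseudoinverse \eqref{eq:inverse_design_GCR} into the general estimator so the $D_i$ factors collapse, and then invoke the explicit Bernoulli (SNIPE) formula of \citet{cortez2023exploiting} with clusters playing the role of units. The only cosmetic difference is that you verify $(D_i^\intercal D_i)^{-1}D_i^\intercal\theta_i=\psi_i$ by direct computation, whereas the paper uses the equivalent identity $\theta_i = D_i\psi_i$.
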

}

\proof{\it Proof.}
\cyedit{
By plugging in $\tbz_i = D_i \tbw_i$ to the potential outcomes model and rearranging the inner product, it follows that $Y_i(\bz) = \langle \tbz_i, \bc_i \rangle = \langle D_i \tbw_i, \bc_i \rangle = \langle \tbw_i, D_i^\intercal \bc_i \rangle = \langle \tbw_i, \bx_i \rangle$.

Substituting $\tbz_i^{\beta} = D_i \tbw_i$, $\theta_i = D_i \psi_i$, and \eqref{eq:inverse_design_GCR} into \eqref{eq:tte_hat}, it follows that
\begin{align*}
    \widehat{\TTE}_{\beta} 
    &= \frac{1}{n} \sum_{i=1}^{n} Y_i(\bw) \Big\langle D_i \Big( D_i^\intercal D_i \Big)^{-1} \E \big[ \tbw_i (\tbw_i)^\intercal \big]^\dagger \Big( D_i^\intercal D_i \Big)^{-1} D_i^\intercal D_i \psi_i, D_i \tbw_i \Big\rangle \\
    &= \frac{1}{n} \sum_{i=1}^{n} Y_i(\bw) \Big\langle \E \big[ \tbw_i \tbw_i^\intercal \big]^\dagger \psi_i, \tbw_i \Big\rangle.
\end{align*}
Expressed in this way, we note that $\Big\langle \E \big[ \tbw_i \tbw_i^\intercal \big]^\dagger \psi_i, \tbw_i \Big\rangle$ only depends on $\bw$, which follows a Bernoulli unit randomized design over $m$ units, where each cluster is treated as a unit. As a result, the exact expression of the estimator is given by the SNIPE estimator as stated in equation (4.4) of \citet{cortez2023exploiting}, but treating clusters as units, resulting in 
\begin{equation}
    \widehat{\TTE}_\beta = \frac{1}{n} \sum_{i=1}^{n} Y_i(\bw) \sum_{\cU \in \cC_i^\beta} \Big( \prod_{C \in \cU} \frac{w_C-p}{p} - \prod_{C \in \cU} \frac{w_C-p}{p-1} \Big).
\end{equation}
}
\endproof

\subsection{Relationship to the Horvitz-Thompson Estimator}

\cyedit{The pseudoinverse estimator for $\beta \geq \dmax$ is equivalent to the Horvitz-Thompson estimator \citep{horvitz1952generalization} defined with respect to the neighborhood exposure mapping}, which is an inverse-probability-weighted estimator for the total treatment effect given by 
\begin{equation}
	\label{eq:tte_ht}
	\widehat{\TTE}_{\textrm{HT}} = \frac{1}{n} \sum_{i=1}^{n} Y_i(\bz) \bigg[ \frac{\Ind\big( \bigcap_{j \in \cN_i} z_j = 1 \big)}{\Pr\big( \bigcap_{j \in \cN_i} z_j = 1 \big)} - \frac{\Ind\big( \bigcap_{j \in \cN_i} z_j = 0 \big)}{\Pr\big( \bigcap_{j \in \cN_i} z_j = 0 \big)} \bigg].
\end{equation}
\cyedit{This equivalence arises from the fact that when $\beta \geq \dmax$, the low-order interactions assumption does not effectively impose any further constraints beyond neighborhood interference, since restricting the interference to the neighborhood already upper bounds the degree of interactions by at most $\dmax$. As a result, both estimators are derived from satisfying unbiasedness conditions from the same model.}

The Horvitz-Thompson estimator is unbiased whenever Assumption~\ref{ass:neighborhood_interference} (neighborhood interference) is satisfied, although it may have prohibitively large variance when the exposure probabilities $\Pr\big(\bigcap_{j\in \cN_i} z_j=1\big)$ are small \citep{ugander2013}.

\section{Bias} \label{sec:bias}

In this section, we present bounds on the bias of the pseudoinverse estimator, $\widehat{\TTE}_{\cyedit{\beta}}$, as a function of the experimental design and use these bounds to identify conditions under which it is unbiased. \meedit{Within our analysis, we will see that selecting a value $\beta$ less than the ground-truth model parameter $\beta^*$ serves as a potential source of bias. To quantify this bias, we'll define $\tbz_i^{> \beta} \in \mathbb{R}^{|\cS_i^{\beta^*} \setminus \cS_i^\beta|}$ to collect the treatment indicators of all subsets $S$ with $\beta < |S| \leq \beta^*$, and similarly define $\bc_i^{> \beta}$ to collect the coefficients on these subsets. We'll also let $\bc_i^\beta$ collect the coefficients for subsets $S$ with $|S| \leq \beta$ and $\theta_i^\beta$ be the projection of the vector $\theta_i$ onto these coordinates. Using this new notation, we can state the following theorem.}

\meedit{
\begin{theorem} \label{thm:bias_bound_general}
    Suppose the potential outcomes $Y_i(\bz)$ satisfy Assumption~\ref{ass:low_deg} for some ground-truth parameter $\beta^*$. If $\widehat{\TTE}_{\beta}$ is the pseudoinverse estimator corresponding to a selected model order $\beta$, then its exact bias $\E[\widehat{\TTE}_{\beta}] - \TTE$ is,
    \begin{equation} \label{eq:tte_bias_exact}
        \frac{1}{n}\sum_{i=1}^n \bigg( \Big\langle \bc_i^\beta, \Big( \E\big[\tbz_i^\beta \big(\tbz_i^\beta\big)^{\intercal}\big] \E\big[\tbz_i^\beta \big(\tbz_i^\beta\big)^{\intercal}\big]^\dagger - I \Big) \, \theta_i^\beta \Big\rangle 
        + \Big\langle \bc_i^{>\beta}, \E\big[\tbz_i^{>\beta} \big(\tbz_i^\beta\big)^{\intercal}\big] \E\big[\tbz_i^\beta \big(\tbz_i^\beta\big)^{\intercal}\big]^\dagger \theta_i^\beta - \mathbf{1} \Big\rangle \bigg),
    \end{equation}
    where $I$ denotes the identity matrix and $\mathbf{1}$ denotes the all-1s vector of the appropriate dimensions.
\end{theorem}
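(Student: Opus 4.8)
The plan is to compute $\E[\widehat{\TTE}_\beta]$ directly by linearity of expectation and subtract the exact value of $\TTE$, organizing everything into two disjoint groups of terms according to whether they multiply the low-order coefficients $\bc_i^\beta$ or the high-order coefficients $\bc_i^{>\beta}$. Write $M_i := \E\big[\tbz_i^\beta (\tbz_i^\beta)^\intercal\big]$ for the order-$\beta$ design matrix. The first step is to record the true outcome using the ground-truth order, $Y_i(\bz) = \langle \tbz_i^{\beta^*}, \bc_i^{\beta^*}\rangle$, and split the index set $\cS_i^{\beta^*}$ into the subsets of size at most $\beta$ and those of size strictly between $\beta$ and $\beta^*$. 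This yields the clean decomposition $Y_i(\bz) = \langle \tbz_i^\beta, \bc_i^\beta\rangle + \langle \tbz_i^{>\beta}, \bc_i^{>\beta}\rangle$, separating the part of the signal that lives in the estimator's order-$\beta$ basis from the part it cannot represent.

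Next I would substitute this decomposition into the $i$-th summand of $\widehat{\TTE}_\beta$, namely $Y_i(\bz)\big\langle M_i^\dagger \theta_i^\beta, \tbz_i^\beta\big\rangle$, and rewrite the weight as the scalar $(\tbz_i^\beta)^\intercal M_i^\dagger \theta_i^\beta$ (using only commutativity of the dot product, so no symmetry of $M_i^\dagger$ is needed). The low-order piece then becomes $(\bc_i^\beta)^\intercal \big(\tbz_i^\beta (\tbz_i^\beta)^\intercal\big) M_i^\dagger \theta_i^\beta$ and the high-order piece becomes $(\bc_i^{>\beta})^\intercal \big(\tbz_i^{>\beta} (\tbz_i^\beta)^\intercal\big) M_i^\dagger \theta_i^\beta$. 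Because the design-based framework makes $\bc_i$, $M_i^\dagger$, and $\theta_i^\beta$ deterministic, the only expectation to evaluate is over the random outer products, and linearity gives $\E[\widehat{\TTE}_\beta] = \tfrac{1}{n}\sum_i \big[ (\bc_i^\beta)^\intercal M_i M_i^\dagger \theta_i^\beta + (\bc_i^{>\beta})^\intercal \E[\tbz_i^{>\beta}(\tbz_i^\beta)^\intercal]\, M_i^\dagger \theta_i^\beta \big]$, so that exactly the design matrix $M_i$ and the cross-moment $\E[\tbz_i^{>\beta}(\tbz_i^\beta)^\intercal]$ appearing in \eqref{eq:tte_bias_exact} emerge.

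In parallel I would expand the estimand using $Y_i(\mathbf{1}) - Y_i(\mathbf{0}) = \sum_{\varnothing \ne S,\, |S|\le \beta^*} c_{i,S}$ and split the sum the same way, obtaining a low-order contribution $\langle \theta_i^\beta, \bc_i^\beta\rangle$ and a high-order contribution $\langle \mathbf{1}, \bc_i^{>\beta}\rangle$, where the weight on the high-order block is the all-ones vector because every subset of size exceeding $\beta$ is nonempty. Subtracting $\TTE$ from $\E[\widehat{\TTE}_\beta]$ and regrouping by $\bc_i^\beta$ versus $\bc_i^{>\beta}$ then produces precisely the two inner products of \eqref{eq:tte_bias_exact}: the first with the factor $M_i M_i^\dagger - I$ applied to $\theta_i^\beta$, and the second with $\E[\tbz_i^{>\beta}(\tbz_i^\beta)^\intercal] M_i^\dagger \theta_i^\beta - \mathbf{1}$.

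The computation is an exact identity with no inequalities or limiting arguments, so the only real subtlety—and the single place where the pseudoinverse rather than a genuine inverse matters—is that $M_i M_i^\dagger$ is the orthogonal projection onto the range of the design matrix and is \emph{not} the identity whenever $M_i$ is rank-deficient (as occurs under GCR designs). The main thing to get right is therefore to resist collapsing $M_i M_i^\dagger$ to $I$; preserving it as a projection is exactly what makes the first inner product a genuine, generally nonzero source of bias, conceptually distinct from the misspecification bias carried by the second term. Everything else is bookkeeping with linearity of expectation and the coordinate decomposition of $\tbz_i^{\beta^*}$ and $\theta_i^{\beta^*}$ into their low- and high-order blocks.
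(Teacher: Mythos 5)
Your proposal is correct and follows essentially the same route as the paper's proof: decompose $Y_i(\bz)=\langle \tbz_i^\beta,\bc_i^\beta\rangle+\langle \tbz_i^{>\beta},\bc_i^{>\beta}\rangle$, use linearity of expectation so that only the moments $\E\big[\tbz_i^\beta(\tbz_i^\beta)^\intercal\big]$ and $\E\big[\tbz_i^{>\beta}(\tbz_i^\beta)^\intercal\big]$ appear, expand the estimand the same way, and subtract, with the key point (which you correctly flag) that $\E\big[\tbz_i^\beta(\tbz_i^\beta)^\intercal\big]\E\big[\tbz_i^\beta(\tbz_i^\beta)^\intercal\big]^\dagger$ is only a projection onto the range of the design matrix, not the identity. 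The only difference from the paper is cosmetic: the paper organizes the computation per unit via $\widehat{\bc}_i$ before summing, while you work directly with the summands of $\widehat{\TTE}_\beta$.
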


The first term on the right-hand side of \eqref{eq:tte_bias_exact} measures the extent to which the design matrix fails to be invertible in the direction of $\theta_i$. In other words, it represents the inability of the chosen experimental design to capture the TTE estimand. The second term, which is non-zero only when $\beta < \beta^*$, measures bias due to misspecification of the model order $\beta^*$. When $\beta < \beta^*$, the pseudoinverse estimator fails to model some interference effects, contributing to the bias.

\proof{\it Proof.}
For each unit $i$, we have $Y_i(\bz) = \big\langle \bc_i^{\beta^*}, \tbz_i^{\beta^*}  \big\rangle = \big\langle \bc_i^{\beta}, \tbz_i^{\beta} \big\rangle + \big\langle \bc_i^{> \beta}, \tbz_i^{> \beta} \big\rangle$. The actual contribution of unit $i$ to the $\TTE$ is $\big\langle \bc_i, \theta_i \big\rangle = \big\langle \bc_i^\beta, \theta_i^\beta \big\rangle + \big\langle \bc_i^{> \beta}, \mathbf{1} \big\rangle$, where $\mathbf{1}$ is the vector of all ones of length $|\cS_i^{\beta^*} \setminus \cS_i^\beta|$. On the other hand, our estimate of $\bc_i$ is 
\begin{align*}
    \hat{\bc_i} &= Y_i(\bz) \cdot \E\big[\tbz_i^\beta \big(\tbz_i^\beta\big)^{\intercal}\big]^\dagger \tbz_i^\beta 
    = \E\big[\tbz_i^\beta \big(\tbz_i^\beta\big)^{\intercal}\big]^\dagger \tbz_i^\beta \big(\tbz_i^\beta\big)^{\intercal} \bc_i^\beta + \E\big[\tbz_i^\beta \big(\tbz_i^\beta\big)^{\intercal}\big]^\dagger \tbz_i^\beta \big(\tbz_i^{>\beta}\big)^{\intercal} \bc_i^{>\beta}.
\end{align*}
Therefore, the expected contribution of unit $i$ to $\widehat{\TTE}_\beta$ is 
\begin{equation*}
    \E\Big[\big\langle \hat{\bc_i}, \theta_i \big\rangle \Big] 
    = \Big\langle \bc_i^\beta, \E\big[\tbz_i^\beta \big(\tbz_i^\beta\big)^{\intercal}\big] \E\big[\tbz_i^\beta \big(\tbz_i^\beta\big)^{\intercal}\big]^\dagger \theta_i^\beta \Big\rangle 
    + \Big\langle \bc_i^{>\beta}, \E\big[\tbz_i^{>\beta} \big(\tbz_i^\beta\big)^{\intercal}\big] \E\big[\tbz_i^\beta \big(\tbz_i^\beta\big)^{\intercal}\big]^\dagger \theta_i^\beta \Big\rangle.
\end{equation*}
Thus, the bias contribution from unit $i$ is 
\begin{align*}
    \E\Big[\big\langle \hat{\bc_i}, \theta_i \big\rangle \Big] - \big\langle \bc_i, \theta_i \big\rangle
    &= \Big\langle \bc_i^\beta, \Big( \E\big[\tbz_i^\beta \big(\tbz_i^\beta\big)^{\intercal}\big] \E\big[\tbz_i^\beta \big(\tbz_i^\beta\big)^{\intercal}\big]^\dagger - I \Big) \, \theta_i^\beta \Big\rangle \\
    &\qquad + \Big\langle \bc_i^{>\beta}, \E\big[\tbz_i^{>\beta} \big(\tbz_i^\beta\big)^{\intercal}\big] \E\big[\tbz_i^\beta \big(\tbz_i^\beta\big)^{\intercal}\big]^\dagger \theta_i^\beta - \mathbf{1} \Big\rangle.
\end{align*}
Summing over $i$ gives the result.
\endproof
}

\noindent The following corollary identifies a condition under which $\widehat{\TTE}_\beta$ is unbiased.

\begin{corollary} \label{cor:gcr_unbiased}
	If $\beta \geq \beta^*$ and $\theta_i$ lies in the column space of $\E \big[ \tbz_i^\beta \big(\tbz_i^\beta\big)^{\intercal} \big]$ for each $i \in [n]$, then $\E\Big[\widehat{\TTE}_{\beta}\Big]=\TTE$. It follows that for a fixed clustering $\cC$ and $\bz\sim \GCR(\cC, p)$, the pseudoinverse estimator $\widehat{\TTE}_{\beta}$ of \eqref{eq:gcr_bern_tte_hat} is unbiased, so that $\E\Big[\widehat{\TTE}_{\beta}\Big]=\TTE$. 
\end{corollary}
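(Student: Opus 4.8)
The plan is to read off both claims directly from the exact bias expression \eqref{eq:tte_bias_exact} of Theorem~\ref{thm:bias_bound_general}. The first claim amounts to checking that both terms of that expression vanish under the stated hypotheses, and the second claim reduces to verifying that the column-space condition is automatically satisfied for any Bernoulli GCR design, so that the first claim applies with no further work.

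For the general condition, I would first note that when $\beta \geq \beta^*$ the index set $\cS_i^{\beta^*} \setminus \cS_i^\beta$ is empty, so $\bc_i^{>\beta}$ is the empty (equivalently, zero) vector and the entire second term of \eqref{eq:tte_bias_exact}, the misspecification term, drops out. For the first term I would invoke the defining property of the Moore--Penrose pseudoinverse that $A A^\dagger$ is the orthogonal projector onto the column space of $A$. The hypothesis that $\theta_i^\beta$ lies in $\mathrm{col}\big(\E[\tbz_i^\beta (\tbz_i^\beta)^\intercal]\big)$ then gives $\E[\tbz_i^\beta (\tbz_i^\beta)^\intercal]\,\E[\tbz_i^\beta (\tbz_i^\beta)^\intercal]^\dagger \theta_i^\beta = \theta_i^\beta$, so that $\big(\E[\cdots]\E[\cdots]^\dagger - I\big)\theta_i^\beta = 0$ and each summand is zero. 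Summing over $i$ yields $\E[\widehat{\TTE}_\beta] = \TTE$.

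For the GCR specialization, the task is to show the column-space hypothesis holds for free. Here I would use the factorization $\E[\tbz_i \tbz_i^\intercal] = D_i\, \E[\tbw_i \tbw_i^\intercal]\, D_i^\intercal$ established in the proof of Lemma~\ref{lem:gcr_pi_design}. Since $\E[\tbw_i \tbw_i^\intercal]$ is the design matrix of a Bernoulli unit-randomized design over the clusters, it is invertible by the result of \citet{cortez2023exploiting}, and $D_i$ has full column rank. A short linear-algebra argument then gives $\mathrm{col}\big(D_i\, \E[\tbw_i \tbw_i^\intercal]\, D_i^\intercal\big) = \mathrm{col}(D_i)$: the full row rank of $D_i^\intercal$ lets $D_i^\intercal x$ sweep out all of $\mathbb{R}^{|\cC_i^\beta|}$, invertibility of $\E[\tbw_i \tbw_i^\intercal]$ preserves surjectivity, and left-multiplication by $D_i$ lands the product in $\mathrm{col}(D_i)$ while the reverse inclusion is immediate. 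Finally I would invoke the identity $\theta_i = D_i \psi_i$ (already used in the proof of Lemma~\ref{lem:pi_gcr_est_form}), which places $\theta_i$ in $\mathrm{col}(D_i) = \mathrm{col}\big(\E[\tbz_i \tbz_i^\intercal]\big)$, verifying the hypothesis of the first part for every clustering and every $p$.

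The main obstacle is modest and lies entirely in the linear-algebra step identifying $\mathrm{col}\big(\E[\tbz_i \tbz_i^\intercal]\big)$ with $\mathrm{col}(D_i)$, which requires carefully combining the full-rank property of $D_i$ with the invertibility of the cluster design matrix rather than treating $D_i \E[\tbw_i\tbw_i^\intercal] D_i^\intercal$ as a black box. Everything else follows mechanically from the projection property of the pseudoinverse and the vanishing of the misspecification term. It is worth emphasizing in the write-up that, because $\theta_i = D_i \psi_i$ always holds, the GCR unbiasedness is obtained for \emph{every} clustering $\cC$ and every $p \in (0,1)$, which is precisely the content of the ``it follows'' clause of the statement.
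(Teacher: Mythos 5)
Your proposal is correct and follows essentially the same route as the paper's proof: both claims are read off from the exact bias expression of Theorem~\ref{thm:bias_bound_general}, with the first term killed by the projection property $\E[\tbz_i^\beta (\tbz_i^\beta)^\intercal]\E[\tbz_i^\beta (\tbz_i^\beta)^\intercal]^\dagger \theta_i^\beta = \theta_i^\beta$ and the second term vanishing since $\beta \geq \beta^*$, and the GCR case handled via $\theta_i = D_i \psi_i$ together with the factorization $\E[\tbz_i \tbz_i^\intercal] = D_i \E[\tbw_i \tbw_i^\intercal] D_i^\intercal$ and the full row rank of $\E[\tbw_i \tbw_i^\intercal] D_i^\intercal$. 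Your slightly more detailed identification of $\mathrm{col}\big(\E[\tbz_i \tbz_i^\intercal]\big)$ with $\mathrm{col}(D_i)$ is just an expanded version of the paper's one-line rank argument, not a different method.
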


\proof{\it Proof.} 
If $\theta_i^\beta$ lies in the column space of $\E\big[\tbz_i^\beta \big(\tbz_i^\beta\big)^{\intercal}\big]$, then $\E\big[\tbz_i^\beta \big(\tbz_i^\beta\big)^{\intercal}\big]^\dagger \E\big[\tbz_i^\beta \big(\tbz_i^\beta\big)^{\intercal}\big] \theta_i^\beta = \theta_i^\beta$, so the first term of \eqref{eq:tte_bias_exact} is zero. The latter terms are also zero as $\beta \geq \beta^*$, so $\bc_i^{>\beta}$ and $\tbz_i^{>\beta}$ are empty.

For a fixed clustering $\cC$ and $\bz \sim \GCR(\cC, p)$, it suffices to verify that $\theta_i^\beta$ lies in the column space of the design matrix $\E\big[ \tbz_i^\beta \big(\tbz_i^\beta\big)^\intercal \big]$ for graph cluster randomization. As $\theta_i^\beta = D_i \, \psi_i$, it follows that $\theta_i$ lies in the column space of $D_i$. Since $\E \big[ \tbw_i \tbw_i^\intercal \big] D_i^\intercal$ has full (row) rank, $\theta_i$ also lies the column space of $\E\big[ \tbz_i^\beta \big(\tbz_i^\beta\big)^\intercal \big] = D_i \E \big[ \tbw_i \tbw_i^\intercal \big] D_i^\intercal$.
\endproof

For a given design, we can verify the unbiasedness of the pseudoinverse estimator by checking a linear algebraic condition. \meedit{Note that a sufficient condition for unbiasedness is the invertibility of the design matrix (as is the case for unit Bernoulli randomization), which ensures that the column space spans $\mathbb{R}^{|\cS_i^\beta|}$. However, the above corollary shows that this condition is not necessary; GCR designs admit an unbiased pseudoinverse estimator even though their design matrix is rank-deficient for any non-trivial clustering.}

When $\widehat{\TTE}$ is biased, it is not an instance of the Riesz estimator defined by \citet{harshaw2022design}, since the Riesz estimator is always unbiased. In the case of completely randomized designs (see Appendix~\ref{sec:crd}), this is because the positivity condition necessary for the unbiasedness of the Riesz estimator is not satisfied.

\meedit{
\begin{corollary} \label{cor:gcr_bias_bound}
    If $\beta < \beta^*$, for a clustering $\cC$ and $\bz\sim \GCR(\cC, p)$, the bias is upper bounded as
    \begin{align} \label{eq:gcr_bias_bound}
        |\E[\widehat{\TTE}_{\beta}] - \TTE| &\leq \|\bx_i^{>\beta}\|_1 \leq \|\bc_i^{ >\beta}\|_1,
    \end{align}
    where $\bx_i^{>\beta}$ contains only the coefficients that correspond to subsets $\cU \in \cC_i^{\beta^*}$ with $|\cU| > \beta$.
\end{corollary}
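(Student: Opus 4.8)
The plan is to specialize the exact bias expression of Theorem~\ref{thm:bias_bound_general} to the GCR design and then control the surviving term using the cluster-to-unit reduction from Section~\ref{subsec:pi_gcr}. The first observation is that, exactly as in the proof of Corollary~\ref{cor:gcr_unbiased}, the vector $\theta_i^\beta = D_i \psi_i$ lies in the column space of $\E[\tbz_i^\beta (\tbz_i^\beta)^\intercal] = D_i \E[\tbw_i \tbw_i^\intercal] D_i^\intercal$. Hence $\E[\tbz_i^\beta (\tbz_i^\beta)^\intercal]\,\E[\tbz_i^\beta (\tbz_i^\beta)^\intercal]^\dagger \theta_i^\beta = \theta_i^\beta$, so the first term of \eqref{eq:tte_bias_exact} vanishes for every $i$ and the entire task reduces to bounding the misspecification term $\frac{1}{n}\sum_i \big\langle \bc_i^{>\beta}, \E[\tbz_i^{>\beta}(\tbz_i^\beta)^\intercal]\E[\tbz_i^\beta(\tbz_i^\beta)^\intercal]^\dagger \theta_i^\beta - \mathbf{1}\big\rangle$.

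To obtain the sharper bound stated in terms of $\bx_i^{>\beta}$, I would pass to the cluster domain. Writing $Y_i(\bw) = \langle \tbw_i, \bx_i\rangle$ with $\bx_i = D_i^\intercal \bc_i$ (Lemma~\ref{lem:pi_gcr_est_form}) and using that the estimator weight equals $g_i(\bw) := \langle \E[\tbw_i\tbw_i^\intercal]^\dagger \psi_i, \tbw_i\rangle$, a direct expansion of $\E[\widehat{\TTE}_\beta] - \TTE$ shows that the bias contribution of unit $i$ equals $\sum_{|\cU|>\beta} x_{i,\cU}\,(b_{i,\cU}-1)$, where $b_{i,\cU} := \E[g_i(\bw)\prod_{C\in\cU} w_C]$ and the sum runs over $\cU\in\cC_i^{\beta^*}$ with $|\cU|>\beta$. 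The lower-order terms contribute exactly the true TTE, by the defining unbiasedness property $b_{i,\cU}=\Ind(\cU\neq\varnothing)$ of $g_i$ on monomials with $|\cU|\le\beta$. Collecting the residuals into $q_i$ with $(q_i)_\cU = b_{i,\cU}-1$, Hölder's inequality yields $|\E[\widehat{\TTE}_\beta]-\TTE| \le \frac{1}{n}\sum_i \|\bx_i^{>\beta}\|_1\,\|q_i\|_\infty$, which is the averaged form of the claimed $\|\bx_i^{>\beta}\|_1$ bound.

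The crux, and what I expect to be the main obstacle, is establishing the entrywise bound $\|q_i\|_\infty \le 1$, i.e.\ that each extrapolation residual $b_{i,\cU}$ lies in $[0,2]$ for $|\cU|>\beta$. Since the cluster assignments $w_C$ are independent $\Bern(p)$, I would compute $b_{i,\cU}$ in closed form by substituting the explicit SNIPE form of $g_i$ from Lemma~\ref{lem:pi_gcr_est_form}: each single-cluster factor satisfies $\E[(w_C-p)w_C/p] = 1-p$ and $\E[(w_C-p)w_C/(p-1)] = -p$, so only subsets $\cV\subseteq\cU$ survive and $b_{i,\cU}$ collapses to a truncated binomial sum in $|\cU|$ and $\beta$. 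Controlling this truncated sum uniformly in the design parameters is the delicate point; verifying the entrywise bound there (rather than over-optimistically invoking it) is where the real work lies.

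Finally, the outer inequality $\|\bx_i^{>\beta}\|_1 \le \|\bc_i^{>\beta}\|_1$ should follow from the block structure of $D_i$. Since $x_{i,\cU} = \sum_{S:\,\cC(S)=\cU} c_{i,S}$ groups disjoint collections of coefficients, and $|\cC(S)|\le|S|$ forces every $S$ with $\cC(S)=\cU$ and $|\cU|>\beta$ to satisfy $|S|>\beta$, the triangle inequality gives $\|\bx_i^{>\beta}\|_1 = \sum_{|\cU|>\beta}\big|\sum_{S:\,\cC(S)=\cU} c_{i,S}\big| \le \sum_{|S|>\beta}|c_{i,S}| = \|\bc_i^{>\beta}\|_1$, completing the chain once the residual bound of the previous paragraph is in hand.
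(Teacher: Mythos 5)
Your proposal is correct and follows the paper's proof essentially step for step: the same column-space argument eliminates the first term of Theorem~\ref{thm:bias_bound_general}, the same reduction $\bx_i = D_i^\intercal \bc_i$ recasts the misspecification term in the cluster domain, the per-unit bias is the same quantity $\sum_{|\cU|>\beta} x_{i,\cU}(b_{i,\cU}-1)$, and the closing triangle-inequality step (where your check that $\cC(S)=\cU$ with $|\cU|>\beta$ forces $|S|>\beta$ is in fact slightly more careful than the paper's one-line remark) matches as well. The only deviation is computational: you evaluate $b_{i,\cU}=\E\big[g_i(\bw)\prod_{C\in\cU}w_C\big]$ directly from the SNIPE product form using independence of the $w_C$, whereas the paper multiplies the explicit pseudoinverse entries \eqref{eq:pseudo_entry} through $\E\big[\tbw_i^{>\beta}(\tbw_i^\beta)^\intercal\big]$ and simplifies with repeated binomial-theorem manipulations; both routes collapse to the identical truncated binomial sum \eqref{eq:mat_times_psi_1}. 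The step you defer as the delicate point does close, and in two lines rather than the uniform-control argument you anticipate: your per-factor expectations give
\[
    b_{i,\cU} \;=\; p^{|\cU|} \sum_{\substack{\cV \subseteq \cU \\ |\cV| \le \beta}} \bigg[ \Big(\tfrac{1-p}{p}\Big)^{|\cV|} - (-1)^{|\cV|} \bigg],
\]
and for $0 < p \le \tfrac12$ each summand is nonnegative since $\tfrac{1-p}{p} \ge 1$, while extending the sum to all $\cV \subseteq \cU$ and applying the binomial theorem yields exactly $1$; hence $b_{i,\cU} \in [0,1]$, which is stronger than the $[0,2]$ window you need. One caveat worth noting: this verification (like the paper's own) requires $p \le \tfrac12$, a restriction that appears in the proof but not in the corollary statement, so your instinct not to invoke the bound blindly for all $p$ was sound.
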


A proof of this corollary is given in Appendix~\ref{sec:proofs_pi_gcr}. In the proof, we re-express \eqref{eq:tte_bias_exact} in terms of $\tbw_i$, $\bx_i$, $\psi_i$ to reduce to the setting of Bernoulli unit randomization (viewing each cluster as a unit). Then, we use the definition of $\GCR(\cC, p)$ designs and a formula from \citet{cortez2023exploiting} to obtain expressions for the entries of the matrices in this formula. Finally, we invoke the binomial theorem and other combinatorial identities to simplify the bias to an expression that we can bound.

The form of this bound is quite natural: when the higher-order coefficients, $\bc_i^{ >\beta}$, which we forego estimating when choosing to use a small $\beta$ order parameter, are small, the bias from this choice is small as well. This is analogous to bias from misspecification in ordinary least squares, which depends on the distance between the true parameter and the projection of the true parameter onto the set from which an estimate was chosen, but the proof in our setting is more subtle. 
This subtlety is because using, e.g., $\beta=1$ instead of $\beta=2$ is not equivalent to replacing the $\beta=2$ estimates of $c_{i, S}$ with $|S|=2$ by 0. In this example, our estimates of the coefficients $c_{i,S}$ with $|S|=1$ also change, since the estimate of every coefficient depends on the choice of $\beta$ through the pseudoinverse of the design matrix. Thus, we must carefully track the effect of misspecification, as we do in Corollary~\ref{cor:gcr_bias_bound}.
}

\section{Variance} \label{sec:variance}

    We now present a generic bound that identifies features of the experimental design and interference graph that control the variance of $\widehat{\TTE}_\beta$. \cyedit{This variance bound only relies on the neighborhood interference assumption and boundedness of the potential outcomes. In particular, it does not rely on the true outcomes satisfying the low-order assumption (Assumption~\ref{ass:low_deg}).}
    
    \begin{theorem}
    \label{thm:tte_var_bound}
    Suppose that the potential outcomes satisfy \cyedit{neighborhood interference (Assumption~\ref{ass:neighborhood_interference}) and boundedness, i.e. $|Y_i(\bz)| \leq B$ for all $\bz \in \{0,1\}^n$ (Assumption~\ref{ass:bounded})}. Then for any selected model order $\beta$, 
    \begin{equation} \label{eq:tte_var_bound}
	\var(\widehat{\TTE}_\beta) 
        \leq \frac{B^2}{n^2} \sum_{i,j=1}^n \gamma_i\gamma_j \cdot \Ind\big(\cov\big( \big\langle \widehat{\bc}_i^\beta, \theta_i^\beta \big\rangle, \big\langle \widehat{\bc}_j^\beta, \theta_j^\beta \big\rangle \big) > 0\big) 
        \leq \frac{B^2}{n^2}\sum_{i,j=1}^n \gamma_i\gamma_j \cdot \Ind \big(\bz_{\cN_i} \not\perp \bz_{\cN_j} \big).
    \end{equation}
    where \cyreplace{$\gamma_i = \sqrt{|\mathcal{S}_i^{\beta}|\cdot \theta_i^\intercal\E[\tbz_i\tbz_i^\intercal]^\dagger\theta_i}$}{$\displaystyle \gamma_i=\sqrt{(\theta_i^\beta)^\intercal\E\big[\tbz_i^\beta (\tbz_i^\beta)^\intercal\big]^\dagger\theta_i^\beta}$}. 
    \end{theorem}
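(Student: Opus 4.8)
The plan is to expand the variance as a double sum of covariances, discard the non-positive terms, and bound each surviving (positive) covariance by a Cauchy--Schwarz argument reduced to a clean per-unit second-moment bound. Write $V_i := \big\langle \widehat{\bc}_i^\beta, \theta_i^\beta \big\rangle = Y_i(\bz)\,\langle u_i, \tbz_i^\beta\rangle$ where $u_i := \E\big[\tbz_i^\beta (\tbz_i^\beta)^\intercal\big]^\dagger \theta_i^\beta$ is a deterministic vector depending only on the design. Then $\widehat{\TTE}_\beta = \frac{1}{n}\sum_i V_i$, so $\var(\widehat{\TTE}_\beta) = \frac{1}{n^2}\sum_{i,j} \cov(V_i,V_j)$. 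Since negatively-correlated pairs only decrease the sum, I would first write $\sum_{i,j}\cov(V_i,V_j) \le \sum_{i,j}\cov(V_i,V_j)\,\Ind\big(\cov(V_i,V_j)>0\big)$, which is where the first indicator enters.

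For each pair with positive covariance I would apply the covariance Cauchy--Schwarz inequality $\cov(V_i,V_j) \le \sqrt{\var(V_i)\var(V_j)}$, reducing everything to a per-unit bound. Using boundedness (Assumption~\ref{ass:bounded}), I would bound $\var(V_i) \le \E[V_i^2] \le B^2\,\E\big[(u_i^\intercal \tbz_i^\beta)^2\big] = B^2\, u_i^\intercal \E\big[\tbz_i^\beta (\tbz_i^\beta)^\intercal\big] u_i$. Writing $M_i := \E\big[\tbz_i^\beta (\tbz_i^\beta)^\intercal\big]$, which is symmetric and positive semidefinite so that $M_i^\dagger$ is symmetric, the Moore--Penrose identity $M_i^\dagger M_i M_i^\dagger = M_i^\dagger$ collapses the quadratic form: $u_i^\intercal M_i u_i = (\theta_i^\beta)^\intercal M_i^\dagger M_i M_i^\dagger \theta_i^\beta = (\theta_i^\beta)^\intercal M_i^\dagger \theta_i^\beta = \gamma_i^2$. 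Hence $\var(V_i) \le B^2\gamma_i^2$ and $\cov(V_i,V_j) \le B^2 \gamma_i\gamma_j$, which establishes the first inequality of \eqref{eq:tte_var_bound}.

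For the second inequality it suffices to prove the implication $\cov(V_i,V_j)>0 \Rightarrow \bz_{\cN_i} \not\perp \bz_{\cN_j}$, since then the dependence event contains the positive-covariance event and I may enlarge the indicator. I would argue the contrapositive: if $\bz_{\cN_i} \perp \bz_{\cN_j}$ then $\cov(V_i,V_j)=0$. This holds because $V_i$ is a deterministic function of $\bz_{\cN_i}$ alone---the factor $Y_i(\bz)$ depends only on $\bz_{\cN_i}$ by neighborhood interference (Assumption~\ref{ass:neighborhood_interference}), the factor $\langle u_i, \tbz_i^\beta\rangle$ depends only on coordinates indexed by subsets of $\cN_i$ since $\big(\tbz_i^\beta\big)_S = \prod_{j\in S} z_j$ for $S\subseteq\cN_i$, and $u_i$ is deterministic. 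Independence of $\bz_{\cN_i}$ and $\bz_{\cN_j}$ therefore renders $V_i$ and $V_j$ independent, forcing their covariance to vanish.

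The only mildly delicate step is the per-unit variance computation, where care is needed because a pseudoinverse rather than a true inverse is being manipulated; the simplification survives precisely because $M_i$ is symmetric PSD and the identity $M_i^\dagger M_i M_i^\dagger = M_i^\dagger$ holds in general. Everything else is routine: the drop-negative-terms and Cauchy--Schwarz steps are standard, and the independence argument is immediate once one observes that each $V_i$ is measurable with respect to $\bz_{\cN_i}$. Notably, neither step invokes the low-order assumption (Assumption~\ref{ass:low_deg}), consistent with the theorem's claim that the bound holds under general potential outcomes.
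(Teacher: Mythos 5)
Your proposal is correct and follows essentially the same route as the paper's proof: Bienaym\'e's identity, discarding non-positive covariance terms, Cauchy--Schwarz reduced to a per-unit second moment, the bound $\E[V_i^2] \leq B^2\, u_i^\intercal \E\big[\tbz_i^\beta (\tbz_i^\beta)^\intercal\big] u_i$ collapsed via the pseudoinverse identity, and the observation that each $V_i$ is measurable with respect to $\bz_{\cN_i}$ so that independence forces zero covariance. The only cosmetic difference is ordering---the paper inserts the $\Ind(\bz_{\cN_i} \not\perp \bz_{\cN_j})$ indicator first (as an equality) and then restricts to positive-covariance pairs, whereas you derive the positive-covariance bound first and then enlarge the indicator using non-negativity of $\gamma_i\gamma_j$---but the content is identical.
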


\proof{\it Proof.} 
    Using Bienaym\'e's identity, we may write 
    \meedit{
    \begin{equation} \label{eq:tte_var}
        \var\Big( \widehat{\TTE}_\beta \Big) 
        = \frac{1}{n^2} \sum_{i,j=1}^n \cov\big( \big\langle \widehat{\bc}_i^\beta, \theta_i^\beta \big\rangle, \big\langle \widehat{\bc}_j^\beta, \theta_j^\beta \big\rangle \big)
        = \frac{1}{n^2} \sum_{i,j=1}^n \cov\big( \big\langle \widehat{\bc}_i^\beta, \theta_i^\beta \big\rangle, \big\langle \widehat{\bc}_j^\beta, \theta_j^\beta \big\rangle \big) \cdot \Ind \big(\bz_{\cN_i} \not\perp \bz_{\cN_j} \big).
    \end{equation} 
    Here, the second equality follows because $\big\langle \widehat{\bc}_i^\beta, \theta_i^\beta \big\rangle$ is a function of $Y_i(\bz)$ and $\tbz_i^\beta$ and $\big\langle \widehat{\bc}_j^\beta, \theta_j^\beta \big\rangle$ is a function of $Y_j(\bz)$ and $\tbz_j^\beta$, so under the neighborhood interference assumption, $\bz_{\cN_i} \perp \bz_{\cN_j}$ implies that $\cov\big( \big\langle \widehat{\bc}_i^\beta, \theta_i^\beta \big\rangle, \big\langle \widehat{\bc}_j^\beta, \theta_j^\beta \big\rangle \big) = 0$. Restricting the double summation to only positive covariance terms gives an upper bound on the variance.
    \begin{equation} \label{eq:tte_var_ub}
        \var\Big( \widehat{\TTE}_\beta \Big) 
        \leq \tfrac{1}{n^2}\sum_{i,j=1}^n \cov\big( \big\langle \widehat{\bc}_i^\beta, \theta_i^\beta \big\rangle, \big\langle \widehat{\bc}_j^\beta, \theta_j^\beta \big\rangle \big) \cdot \Ind\big( \cov\big( \big\langle \widehat{\bc}_i^\beta, \theta_i^\beta \big\rangle, \big\langle \widehat{\bc}_j^\beta, \theta_j^\beta \big\rangle \big) >0 \big).
    \end{equation}
    }
\cyedit{
By Cauchy-Schwarz,
\[
    \Big|\cov\big( \big\langle \widehat{\bc}_i^\beta, \theta_i^\beta \big\rangle, \big\langle \widehat{\bc}_j^\beta, \theta_j^\beta \big\rangle \big)\Big|
    \:\leq\: \sqrt{\var\big(\big\langle \widehat{\bc}_i^\beta, \theta_i^\beta \big\rangle\big) \cdot \var\big(\big\langle \widehat{\bc}_j^\beta, \theta_j^\beta \big\rangle\big)}
    \:\leq\: \sqrt{\E\big[\big\langle \widehat{\bc}_i^\beta, \theta_i^\beta \big\rangle^2\big] \cdot \E\big[\big\langle \widehat{\bc}_j^\beta, \theta_j^\beta \big\rangle^2\big]}.
\]
Using Assumption~\ref{ass:bounded}, we have
\begin{align}
\E\Big[\big\langle \widehat{\bc}_i^\beta, \theta_i^\beta \big\rangle^2\Big]
&= \E\Big[ Y_i(\bz)^2 \cdot \Big\langle \E \big[ \tbz_i^\beta (\tbz_i^\beta)^\intercal \big]^{\dagger} \tbz_i^\beta \,,\, \theta_i^\beta \Big\rangle^2\,\Big] \nonumber \\
&\leq B^2 \cdot \E \Big[ \Big\langle \E \big[ \tbz_i^\beta (\tbz_i^\beta)^\intercal \big]^{\dagger} \theta_i^\beta \,,\, \tbz_i^\beta \Big\rangle^2\,\Big] \label{eq:B_bound} \\
&= B^2 \cdot \E\Big[ (\theta_i^\beta)^\intercal \E \big[ \tbz_i^\beta (\tbz_i^\beta)^\intercal \big]^{\dagger} \tbz_i^\beta (\tbz_i^\beta)^\intercal \E \big[ \tbz_i^\beta (\tbz_i^\beta)^\intercal \big]^{\dagger} \theta_i^\beta \Big] \nonumber  \\
&= B^2 \cdot \big(\theta_i^\beta\big)^\intercal \E \big[ \tbz_i^\beta (\tbz_i^\beta)^\intercal \big]^{\dagger} \E \big[ \tbz_i^\beta (\tbz_i^\beta)^\intercal \big] \E \big[ \tbz_i^\beta (\tbz_i^\beta)^\intercal \big]^{\dagger} \theta_i^\beta \nonumber  \\
&= B^2 \cdot (\theta_i^\beta)^\intercal \E \big[ \tbz_i^\beta (\tbz_i^\beta)^\intercal \big]^{\dagger} \theta_i^\beta. \nonumber 
\end{align}
\meedit{Here, the final equality follows from properties of the pseudoinverse.}
Therefore, it follows that $\Big|\cov\big( \big\langle \widehat{\bc}_i^\beta, \theta_i^\beta \big\rangle, \big\langle \widehat{\bc}_j^\beta, \theta_j^\beta \big\rangle \big)\Big| \leq B^2 \cdot \gamma_i \gamma_j$ for $\gamma_i = \sqrt{(\theta_i^\beta)^\intercal \E \big[ \tbz_i^\beta (\tbz_i^\beta)^\intercal \big]^{\dagger} \theta_i^\beta}$. Plugging this into \eqref{eq:tte_var} and \eqref{eq:tte_var_ub} completes the proof. \meedit{Note that the second expression in the inequality chain of \eqref{eq:tte_var_bound} is a lower bound on the third because it includes a subset of its (entirely non-negative) terms.}
}
\endproof

    Each term of the ultimate variance bound of Theorem~\ref{thm:tte_var_bound} has two components. The first component is the discrete $\Ind\big(\tbz_i^\beta\not\perp\tbz_j^\beta\big)$, which tracks which terms contribute to the variance. \meedit{This independence condition is easy to interpret and verify for many standard experimental designs, including $\GCR(\cC, p)$, which we consider next. We can further constrain this indicator to include only unit pairs $(i,j)$ with $\cov\big( \big\langle \widehat{\bc}_i^\beta, \theta_i^\beta \big\rangle, \big\langle \widehat{\bc}_j^\beta, \theta_j^\beta \big\rangle \big) > 0$. This form of the bound is better for the analysis of completely randomized designs \cyedit{(see Appendix~\ref{sec:crd})}, where most covariance terms are negative.}

    The second component is the continuous term $\gamma_i\gamma_j$, which captures the contributions of units $i$ and $j$ to the variance. Under this bound, the quantity $\gamma_i$ is a natural proxy for the contribution of unit $i$ to the variance of the pseudoinverse estimator under a particular design and interference graph. To better understand $\gamma_i$, note that it depends on the quadratic form $(\theta_i^\beta)^\intercal \E \big[ \tbz_i^\beta (\tbz_i^\beta)^\intercal \big]^{\dagger} \theta_i^\beta$, which is a sum of certain entries of $\E \big[ \tbz_i^\beta (\tbz_i^\beta)^\intercal \big]^{\dagger}$. These entries will be small when the entries of $\E \big[ \tbz_i^\beta (\tbz_i^\beta)^\intercal \big]$ are large, so $\gamma_i$ will be small when the treatments within $\cN_i$ are correlated. Thus, Theorem~\ref{thm:tte_var_bound} can be thought of as a quantitative form of the intuition that correlating the treatments of adjacent nodes in the interference graph reduces the variance of treatment effect estimates.

    \meedit{The variance bounds from Theorem~\ref{thm:tte_var_bound} depend only on the \textit{selected} model order $\beta$ and do not require the ground truth outcomes to satisfy the low-order assumption. We will see below that the variance scales exponentially in $\beta$, so over-estimating $\beta^*$ can have a deleterious effect on the error of the pseudoinverse estimator. Often, the bias arising from the misspecification of $\beta$ is small; for Bernoulli GCR designs, Corollary~\ref{cor:gcr_bias_bound} bounds the bias by the sum of magnitudes of the higher-order effects, implying that the bias is small if the outcomes are approximately low-order. This suggests that, in many cases, a practitioner without knowledge of the ground truth $\beta^*$ is better off selecting a small value $\beta$ (such as $\beta = 1$).}

    Theorem~\ref{thm:tte_var_bound} can be understood as a bound on the worst-case variance over all outcome models with $|Y_i(\bz)| \leq B$, and so it can be used to select a clustering that is likely to have moderate variance under any such model. We explore this approach empirically in Section~\ref{subsec:choose_cluster} and find that it has good properties in practice. As an extension, both the bounds of Theorem~\ref{thm:tte_var_bound} and thus the practice of cluster selection can be refined based on more precise domain knowledge of $\bc_i$. \juedit{A full execution of the \cite{viviano2023causal} ``causal clustering'' framework under low-order outcomes---minimizing a mean squared error objective directly instead of bounds---remains an interesting direction for future work.} 

    \cyedit{We next focus on Bernoulli GCR designs to build intuition for how the variance bound reflects properties of the design as it relates to the clustering. The results are facilitated by access to the explicit analytical forms of $\E\big[ \tbz_i^\beta (\tbz_i^\beta)^\intercal \big]$ and $\E\big[ \tbz_i^\beta (\tbz_i^\beta)^\intercal \big]^{\dagger}$,} but we emphasize that the utility of our bounds is in no way limited to the setting where such analytical calculations are feasible. We use these calculations to provide easily interpretable theoretical results, but our bounds can be evaluated numerically given nothing more than sampling access to the distribution. This sampling access can be used to estimate $\E\big[ \tbz_i^\beta (\tbz_i^\beta)^\intercal \big]$, which can be pseudoinverted numerically, and then all bounds can be computed efficiently. This approach is also possible for designs even more complex than those we consider \cyedit{(e.g., \cite{pouget2019variance}, \citet{ugander2023randomized}, \citet{candogan2023correlated}, or \cite{kandiros2024conflict}).}  By sampling and numerical pseudoinversion, our bounds can, e.g., be used to select a clustering for graph clustered designs, as we demonstrate in Section~\ref{subsec:choose_cluster}. 

\cyedit{
\begin{remark}
    When using \eqref{eq:tte_var_bound} to compare the variance across multiple candidate designs, the tightness of this bound becomes relevant. There are only a few inequalities used in the proof of Theorem \ref{thm:tte_var_bound}. Perhaps the most important step of the bound is that the covariance terms $\cov( \langle \widehat{\bc}_i^\beta, \theta_i^\beta \rangle, \langle \widehat{\bc}_j^\beta, \theta_j^\beta \rangle )$ are upper bounded by their magnitude, and then subsequently bounded by Cauchy-Schwarz. For Bernoulli designs under outcome models where the causal effects have the same sign, pairs $i$ and $j$ for which $\cN_i \cap \cN_j \neq \emptyset$ will contribute positive covariance terms. However, if the causal effects are allowed to have opposite signs, then pairs that have shared neighbors could have negatively correlated estimates, leading to a looser bound. The Cauchy-Schwarz inequality is only satisfied with equality if the estimates associated with $i$ and $j$ are linearly related, which could be satisfied in the unlikely situation that $i$ and $j$ have the same neighborhood sets, and the corresponding causal effects are proportional to each other. The Cauchy-Schwarz inequality is loosest for pairs $i,j$ whose neighborhoods only barely overlap, as the covariance could be very small. One could estimate the tightness of the variance bound by restricting the summation to only pairs whose neighborhood overlap is large. For graphs that exhibit strong clustering, one would expect that most pairs with overlapping neighborhoods also have a large overlap, indicating that the bound is loosest for graphs that are not well clustered. After applying Cauchy-Schwarz, we bound the variance of $\langle \widehat{\bc}_i^\beta, \theta_i^\beta \rangle$ by its second moment. This is only loose by an additive term of $B^2$ under the condition that the outcomes take values in the range $[0,B]$ as 
    \begin{align*}
        \var(\langle \widehat{\bc}_i^\beta, \theta_i^\beta \rangle) 
        = \E[\langle \widehat{\bc}_i^\beta, \theta_i^\beta \rangle^2] - \langle \bc_i^\beta, \theta_i^\beta \rangle^2 
        = \E[\langle \widehat{\bc}_i^\beta, \theta_i^\beta \rangle^2] - (Y_i({\bf 1}) - Y_i({\bf 0}))^2
        \geq \E[\langle \widehat{\bc}_i^\beta, \theta_i^\beta \rangle^2] - B^2.
    \end{align*}
    As the variance is eventually bounded by a factor of $B^2$, this additive gap of $B^2$ is not likely to be very significant when comparing variance across designs. The final inequality used in \eqref{eq:B_bound} upper bounds $Y_i(\bz)$ by $B$, pulling it out of the expectation. If the outcomes are furthermore lower bounded by a fraction of $B$ such that the ratio of the largest and smallest outcomes is bounded by $1/r$, then the inequality in \eqref{eq:B_bound} would be loose by at most a factor of $1/r$, suggesting that the overall bound is still useful for comparing across designs as long as $r$ is not too small.
\end{remark}
}

\subsection{Variance Bounds under Bernoulli GCR Designs} 

We now more precisely characterize the variance structure of the pseudoinverse estimator for $\GCR$ Bernoulli designs, which involves evaluating when $\bz_{\cN_i} \perp \bz_{\cN_j}$ and bounding $\gamma_i$. Under a Bernoulli clustered design, $\bz_{\cN_i} \perp \bz_{\cN_j}$ if and only if the corresponding cluster neighborhoods of $i$ and $j$ are disjoint. Lemma~\ref{lem:gamma_bound} provides an upper bound on $\gamma_i$ for the $\GCR$ Bernoulli design.

\cyedit{
\begin{restatable}{lemma}{gammabound}
\label{lem:gamma_bound}
Suppose that $\bz\sim \GCR(\cC,p)$. Then,
\begin{equation} \label{eq:gamma_bound_gcr}
    \gamma_i^2 \leq 2 \min\left(p^{-|\cC(\cN_i)|}, |\cC(\cN_i)|^{\beta} p^{-\beta}\right).
\end{equation}
\end{restatable}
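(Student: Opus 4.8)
The plan is to reduce the computation of $\gamma_i^2$ to the analysis of a Bernoulli design over the $C := |\cC(\cN_i)|$ clusters adjacent to $i$, treating each such cluster as a single unit, and then to evaluate the resulting quadratic form exactly before bounding it in two different ways. First I would use the rank decomposition $\E[\tbz_i\tbz_i^\intercal]^\dagger = D_i(D_i^\intercal D_i)^{-1}\E[\tbw_i\tbw_i^\intercal]^\dagger(D_i^\intercal D_i)^{-1}D_i^\intercal$ from \eqref{eq:inverse_design_GCR} together with the identity $\theta_i^\beta = D_i\psi_i$. Since $D_i^\intercal D_i$ cancels against its inverse, these combine to the clean reduction $\gamma_i^2 = (\theta_i^\beta)^\intercal\E[\tbz_i\tbz_i^\intercal]^\dagger\theta_i^\beta = \psi_i^\intercal\E[\tbw_i\tbw_i^\intercal]^\dagger\psi_i$, so the clustering has been entirely absorbed into a cluster-level Bernoulli problem.

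Next I would obtain an exact closed form. Using the Moore--Penrose identity $M^\dagger M M^\dagger = M^\dagger$ with $M = \E[\tbw_i\tbw_i^\intercal]$, we have $\gamma_i^2 = \E\big[\langle \E[\tbw_i\tbw_i^\intercal]^\dagger\psi_i, \tbw_i\rangle^2\big]$, and the inner product is exactly the explicit weight computed in Lemma~\ref{lem:pi_gcr_est_form}. Expanding the square and using that the $w_C$ are independent with $\E[w_C - p]=0$, every cross term between two distinct cluster-subsets $\cU \ne \cV$ vanishes, leaving only diagonal terms. This collapses the second moment to $\gamma_i^2 = \sum_{k=1}^{\min(\beta,C)}\binom{C}{k}\,g(k)$, where, writing $a=(1-p)/p$, $g(k)=\big(\tfrac{1-p}{p}\big)^k + \big(\tfrac{p}{1-p}\big)^k - 2(-1)^k = \big(a^{k/2}-(-1)^k a^{-k/2}\big)^2 \ge 0$; the nonnegativity of $g(k)$ makes extending or truncating the summation range monotone, which I exploit in both bounds.

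For the first bound I would extend the summation from $\min(\beta,C)$ up to $C$ (legitimate since each $g(k)\ge 0$) and recognize the three pieces as binomial expansions: $\sum_{k=0}^C\binom{C}{k}a^k = (1+a)^C = p^{-C}$, $\sum_{k=0}^C\binom{C}{k}a^{-k} = (1+a^{-1})^C = (1-p)^{-C}$, and $\sum_{k=0}^C\binom{C}{k}(-1)^k = 0^C = 0$ (the $k=0$ term $g(0)=0$ costs nothing). Hence $\gamma_i^2 \le p^{-C} + (1-p)^{-C} \le 2p^{-C}$ in the regime $p\le\tfrac12$. For the second bound I would instead keep the truncation at $k\le\beta$ and bound term by term, using $\binom{C}{k}\le C^k$ and $a^k\le p^{-k}$, so the dominant contribution is the size-$\beta$ term $\binom{C}{\beta}a^\beta \le C^\beta p^{-\beta}$, and summing the geometrically decaying remaining terms keeps the total below $2C^\beta p^{-\beta}$. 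Taking the minimum of the two bounds yields \eqref{eq:gamma_bound_gcr}.

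The main obstacle is this second bound: unlike the first, it does not telescope into a closed binomial form, so the truncated sum $\sum_{k=1}^{\beta}\binom{C}{k}g(k)$ must be controlled directly, and extracting the clean constant $2$ requires carefully accounting for the lower-order-in-$k$ terms together with the $a^{-k}$ and $(-1)^k$ pieces of $g(k)$, whose relative sizes behave differently when $p$ is close to $\tfrac12$ versus when $p$ is small. The reduction and the exact evaluation are the conceptual crux but become mechanical once the rank decomposition and the vanishing of the off-diagonal cross terms are in hand.
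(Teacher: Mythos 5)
Your reduction and exact evaluation are correct and follow essentially the paper's route: the paper likewise combines \eqref{eq:inverse_design_GCR} with $\theta_i^\beta = D_i\psi_i$ to get $\gamma_i^2 = \psi_i^\intercal \E[\tbw_i\tbw_i^\intercal]^\dagger \psi_i$, and its Lemma~\ref{lem:quadratic_form_sum} yields exactly your closed form $\gamma_i^2 = \sum_{k=1}^{\min(\beta,C)}\binom{C}{k}g(k)$ with $C=|\cC(\cN_i)|$, $a=(1-p)/p$, and $g(k)=a^k+a^{-k}-2(-1)^k$. Your derivation of that closed form is genuinely cleaner: the paper expands $\psi_i^\intercal\E[\tbw_i\tbw_i^\intercal]^\dagger\psi_i$ directly using the pseudoinverse entries of \citet{cortez2023exploiting} and a multi-case combinatorial computation, whereas you get the same sum from $\gamma_i^2 = \E\big[\langle \E[\tbw_i\tbw_i^\intercal]^\dagger\psi_i,\tbw_i\rangle^2\big]$ (via $M^\dagger M M^\dagger = M^\dagger$), the explicit weights of Lemma~\ref{lem:pi_gcr_est_form}, and the vanishing of cross terms by independence of the centered factors $w_C-p$; your identity $g(k)=\big(a^{k/2}-(-1)^k a^{-k/2}\big)^2\ge 0$ is also slicker than the paper's derivative argument for nonnegativity. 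Your first bound (extend the sum to $k\le C$ by nonnegativity, three binomial sums, then $p^{-C}+(1-p)^{-C}\le 2p^{-C}$) is the paper's verbatim. One caveat you correctly sense: $p\le\tfrac12$ is needed not just for your first bound but for the lemma itself --- at $C=\beta=1$ the exact value is $\tfrac{1}{p(1-p)}$, which exceeds $2p^{-1}$ as $p\to1$ --- so the paper's ``WLOG $p\le\tfrac12$ by symmetry'' remark should be read as a standing restriction to $p\le\tfrac12$.

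The genuine gap is the final step of your second bound. The mechanism you describe --- $\binom{C}{k}\le C^k$ plus summing a geometric series --- cannot produce the constant $2$: even with the sharp per-term bound $g(k)\le 2p^{-k}$ (valid for $p\le\tfrac12$ since $a^k+a^{-k}\le p^{-k}$ and $2\le p^{-k}$), the geometric sum gives $2(C/p)^\beta\sum_{j\ge0}(p/C)^j=\tfrac{2}{1-p/C}\,(C/p)^\beta$, which exceeds $2(C/p)^\beta$ whenever $C\ge2$ (e.g., $\tfrac83(C/p)^\beta$ at $p=\tfrac12$, $C=2$). Moreover the premise of decay away from $k=\beta$ is false in general: at $p=\tfrac12$ one has $g(k)=2-2(-1)^k$, so the even-$k$ terms vanish and the $k=1$ term can dominate (for $C=\beta=3$ the terms are $12,0,4$). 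The paper's fix avoids geometric summation entirely: bound each term \emph{uniformly} via $g(k)\le 2+p^{-k}\le 2p^{-\beta}$ for all $1\le k\le\beta$ (using $2\le p^{-1}\le p^{-k}\le p^{-\beta}$), then use the counting bound $\sum_{k=1}^{\beta}\binom{C}{k}\le C^{\beta}$, which holds because mapping length-$\beta$ sequences over $\cC(\cN_i)$ to their sets of distinct elements surjects onto the nonempty subsets of size at most $\beta$. In your notation: factor $p^{-\beta}$ out of every term first, then count subsets, rather than bounding $\binom{C}{k}$ by $C^k$ and summing the resulting series. With that substitution your argument closes and coincides with the paper's, achieving exactly $\gamma_i^2\le 2C^{\beta}p^{-\beta}$.
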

}
Note that this Lemma also holds if we replace $|\cC(\cN_i)|$ with any upper bound on the cluster neighborhood size.
This bound elucidates the interplay between the $\beta$-order assumption and the cluster randomization. When $|\cC(\cN_i)| \leq \beta$, we find that $\gamma_i^{2}$ scales like $p^{-|\cC(\cN_i)|}$, like it would under cluster randomization. When $|\cC(\cN_i)| \gg \beta$, we find that $\gamma_i^{2}$ scales like $|\cC(\cN_i)|^{\beta} p^{-\beta}$, like it would under the $\beta$-order assumption. 
This lemma sheds light on the basic value of low-order modeling: when a node $i$  has cluster neighborhood degree $|\cC(\cN_i)| \gg \beta$, it still contributes terms of order $p^{-\beta}$ to the variance, rather than $p^{-|\cC(\cN_i)|}$. The size of the cluster neighborhood is at most the degree of node $i$, i.e., $|\cC(\cN_i)| \leq |\cN_i| =: d_i$, but it can be much smaller for a node in the interior of a cluster $|\cC(\cN_i)| = |\cN_i| =: d_i$. Putting these together results in the following variance upper bound for the $\beta$-order pseudoinverse estimator under $\GCR(\cC,p)$ design.

\cyedit{
\begin{corollary}[Clusterings with Bounded Cluster Neighborhood Sizes] \label{thm:pi_gcr}
Suppose that the potential outcomes satisfy neighborhood interference (Assumption~\ref{ass:neighborhood_interference}) and boundedness, i.e. $|Y_i(\bz)| \leq B$ (Assumption~\ref{ass:bounded}). The variance of the $\beta$-order pseudoinverse estimator under $\GCR(\cC,p)$ design is upper bounded by
\begin{align} \label{var_bd_GCR_general}
\frac{2 B^2}{n^2}\sum_{i,j=1}^n \sqrt{\min\left(p^{-|\cC(\cN_i)|}, |\cC(\cN_i)|^{\beta} \cdot p^{-\beta}\right), \min\left(p^{-|\cC(\cN_j)|}, |\cC(\cN_j)|^{\beta} \cdot p^{-\beta}\right)} \cdot \Ind \big(\cC(\cN_i) \cap \cC(\cN_j) \neq \emptyset \big).
\end{align}
Suppose further that each individual $i$ has a bounded cluster neighborhood size $|\cC(\cN_i)|\leq C$ and each cluster has at most $N$ units. Then
\begin{align} \label{var_bd_GCR_simplified}
    \var(\widehat{\TTE}_{\beta}) \leq 
        \frac{2 B^2 C N \dmax}{n} \cdot \min\left(\frac{1}{p^C}, \frac{C^\beta}{p^\beta}\right).
\end{align}
\end{corollary}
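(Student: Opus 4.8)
The plan is to derive both displays directly from the generic bound of Theorem~\ref{thm:tte_var_bound} together with the per-unit estimate of Lemma~\ref{lem:gamma_bound}, after which the only real work is a combinatorial count of the number of overlapping cluster-neighborhood pairs. First I would specialize the independence indicator appearing in Theorem~\ref{thm:tte_var_bound} to the GCR design. Under $\GCR(\cC,p)$ the restricted treatment vector $\bz_{\cN_i}$ is a deterministic function of the independent cluster coins $\{w_C : C \in \cC(\cN_i)\}$, and distinct coins are mutually independent; hence $\bz_{\cN_i} \perp \bz_{\cN_j}$ holds precisely when the cluster neighborhoods are disjoint, i.e.\ $\cC(\cN_i) \cap \cC(\cN_j) = \varnothing$. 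This replaces $\Ind(\bz_{\cN_i}\not\perp\bz_{\cN_j})$ with $\Ind(\cC(\cN_i)\cap\cC(\cN_j)\neq\varnothing)$. Writing $m_i := \min\!\big(p^{-|\cC(\cN_i)|}, |\cC(\cN_i)|^\beta p^{-\beta}\big)$, Lemma~\ref{lem:gamma_bound} gives $\gamma_i^2 \le 2 m_i$, so that $\gamma_i\gamma_j = \sqrt{\gamma_i^2\gamma_j^2} \le 2\sqrt{m_i m_j}$; substituting both facts into Theorem~\ref{thm:tte_var_bound} yields \eqref{var_bd_GCR_general} immediately.

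For the simplified bound \eqref{var_bd_GCR_simplified} I would first uniformize the $\gamma$ factors. Since $p<1$, both $x\mapsto p^{-x}$ and $x\mapsto x^\beta p^{-\beta}$ are nondecreasing in $x$, hence so is their pointwise minimum; using the hypothesis $|\cC(\cN_i)|\le C$ this gives $m_i \le \min(p^{-C}, C^\beta p^{-\beta})$ for every $i$, and therefore $\gamma_i\gamma_j \le 2\min(p^{-C}, C^\beta p^{-\beta})$ uniformly. It then remains to bound $\sum_{i,j}\Ind(\cC(\cN_i)\cap\cC(\cN_j)\neq\varnothing)$ by $n\,C\,N\,\dmax$. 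Fixing $i$, every $j$ contributing to the inner sum must share a witness cluster $C_k \in \cC(\cN_i)$ with $C_k \in \cC(\cN_j)$, i.e.\ $j$ must have an in-neighbor lying in the set of units $U_i := \bigcup_{C \in \cC(\cN_i)} C$. By assumption $U_i$ consists of at most $C$ clusters of size at most $N$, so $|U_i| \le CN$; and each unit of $U_i$ is an in-neighbor of at most $\dmax$ individuals, so at most $CN\dmax$ values of $j$ overlap with $i$. Summing over $i$ gives the claimed count, and combining with the prefactor $B^2/n^2$ and the uniform bound on $\gamma_i\gamma_j$ produces $\frac{2B^2 C N \dmax}{n}\min(p^{-C}, C^\beta p^{-\beta})$, as required.

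I expect the combinatorial counting step to be the main obstacle, both in choosing the right witness (routing each overlapping pair $(i,j)$ through a common cluster and then through a shared unit of $U_i$) and in extracting exactly the factor $CN\dmax$ rather than a looser product. Some care is also needed regarding the directedness of $G$: the count naturally involves the number of individuals \emph{influenced by} each unit of $U_i$, which I control using the degree bound $\dmax$. The remaining ingredients---the independence characterization and the monotonicity argument that lets $|\cC(\cN_i)|\le C$ replace $|\cC(\cN_i)|$ inside the minimum---are routine and require no further assumptions beyond those already in force.
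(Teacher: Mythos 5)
Your proposal is correct and follows essentially the same route as the paper's proof: apply Lemma~\ref{lem:gamma_bound} to Theorem~\ref{thm:tte_var_bound}, replace $\Ind(\bz_{\cN_i}\not\perp\bz_{\cN_j})$ with $\Ind(\cC(\cN_i)\cap\cC(\cN_j)\neq\varnothing)$ under the GCR design, and bound the number of overlapping pairs by $nCN\dmax$ via the same cluster-then-unit witness count. Your explicit remarks on the monotonicity of the minimum in $|\cC(\cN_i)|$ and on the directed-degree issue in the counting step are slightly more careful than the paper's own (terser) argument, but the substance is identical.
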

}

\proof{\it Proof.} 
The bound in \eqref{var_bd_GCR_general} follows from applying Lemma \ref{lem:gamma_bound} to Theorem \ref{thm:tte_var_bound} along with the fact that under cluster randomized designs, $\Ind \big(\bz_{\cN_i} \not\perp \bz_{\cN_j} \big) = \Ind \big(\cC(\cN_i) \cap \cC(\cN_j) \neq \emptyset \big)$. As each unit $i$ is connected to at most $C$ clusters, and since each of those clusters has at most $N$ units, each of which has at most $\dmax$ neighbors, there are at most $n C N \dmax$ non-zero terms in \eqref{var_bd_GCR_general}, each of which are bounded by $\min\left(\frac{1}{p^C}, \frac{C^\beta}{p^\beta}\right)$, implying the bound in \eqref{var_bd_GCR_simplified}.
\qedsymbol
\endproof

The variance bound in \eqref{var_bd_GCR_general} can be used to evaluate different choices of clusterings $\cC$ in order to choose a clustering that minimizes this worst-case variance bound. Corollary \ref{thm:pi_gcr} shows that using the $\beta$-order pseudoinverse estimator together with cluster randomization achieves a variance that combines the benefits gained by both the low-order structure and the clustering structure. Both the low-order pseudoinverse estimator and the cluster-randomized design aim to reduce variance by reducing the ``effective degree" of a node. Without either assumption, the variance of the Horvitz-Thompson estimator under a Bernoulli unit randomized design typically scales like $p^{-d_i}$. In comparison, the variance of the $\beta$-order pseudoinverse estimator scales like $p^{-\beta}$, and the variance of the Horvitz--Thompson estimator under cluster-randomization scales like $p^{-|\cC(\cN_i)|}$. By combining the two, the variance scales as the minimum of $p^{-|\cC(\cN_i)|}$ and $p^{-\beta}$.

\cyedit{In contrast to the literature, which typically only considers optimal estimators for fixed designs or optimal designs for fixed estimators, our result shows that jointly considering the choice of estimator and design can lead to performance gains. In particular, \eqref{var_bd_GCR_general} shows that by combining the $\beta$-order pseudoinverse estimator with a clustered randomized design, we ensure that nodes for which $|\cC(\cN_i)| \ll \beta$ contribute order $p^{-|\cC(\cN_i)|}$ to the variance, and nodes for which $|\cC(\cN_i)| \gg \beta$ contribute order $p^{-\beta}$ to the variance. When the cluster neighborhood degree could be highly non-uniform over the network, combining both low-order estimators with clustered designs can thus perform much better than a low-order estimator with a unit-randomized design or the Horvitz-Thompson estimator with a clustered design. Many natural clusterings do exhibit highly varying cluster neighborhood degrees, as nodes in the interior of a cluster have $|\cC(\cN_i)|=1$, whereas nodes at the boundaries of clusters can have cluster neighborhood degrees as large as $|\cN_i|$.}

\cyedit{The Horvitz-Thompson estimator, which is equivalent to the pseudoinverse estimator when $\beta=\dmax$, pays a cost exponential in $|\cC(\cN_i)|$ in the $\gamma_i$ terms of the variance. This means that even if the clustering is ``mostly good,'' i.e., a majority of individuals have low cluster neighborhood sizes, the exponential contribution to the variance from a small fraction of individuals with large cluster neighborhood sizes can quickly accumulate. In contrast, the $\beta$-order pseudoinverse estimator upper bounds each of the $\gamma_i$ by exponential in $\min(|\cC(\cN_i)|,\beta)$. As a result, the pseudoinverse estimator is less sensitive to the tail of the distribution of cluster neighborhood sizes. This is more amenable to real-world graphs, where even the best clusterings may have a small number of individuals with large cluster neighborhood sizes.
}

\cyedit{The quality of the clustering affects the above variance bound with respect to the number/size of the clusters, along with the cluster connectivity. In particular, the variance scales inversely with the number of clusters, which is lower bounded by $n/N$, instead of the number of individuals as under unit randomized designs. The variance scales exponentially with the minimum of $C$ and $\beta$, again showing how the combination of both the $\beta$-order pseudoinverse estimator and the cluster randomized design helps to reduce variance. Our result in Corollary \ref{thm:pi_gcr} recovers the bound from Theorem 1 of \citet{cortez2023exploiting} for unit randomized designs by setting $C=\dmax$ and $N=1$.
}

\citet{ugander2013} show in their Proposition 4.2 that a 3-net clustering of a graph $G$ with restricted growth coefficient $\kappa$ constrains the cluster neighborhood size to $|\cC(\cN_i)|\leq \kappa^3$ for all units $i$, allowing us to control the number of pairs of units that share cluster neighborhoods. Additionally as $|\cC(\cN_i)|\leq \dmax$, this implies $|\cC(\cN_i)| \leq \min(\kappa^3, \dmax)$. 
Corollary~\ref{thm:pi_gcr} thus implies a variance upper bound that is polynomial in $\kappa$ for $\kappa$-restricted growth graphs for the $\beta$-order pseudoinverse estimator, rather than exponential in $\kappa$ like variance bounds for Horvitz-Thompson under cluster randomization without a low-order assumption. 

\subsection{Improvement on Horvitz--Thompson}

\meedit{The $\beta$-order pseudoinverse estimator is derived under the assumption of a $\beta$-order potential outcomes model, which is a smaller class of outcome functions for $\beta \ll d_{\max}$. As such, one might wonder whether it will always have lower variance than the Horvitz-Thompson estimator, which does not impose any structure on the potential outcomes. Unfortunately, this is not always the case, as demonstrated by the following example.

\begin{example}
    Consider a small instance with $n=3$, $\beta^* = 2$, $p = \frac{1}{2}$ and unit randomization (singleton clusters). Further suppose that each $\cN_i = [3]$ with each $c_{i,\cS} = 1$ when $|\cS| = 1$ and $c_{i,\cS} = -1$ when $|\cS| = 2$. Then, $\widehat{\TTE}_{\textrm{HT}} \equiv 0$ so $\var \big(\widehat{\TTE}_{\textrm{HT}} \big) = 0$. However, $\Pr\big(\widehat{\TTE}_2 = 0\big) = \frac{1}{4}$ and $\Pr\big(\widehat{\TTE}_2 = 6\big) = \Pr\big(\widehat{\TTE}_2 = -6\big) = \frac{3}{8}$, so $\var \big(\widehat{\TTE}_2 \big) = 27$.
\end{example}

This example is specifically designed to exploit the particularities of the Horvitz-Thompson estimator. Note that in this example, the observed outcomes of each unit under complete treatment and complete non-treatment are $0$. The Horvitz-Thompson estimator zeros out observations from units whose neighborhoods are partially treated (the only scenario in which non-zero outcomes are observed in this example), whereas the pseudoinverse estimator extracts signal from these observations. In practice (i.e., outside of the artificial construction of the above example), we find that the pseudoinverse estimator has a lower variance than the Horvitz--Thompson estimator since it exploits the $\beta$-order interactions structure of the potential outcomes. The Horvitz--Thompson estimator, in contrast, does not exploit this structure, leading to higher variance (but with the advantage of being unbiased for a larger class of outcome models). 

In particular, when either the potential outcomes under complete treatment or non-treatment are non-zero, the $\Omega(\frac{1}{p^{|\cC(\cN_i)|}})$ scaling in the Horvitz-Thompson variance \citep{ugander2013} becomes unavoidable. A sufficient condition to guarantee this is through a \textit{strong monotonicity assumption}, wherein the signs of $c_{i, S}$ are constant across all $i \in [n]$ and $S \in \cS_i^{\beta^*}$. Under this assumption, the variance of $\widehat{\TTE}_{\beta^*}$ always improves on the variance of the Horvitz--Thompson estimator defined in \eqref{eq:tte_ht} (which assumes a neighborhood exposure model as in Assumption~\ref{ass:neighborhood_interference}), suggesting that we should prefer the pseudoinverse to the Horvitz--Thompson estimator when using a $\GCR(\cC,p)$ design. We formalize this point with the following remark. 

\begin{restatable}{remark}{gcrhtbound} \label{rem:gcr_pi_better}
Fix a clustering $\cC$ and suppose that $\bz \sim \GCR(\cC,p)$. \cyedit{Assume that the sign of $c_{i, S}$ is constant across all $i$ and $S\subseteq \cS_i^{\beta}$.} Then \cyedit{for $p \in (0,\frac12)$ and $\beta \geq \beta^*$}, we have $\var\Big( \widehat{\TTE}_{\beta} \Big) \leq \var \Big(\widehat{\TTE}_{\textrm{HT}} \Big)$.
\end{restatable}

The proof of this remark, which relies on the explicit form of the Horvitz--Thompson estimator, can be found in Appendix~\ref{sec:proofs_pi_gcr}. 

}
\section{Experiments} \label{sec:experiments}

Here, we present a set of experiments to supplement our theoretical results. In Section~\ref{subsec:gcr_exp}, we consider a stylized setting in which the underlying interference graph is a cycle graph and show that the interplay between the low-order assumption and the graph clustering identified in the variance bounds of Section~\ref{sec:variance} correctly captures the variance of $\widehat{\TTE}_{\beta}$ in simulation, focusing on the setting where the parameter of the estimator $\beta$ is equal to the true degree of the potential outcomes model $\beta^*$. \cyedit{In Section~\ref{sec:expt_misspec}, we compare against misspecified settings where $\beta$ may be different from $\beta^*$, showing the bias variance tradeoff that arises.} In Section~\ref{subsec:choose_cluster}, we move to a more realistic setting in which we study interference on both stochastic block model graphs and real-world graphs to demonstrate the practical value of variance bounds when choosing the clustering in a graph-cluster randomized experiment. \skedit{Finally, in Section~\ref{subsec:monte_carlo}, we show that the pseudoinverses appearing in our bounds can be accurately approximated with numerical methods, making our bounds useful even in settings where the pseudoinverse cannot be computed in closed form.}
Through these experiments, we find that our bounds correctly predict the behavior of $\widehat{\TTE}_{\beta}$ in simulation and that they can be used to select a clustering with favorable results. 

\subsection{Interplay Between Low-Order Modeling and Cluster Randomization}
\label{subsec:gcr_exp}

In this section, we empirically verify our results showing that $\widehat{\TTE}_{\beta}$ improves on the Horvitz-Thompson estimator. We consider an interference graph $G$ that is a power of the cycle graph on $n=840$ nodes, in which each vertex has degree $d=7$ because it is connected to itself and the three vertices closest to it on each side; see Figure~\ref{fig:ring_clusters} for an illustration of this network.

\begin{figure}
    \centering
    \begin{minipage}{0.3\textwidth}
        \begin{tikzpicture}
            \node[circle,minimum size=120pt] (c) {};
            \node[circle,minimum size=108pt] (d) {};
    
            \foreach \t in {-60, -40, ..., 240} {
                \node[draw, circle, inner sep=0pt, minimum size=12pt] (a\t) at (c.\t) {}; 
                \draw (a\t) edge[loop, out={\t-15}, in={\t+15}, min distance=16pt,-latex, thick, red] (a\t); 
            }
            \foreach \t in {-55, -35, ..., 225} {
                \tikzmath{\s = \t + 10;}
                \draw (c.\s) edge[latex-latex, thick, red] (c.\t);
            }
    
            \foreach \t in {-60, -40, ..., 180} {
                \tikzmath{\s = \t + 60;}
                \draw (d.\t) edge[out={\t+140},in={\s-140},latex-latex, thick, red!40] (d.\s);
            }
            \foreach \t in {-60, -40, ..., 200} {
                \tikzmath{\s = \t + 40;}
                \draw (d.\t) edge[out={\t+120},in={\s-120},latex-latex, thick, red] (d.\s);
            }
    
            \node at (c.270) {\Large $\hdots$};
        \end{tikzpicture}
    \end{minipage}
    \hfill
    \begin{minipage}{0.27\textwidth}
        \begin{tikzpicture}[scale=0.25]
            \node[circle,minimum size=100pt] (c) {};
            \node[circle,minimum size=85pt] (i) {};
            \node[circle,minimum size=120pt] (o) {};
            
            \foreach \t in {-70, -30, ..., 210} {
                \tikzmath{\s = \t + 40;}
                \tikzmath{\u = \t + 2;}
                \tikzmath{\v = \s - 2;}
                \fill[red!20, rounded corners=10pt] (i.\t) -- (o.\u) -- (o.\v) -- (i.\s) -- cycle; 
            }
        
            \foreach \t in {-60, -40, ..., 240} {
                \node[draw, circle, inner sep=0pt, minimum size=10pt] (a\t) at (c.\t) {}; 
            }
    
            \node at (c.270) {\Large $\hdots$};
            \node at (0,0) {$w=2$};
        \end{tikzpicture} 
    \end{minipage}
    \begin{minipage}{0.27\textwidth}
        \begin{tikzpicture}[scale=0.25]
            \node[circle,minimum size=100pt] (c) {};
            \node[circle,minimum size=85pt] (i) {};
            \node[circle,minimum size=120pt] (o) {};
            
            \foreach \t in {-70, 10, ..., 210} {
                \tikzmath{\s = \t + 80;}
                \tikzmath{\u = \t + 2;}
                \tikzmath{\v = \s - 2;}
                \tikzmath{\a = \t + 19;}
                \tikzmath{\b = \t + 38;}
                \tikzmath{\c = \t + 57;}
                \fill[red!20, rounded corners=10pt] (i.\t) -- (o.\u) -- (o.\a) -- (o.\b) -- (o.\c) -- (o.\v) -- (i.\s) -- (i.\c) -- (i.\b) -- (i.\a) -- cycle; 
            }
        
            \foreach \t in {-60, -40, ..., 240} {
                \node[draw, circle, inner sep=0pt, minimum size=10pt] (a\t) at (c.\t) {}; 
            }
    
            \node at (c.270) {\Large $\hdots$};
            \node at (0,0) {$w=4$};
        \end{tikzpicture} 
    \end{minipage}
    
    \caption{The graph $G$ (left) that we consider in our experiments showing that $\widehat{\TTE}_{\beta}$ has lower variance than the Horvitz--Thompson estimator. $G$ is the third power of a cycle on $n=840$ vertices, so each vertex has degree 7 (3 neighbors on each side and a self-loop). Two clusterings of $G$ that we consider are shown on the right. The value of $w$ controls the average cluster-degree $|\cC(\cN_i)|$. When $w=1$, we have that $|\cC(\cN_i)|=d_i=7$ for all $i$, while for $w=7$, the nodes in the center of each cluster have $|\cC(\cN_i)|=1$.}
    \label{fig:ring_clusters}
\end{figure}

We generate responses according to a $\beta^*$-order potential outcome model for $\beta^* \in \{1,2,3,4\}$. For each choice of $\beta^*$, we set the coefficients $c_{i, S}$ so that 
\begin{equation}
    c_{i, S}=\binom{d_i}{|S|}^{-1}2^{-|S|}.
    \label{eq:ring_responses}
\end{equation}
With this choice of $c_{i, S}$ the baseline effects are $c_{i, \emptyset}=1$, and each non-empty subset $S$ of size 1 has $c_{i, S}=1/(2d_i)$, so all subsets of size 1 together contribute $1/2$ to the treatment effect. Similarly, all subsets of size 2 together contribute $1/4$ to the treatment effect, and so on for larger subsets. This response model is thus consistent with the intuition that effects from higher-order subsets should be smaller than effects from lower-order subsets.

We consider a family of clusterings in which contiguous subsets of $w$ adjacent nodes in the cycle are assigned to the same cluster for $w \in \{1,2,3,4,5,6, 7, 8\}$ (here $w$ is the ``width'' of the cluster). Note that such clusterings only exist when $w$ is a divisor of the sample size, hence why we use a sample size of $n=\text{lcm}(1, 2, 3, 4, 5, 6, 7, 8)=840$ in this experiment. The choice of $w$ controls the cluster degree, $|\cC(\cN_i)|$ of each node\textemdash when $w=1$, we have $|\cC(\cN_i)|=7$ for all $i$, since all 7 neighbors are in different clusters. \juedit{As $w$ increases, the graph is divided into fewer, larger clusters. For larger $w$, there are then more nodes with smaller values of $|\cC(\cN_i)|$, since nodes in the interior of the cluster will be connected to fewer other clusters.} For example, when $w=7$, there are nodes in the center of each cluster with $|\cC(\cN_i)|=1$. See Figure~\ref{fig:ring_clusters} for an illustration of these clusterings with $w=2$ and $w=4$.

For each choice of $w$ and $\beta^*$ we estimate the total treatment effect using both the Horvitz--Thompson estimator and pseudoinverse estimators with $\beta=\beta^*$ with data from an experiment with a $\GCR(\cC, 0.25)$ design. For each estimator and each choice of parameters, we calculate the experimental mean-squared error (MSE) over 1000 trials. The results are plotted in Figure~\ref{fig:ring_results}.

\begin{figure}
    \centering
    \includegraphics[width = \textwidth]{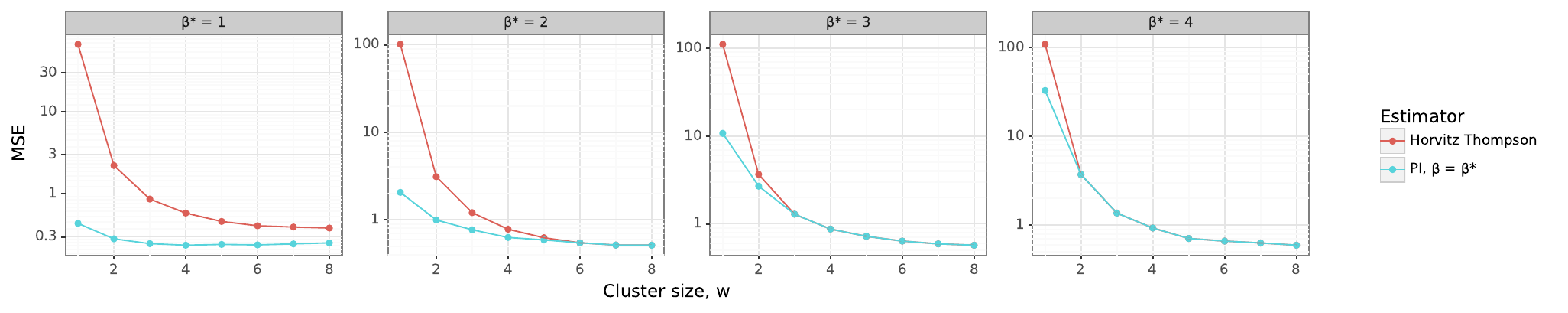}
    \caption{MSE of the Horvitz--Thompson and well-specified pseudoinverse estimators for different values of $\beta^*$ and $w$ averaged over 1000 trials. Note that the number of clusters is $m=n/w$. We see across both designs that (a) the pseudoinverse estimator consistently improves on the Horvitz--Thompson estimator; (b) this improvement is largest for small values of $w$ (which are poor clusterings of the graph); (c) the variance of the pseudoinverse is less sensitive to the quality of the clustering for small values of $\beta$.}
    \label{fig:ring_results}
\end{figure}

Figure~\ref{fig:ring_results} corroborates both of our theoretical claims about the variance of $\widehat{\TTE}_{\beta}$ under $\GCR$ designs. Across all values of $w$ and $\beta$, the variance of $\widehat{\TTE}_{\beta}$ is less than the variance of the Horvitz--Thompson estimator, as shown in Remark~\ref{rem:gcr_pi_better}. The gap between the two is largest when $\beta=1$, in which case the low-order assumption is quite strong and offers significant variance reduction, and decreases as $\beta$ increases (as the low-order assumption becomes weaker). Indeed, if we had $\beta=7$ (the degree of each node in this graph), $\widehat{\TTE}_{\beta}$ under the $w=1$ design (which is unit-randomization) would coincide exactly with the Horvitz--Thompson estimator, as shown by \citet{cortez2023exploiting}.

\cyedit{Figure~\ref{fig:ring_results} also supports the insight from Corollary~\ref{thm:pi_gcr} that the variance of $\widehat{\TTE}_{\beta}$ depends on the minimum of the degree $\beta$ and the cluster neighborhood sizes $|\cC(\cN_i)|$.}
In the left-most panel, where $\beta=1$, the variance of $\widehat{\TTE}_{\beta}$ is nearly constant in $w$. This arises since the $\beta=1$ assumption is always stronger than any choice of clustering, and so the variance does not vary significantly with the clustering. As $\beta$ increases, we see more dependence on the cluster size $w$, since for larger values of $w$ there are more units with smaller values of $|\cC(\cN_i)|$, and in particular more units with values of $|\cC(\cN_i)|$ smaller than $\beta$. As shown in Corollary~\ref{thm:pi_gcr}, these are the units for which the $\GCR$ design provides variance reduction beyond what the low-order assumption alone would.

\skedit{
\subsection{The Effect of Misspecification}\label{sec:expt_misspec}

In this section, we repeat the experiment of Section~\ref{subsec:gcr_exp} with misspecification. For each choice of $w$ and $\beta^*\in\{1,4\}$ we estimate the total treatment effect using both the pseudoinverse estimator with $\beta=1$ and $\beta=4$ with data from experiments using a $\GCR(\cC, 0.25)$ design. For each estimator and each choice of parameters, we calculate the experimental mean-squared error (MSE), bias, and variance over 1000 trials. The results are plotted in Figure~\ref{fig:ring_results_misspecified}.

\begin{figure}
    \centering
    \includegraphics[width = \textwidth]{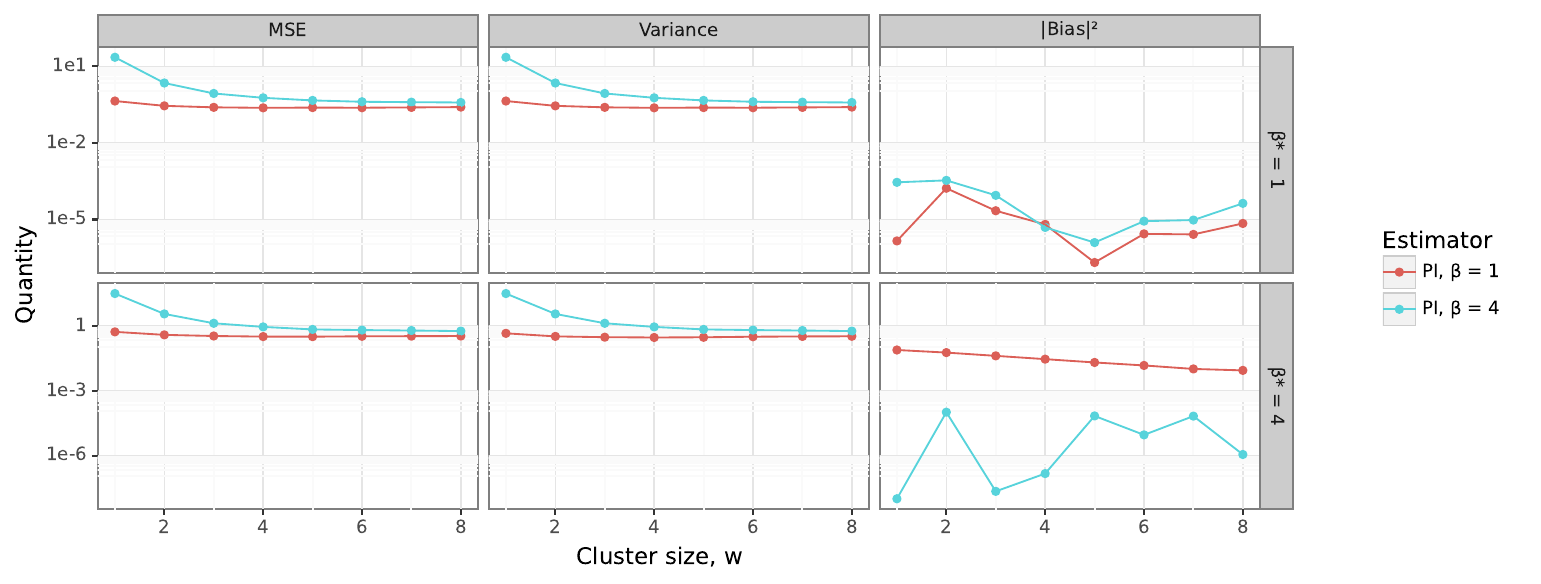}
    \caption{\skedit{Bias, variance, and MSE of the pseudoinverse estimator with $\beta=1$ and $\beta=4$ for different values of $w$ and $\beta^*\in\{1, 4\}$ averaged over 1000 trials. Note that the number of clusters is $m=n/w$. We see that the $\beta=1$ estimator has lower MSE in both settings: this is because the bias it incurs when $\beta^*=4$ is negligible compared to the additional variance incurred by the $\beta=4$ estimator.}}
    \label{fig:ring_results_misspecified}
\end{figure}

We first consider the case where we have $\beta^*=1$, so that the estimator with $\beta=1$ is well-specified and the estimator with $\beta=4$ overestimates $\beta$. In this case, we see that both estimators are unbiased, as expected, and that the $\beta=4$ estimator has higher variance, also as expected. Thus, we see that the $\beta=1$ estimator has lower MSE.

Next, we consider the case where $\beta^*=4$, so that the estimator with $\beta=1$ underestimates $\beta$ and the estimator with $\beta=4$ is well-specified. In this case, we see that the estimator with $\beta=4$ is unbiased, since it is well-specified, but that the estimator with $\beta=1$ has a small amount of bias. This is the bias incurred from not estimating the higher-order coefficients, but it is small because these coefficients are small relative to the lower-order coefficients by virtue of the specification in \eqref{eq:ring_responses}. However, the variance of the $\beta=4$ estimator is much larger than the variance of the $\beta=1$ estimator, and the scale of the variance in this setting is much larger than the scale of the variance. Thus, the bias from misspecification contributes negligibly to the MSE, and the $\beta=1$ estimator again has lower MSE. 

The fact that the $\beta=1$ estimator is preferable both when $\beta^*=1$ and $\beta^*=4$ suggests that the bias-cost of underestimating $\beta^*$ is much less than the variance-cost of overestimating $\beta$ when higher-order coefficients are small. This is consistent with the predictions made by Corollary~\ref{cor:gcr_bias_bound}, and what we expect from real-world data as well, leading us to recommend that practitioners use small values of $\beta$ such as $\beta=1$ or $\beta=2$ in applications. }

\subsection{Choosing From a Set of Clusters}
\label{subsec:choose_cluster}

Next we demonstrate how our bounds \juedit{from Section~\ref{sec:variance}} are useful in practice to select cluster assignments. We consider a setting where an analyst has a fixed interference graph and a set of possible candidate clusterings and would like to choose between them without any \textit{a priori} knowledge of the response model. We now describe each aspect of our experiment in greater detail.

\paragraph{Graphs.}

We consider four graphs, two synthetic and two taken from real-world data. The two synthetic graphs are drawn from a stochastic block model with $n=500$ units divided into $20$ communities. The edge probability within each community is $\pi_{ii}=0.5$ and the edge probability between communities is $\pi_{ij}=0$ for the first synthetic graph and $\pi_{ij}=0.2$ for the second synthetic graph. These two graphs represent a sparse graph of disconnected clusters (with an obvious choice of clustering), and a denser connected graph with a less obvious choice of clustering.

The first real-world graph we consider is taken from \citet{cai2015social}. Following \citet{viviano2023causal}, who also use this graph in their experiments, we connected two individuals by an edge if at least one of them indicates the other as a friend. This construction can connect individuals in a given village to other individuals outside of their village, since people may indicate a person outside of their village as a friend. We take the largest connected component of this graph, which contains $n=3677$ individuals with average degree $\bar{d}=6.25$, suggesting that it is relatively sparse.

The second real-world graph we consider is the Caltech Facebook friendship graph from \citet{traud2012social}, also previously used to study graph cluster randomized designs by \citet{ugander2023randomized}. We take the largest connected component of this graph, which removes a small number of isolated individuals. This leaves $n=762$ individuals with average degree $\bar{d}=44.7$, suggesting that this graph is relatively dense.

\paragraph{Response models.}

We generate responses according to an order $\beta^*=1$ model and use pseudoinverse estimators with $\beta=\beta^*=1$. We consider three response models with varying levels of interference and treatment effect sizes. To specify an order $\beta^*=1$ model requires specifying the baseline effect, $c_{i, \emptyset}$, and also each of the coefficients $c_{i, \{j\}}$ for $j\in \cN_i$. In all of our response models, we sample $c_{i, \emptyset}\sim N(0.5, 0.1)\cdot d_i/\dmax$ to introduce correlation between baseline effects and graph degree, a common feature in real-world data \citep{basse2018model}. The models differ in how we set $c_{i,\{j\}}$ for $j\in \cN_i$:

\begin{itemize}
	\item \emph{Null model, $\TTE_i=0$, for all $i$}: in this model, we set $c_{i, \{j\}}=0$ for all $j\in \cN_i$, so the treatment has no effect and there is no interference. The implied form of $Y_i(\bz)$ is $Y_i(\bz)=c_{i,\emptyset}$ for all $i$, $\bz$.
	\item \emph{Weak treatment effect model, $\TTE_i=1$, for all $i$}: in this model, we set $c_{i, \{j\}}=\frac{1}{2(d_i-1)}$ if $j\neq i$ and $c_{i, \{i\}}=1/2$. Thus, the total treatment effect for each individual, $c_i^\intercal\theta_i$, is 1, with half the effect coming when the individual is treated, and the other half from interference. The implied form of $Y_i(\bz)$ is 
 \begin{equation*}
 Y_i(\bz) = c_{i, \emptyset} + \frac{1}{2} z_i+\sum_{\substack{j\in \cN_i, j\neq i}} \frac{1}{2(d_i-1)}z_j.
 \end{equation*}
	\item \emph{Strong treatment effect model, $\TTE_i\propto d_i$}: in this model, we set $c_{i, \{j\}}=1/2$ if $j\neq i$, and $c_{i, \{i\}}=d_i/2$. Thus, the total treatment effect for each individual, $c_i^\intercal\theta_i$, is of order $d_i$, with roughly half the effect coming when the individual is treated, and the other half from interference. The implied form of $Y_i(\bz)$ is 
 \begin{equation*}
 Y_i(\bz) = c_{i, \emptyset} + \frac{d_i}{2}z_i+\sum_{\substack{j\in \cN_i, j\neq i}} \frac{1}{2}z_j.
 \end{equation*}
 
\end{itemize}

\paragraph{Clusterings.}

We generate the clusterings in our experiment using two different algorithms.
The first algorithm we consider is the Louvain algorithm \citep{blondel2008fast}, which has previously been shown to have good performance when designing graph cluster randomized experiments \citep{karrer2021network}. We vary the resolution parameter of the algorithm across the exponential grid $\{2^{-5}, 2^{-4},\cdots, 2^4, 2^5\}$ to generate 11 clusterings. Clusterings generated with small values of the resolution parameter tend to have one large cluster containing a majority of the nodes, and then several small clusters, while clusterings generated with larger values of the resolution parameter tend to have many small clusters of roughly equal size. 

The second algorithm we consider is the METIS algorithm, a balanced partitioning algorithm \citep{karypis1998fast}. This means that, unlike the Louvain algorithm, which is meant to detect communities, METIS returns $m$ clusters of approximately equal size for a user-specified value of $m$. We generate 10 METIS clusterings by varying $m$ across the grid $\{5, 10, 15,\cdots, 45, 50\}$.

Our goal in this experiment is to select from the set of 11 clusterings produced by the Louvain algorithm, as well as from the set of 10 clusterings produced by METIS. Note that we consider these two tasks separately, i.e., we consider first the problem of picking from the set of Louvain clusterings and second the problem of picking from the set of METIS clusterings. Our approach can easily be used to pick from the combined set as well, but we separate the two out to highlight that our approach works well within multiple families of clusterings.

We compute the variance bounds of Section~\ref{sec:variance} for a $\GCR$ design for each clustering, which are also bounds on the root-mean-squared-error (RMSE) because of our unbiasedness results, and then select the clustering with the lowest bound. When computing these bounds, we numerically evaluate the quadratic form $\theta_i^T\E[\tbz_i\tbz_i^\intercal]^{\dagger}\theta_i$ and then evaluate \eqref{eq:tte_var_bound}, rather than using the upper bounds on $\gamma_i$ that result in Corollary~\ref{thm:pi_gcr}. Those bounds were useful to obtain a simple interpretable bound, but can be loose when selecting a clustering. We refer to the chosen clustering as $\hat{\cC}^*$, since it approximates an oracle clustering $\cC^*$ defined below. Note that $\hat{\cC}^*$ does not depend on the response model. 

For each response model and clustering, we simulate $1000$ replications of a $\GCR(\cC, 0.25)$ design for each clustering $\cC$ to estimate the true RMSE, and identify the clustering with the true lowest RMSE, which we refer to as $\cC^*$. \skedit{We benchmark against $\cC^*$ since it represents the best that any clustering-selection procedure could hope to do, and thus is a stronger benchmark than any specific other procedure.}
Note that $\cC^*$ may vary from one response model to another since different designs will perform well under different response models. 
We evaluate our selection by computing 
\begin{equation}
	\mereplace{\frac{\text{Monte Carlo RMSE of}\GCR(\hat{\cC}^*, 0.2)}{\text{Monte Carlo RMSE of}\GCR(\cC^*, 0.2)},}{\frac{\text{Monte Carlo RMSE of}\GCR(\hat{\cC^*}, 0.25)}{\text{Monte Carlo RMSE of}\GCR({\cC}^*, 0.25)},}
	\label{eq:rmse_ratio}
\end{equation}
for each response model. This ratio will be close to 1 if we have selected a nearly optimal clustering for that response model and small if we have not. 

\begin{table}[t]
	\centering

	\begin{tabular}{lllll}
		\toprule
        &&\multicolumn{3}{c}{Response Model}\\
        \cmidrule{3-5}
		Clustering&Graph&{$\TTE_i=0$}&{$\TTE_i=1$}&{$\TTE\propto d_i$}\\
		\midrule
		&$\text{SBM}(0.5, 0)$&1.000&1.000&1.000\\
		&$\text{SBM}(0.5, 0.2)$&1.000&1.000&1.000\\
		&Cai et al.&1.000&0.780&0.839\\
\multirow{-4}{*}{Louvain}&Caltech&1.000&0.888&0.928\\[.1in]
&$\text{SBM}(0.5, 0)$&1.000&1.000&1.000\\ 
&$\text{SBM}(0.5, 0.2)$&0.844&0.837&0.824\\ 
&Cai et al.&1.000&1.000&1.000\\ 
\multirow{-4}{*}{METIS}&Caltech&1.000&1.000&1.000\\ 
\bottomrule
	\end{tabular}

	\caption{Ratio of RMSE when using clustering selected by minimizing an upper bound on the RMSE to RMSE when using oracle optimal clustering for each response model, graph, randomization strategy. We see that the chosen clustering is consistently very close to optimal.}
	\label{tab:cluster_selection}
\end{table}

\paragraph{Results.}
The ratios in \eqref{eq:rmse_ratio} for each of the settings are given in Table~\ref{tab:cluster_selection}. 
For the Louvain clusterings, we see that for the two stochastic block models, our method consistently selects optimal clusterings across response models. For the more realistic Cai et al.\ and Caletch graphs, our method continues to select nearly optimal clusterings. This slight decrease in performance is a consequence of different clusterings being favorable for different response models for the two real-world graphs\textemdash our chosen design is optimal for the $\TTE_i=0$ response model, but not the others. This trade-off is fundamental to the problem of experimental design and is not a drawback of our particular approach. 

We also see similarly strong results for the METIS clusterings, with our method selecting optimal clusterings across the board for all graphs except for the $\text{SBM}(0.5, 0.2)$, where we select slightly sub-optimal clusterings. As mentioned above, this reflects a trade-off inherent to the experimental design problem\textemdash our bounds optimize the worst-case RMSE over a class of response models, but this does not mean that the clusterings for which our bounds are small are optimal for all response models.

We further visualize the results of this experiment in Figure~\ref{fig:sbm_var_bounds}, which shows the empirical RMSE and theoretical RMSE bounds for two response models in each of the two stochastic block models as a function of the resolution parameter. 
In the $\text{SBM}(0.5, 0)$ graph, the Louvain algorithm identifies the correct clustering into communities for a wide range of resolution parameters, and our bounds correctly identify that this is the optimal clustering for both response models. In the $\text{SBM}(0.5, 0.2)$ graph, the optimal clustering for the $\GCR$ design under both response models is obtained for a small value of the resolution parameter, as our bound also correctly identifies. 

Taken together, Table~\ref{tab:cluster_selection} and Figure~\ref{fig:sbm_var_bounds} show that our bounds correctly capture the dependence of the RMSE when using a $\GCR(\cC, p)$ on the clustering $\cC$, and that using our bounds to select a clustering gives favorable results in practice.

\begin{figure}[t]
	\centering
	\includegraphics[width=\textwidth]{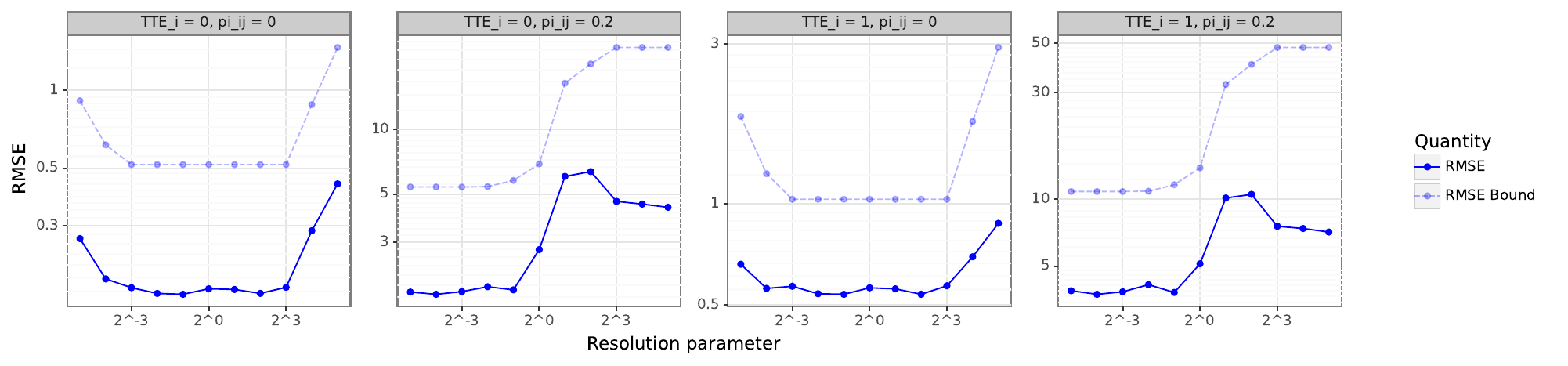}
	\caption{Visualization of our theoretical RMSE bounds (dashed) and actual RMSE in simulation (solid) across a range of Louvain algorithm resolution parameters for the $\text{SBM}(0.5, 0)$ and $\text{SBM}(0.5, 0.2)$ graphs under a $\GCR$ design and the $\TTE_i=1$ and $\TTE_i=0$ response models. We see that the qualitative shape of our bounds correctly reflects the dependence of the RMSE on the resolution parameter.}
	\label{fig:sbm_var_bounds}
\end{figure}

\subsection{Monte Carlo Estimation of the Design Matrix} 
\label{subsec:monte_carlo}

\skedit{In our final set of experiments, we verify that the pseudoinverses needed to compute our bounds can be numerically approximated given sampling access to the experimental design and do not need to be calculated analytically. This broadens the practical use cases of our result to include complex designs that may not admit simple analytical bounds.}

Throughout our work, we leverage analytic expressions for the design matrices to carry out our theoretical analysis. Here, we demonstrate that these closed-form expressions are not necessary to use pseudoinverse estimation in practice. Rather, it is sufficient to have sampling access to the randomized designs, which we can leverage to construct Monte Carlo estimates of the necessary joint treatment probabilities. This insight will be crucial for analyzing more sophisticated randomized designs, such as the randomized GCR designs of \citet{ugander2023randomized}, where it is unclear whether the design matrix entries, let alone the entries of the pseudoinverse, have a tractable closed form.

Suppose that $\mathcal{D}$ denotes the joint distribution over the treatment assignment vectors for a randomized design of interest. Then, given $R$ independent samples $\bz^1, \hdots, \bz^R \overset{iid}{\sim} \mathcal{D}$, and given any $S \subseteq [n]$, we may estimate 
\[
    \Pr_{\mathcal{D}} \Big( S \textrm{ fully treated} \Big) \approx \frac{1}{R} \sum_{r=1}^{R} \prod_{j \in s} z^r_j.
\]
The Law of Large Numbers ensures the convergence of these estimates as $R \to \infty$. By carrying out this process for each entry of each design matrix $\E \big[ \tbz_i \tbz_i^\intercal \big]$, we obtain an estimate matrix, which we may pseudoinvert numerically to obtain an approximation to our desired matrix.

Figure~\ref{fig:montecarlo} demonstrates the convergence of this Monte Carlo estimation for $\GCR$ designs on the Caltech Facebook friendship graph from \citet{traud2012social}. We use the METIS clustering with $m=10$ clusters, and we set the marginal treatment probability to $p=0.2$. The plot visualizes the Frobenius norm of the error between the true design matrix pseudoinverse and the pseudoinverse calculated by the Monte Carlo procedure described above. We vary the number of samples $R$ used for the Monte Carlo calculations. The blue line depicts the average Frobenius norm across the 762 vertices, and the shading depicts the standard deviation. The axes use a logarithmic scale.

\begin{figure}
\centering
    \includegraphics[width=0.4\linewidth]{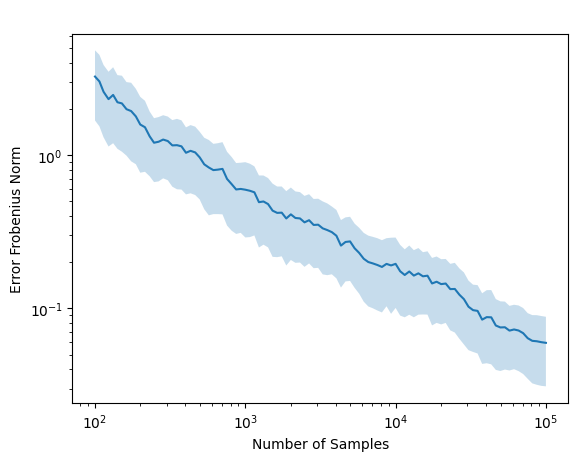} 
\caption{Monte Carlo error when estimating design matrices for $\GCR$ designs. We see that our estimates converge to the desired quantity in large sample sizes.}
\label{fig:montecarlo}
\end{figure}

The plot illustrates the decrease in approximation error as the number of samples used in the Monte Carlo probability calculations increases. With sufficient samples, it is clear that these probabilities can be computed to an arbitrary precision. 
\section{Conclusion}

We have presented an estimator for the total treatment effect ($\TTE$) in a low-order model, given bounds on its bias and variance under general experimental designs, and shown how those bounds can be used to obtain new theoretical results and inform the design of clustered experiments.

One possible future direction that builds directly on our results is to better understand the dependence of $\gamma_i$ in Theorem~\ref{thm:tte_var_bound} on the experimental design. The value of the quadratic form $\theta_i\E[\tbz_i\tbz_i^\intercal]^\dagger\theta_i$ will depend on the spectral properties of $\E[\tbz_i\tbz_i^\intercal]$, and further analyzing this dependence would shed further light on the variance properties of $\widehat{\TTE}_{\beta}$ under different designs. 

More generally, there are many natural extensions of the low-order model itself. For example, one could encode further constraints on the $\bc_i$, such as the fact that direct effects are typically larger than indirect effects or that higher-order effects are typically smaller than lower order effects, into the estimation procedure. Taking a step further back, one could also constrain the potential outcomes $Y_i(\bz)$ to lie in a different subspace by constraining the vectors $\bc_i$ to be ``low-order" in some other basis. We view the current work as a preliminary step in understanding the value of thinking of $Y_i(\bz)$ as a structured polynomial in $\bz$ \juedit{as well as the interaction of this thinking with other approaches to causal inference under interference such as clustered designs,} and are eager to see further development of this perspective. 

\paragraph{Acknowledgements.}This work was supported in part by NSF grant \#CNS-1955997, AFOSR grant \#FA9550-23-1-0301, and by NSF CAREER Award \#2143176. 

\bibliographystyle{plainnat}
\bibliography{refs}

\appendix
\section{Omitted Proofs}
\label{sec:proofs_pi_gcr}

\subsection*{Proof of Corollary~\ref{cor:gcr_bias_bound}}
\meedit{
\textit{Proof:}
    Recall that $\bx_i^{\beta^*} = \big( D_i^{\beta^*} \big)^\intercal \bc_i^{\beta^*}$, such that each entry of $\bx_i^{\beta^*}$ is the sum of entries of $\bc_i^{\beta^*}$. Therefore, $\|\bx_i^{>\beta}\|_1 \leq \|\bc_i^{ >\beta}\|_1$, and we can restrict our attention to the first inequality of \eqref{eq:gcr_bias_bound}. By the same argument as Corollary \ref{cor:gcr_unbiased}, under a Bernoulli graph cluster randomized design, it follows that the first term of Theorem \ref{thm:bias_bound_general} is zero because $\theta_i^\beta$ lies in the column space of $\E\big[\tbz_i^\beta \big(\tbz_i^\beta\big)^{\intercal}\big]$. Therefore, it suffices to bound 
\[
    \Big\langle \bc_i^{>\beta}, \E\big[\tbz_i^{>\beta} \big(\tbz_i^\beta\big)^{\intercal}\big]  \E\big[\tbz_i^\beta \big(\tbz_i^\beta\big)^{\intercal}\big]^\dagger \theta_i^\beta - \mathbf{1} \Big\rangle
\]
for each unit $i$. By substituting \eqref{eq:inverse_design_GCR}, $\theta_i^\beta = D_i^\beta \psi_i^\beta$, $\tbz_i^\beta = D_i^\beta \tbw_i^\beta$, $\tbz_i^{>\beta} = D_i^{>\beta} \tbw_i^{>\beta}$, and $\bx_i^{>\beta} = \big(D_i^{>\beta}\big)^\intercal \bc_i^{>\beta}$, we can recast this expression in terms of $\tbw_i$, $\bx_i$, and $\psi_i$:
\begin{align}
    \notag &= \Big\langle \bc_i^{>\beta}, \E\big[ D_i^{>\beta} \tbw_i^{>\beta} \big(D_i^\beta \tbw_i^\beta\big)^{\intercal}\big] D_i^\beta \Big( (D_i^\beta)^\intercal D_i^\beta \Big)^{-1} \E \big[ \tbw_i^\beta \big(\tbw_i^\beta\big)^\intercal \big]^\dagger \Big( (D_i^\beta)^\intercal D_i^\beta \Big)^{-1} (D_i^\beta)^\intercal D_i^\beta \psi_i^\beta - \mathbf{1} \Big\rangle \\
    \notag &= \Big\langle \bc_i^{>\beta}, D_i^{>\beta} \E\big[ \tbw_i^{>\beta} \big(\tbw_i^\beta\big)^{\intercal}\big] \E \big[ \tbw_i^\beta \big(\tbw_i^\beta\big)^\intercal \big]^\dagger \psi_i^\beta - \mathbf{1} \Big\rangle \\
    \notag &= \Big\langle \big(D_i^{>\beta}\big)^\intercal \bc_i^{>\beta}, \E\big[ \tbw_i^{>\beta} \big(\tbw_i^\beta\big)^{\intercal}\big] \E \big[ \tbw_i^\beta \big(\tbw_i^\beta\big)^\intercal \big]^\dagger \psi_i^\beta - \mathbf{1} \Big\rangle \\
    &= \Big\langle \bx_i^{>\beta}, \E\big[ \tbw_i^{>\beta} \big(\tbw_i^\beta\big)^{\intercal}\big] \E \big[ \tbw_i^\beta \big(\tbw_i^\beta\big)^\intercal \big]^\dagger \psi_i^\beta - \mathbf{1} \Big\rangle. \label{eq:w_inner_prod}
\end{align}

To analyze \eqref{eq:w_inner_prod}, we will begin by understanding the matrix product $\E\big[ \tbw_i^{>\beta} \big(\tbw_i^\beta\big)^{\intercal}\big] \E \big[ \tbw_i^\beta \big(\tbw_i^\beta\big)^\intercal \big]^\dagger$ in the second argument of the inner product. The entry corresponding to the pair of subsets $(\cU, \cV)$ in the matrix $\E\big[ \tbw_i^{>\beta} \big(\tbw_i^\beta\big)^{\intercal}\big]$ is $p^{|\cU \cup \cV|}$ under the $\GCR(\cC, p)$ design. The exact expression for the entry corresponding to the pair of subsets $(\cV, \cW)$ in the matrix $\E\big[ \tbw_i^\beta \big(\tbw_i^\beta\big)^\intercal \big]^\dagger$ is given by Lemma 1 of \citet{cortez2023exploiting}, as $\tbw_i$ can be analyzed as a Bernoulli unit randomized design where clusters are considered as units:
\begin{equation} \label{eq:pseudo_entry}
    \Big[ \E\big[ \tbw_i^\beta \big(\tbw_i^\beta\big)^\intercal \big]^\dagger \Big]_{\cV, \cW} 
    = \big( \tfrac{-1}{p} \big)^{|\cV| + |\cW|} \hspace{-8pt}\sum_{\substack{\cX \in \cC_i^\beta \\ (\cV \cup \cW) \subseteq \cX}} \hspace{-8pt} \big( \tfrac{p}{1-p} \big)^{|\cX|}
\end{equation}
We can use these expressions to simplify the matrix product entry $\Big[ \E\big[ \tbw_i^{>\beta} \big(\tbw_i^\beta\big)^{\intercal}\big] \E \big[ \tbw_i^\beta \big(\tbw_i^\beta\big)^\intercal \big]^\dagger \Big]_{\cU,\cW}$ as follows:
\begin{align}
    \notag &\phantom{=} \sum_{\cV \in \cC_i^\beta} p^{|\cU \cup \cV|} \big( \tfrac{-1}{p} \big)^{|\cV| + |\cW|} \hspace{-8pt}\sum_{\substack{\cX \in \cC_i^\beta \\ (\cV \cup \cW) \subseteq \cX}} \hspace{-8pt} \big( \tfrac{p}{1-p} \big)^{|\cX|} \\
    \notag &= \big( \tfrac{-1}{p} \big)^{|\cW|} \cdot p^{|\cU|} \sum_{\substack{\cX \in \cC_i^\beta \\ \cW \subseteq \cX}} \big( \tfrac{p}{1-p} \big)^{|\cX|} \sum_{\cV \subseteq \cX} p^{|\cV \setminus \cU|} \big( \tfrac{-1}{p} \big)^{|\cV|} \tag{reorder summations} \\
    \notag &= \big( \tfrac{-1}{p} \big)^{|\cW|} \cdot p^{|\cU|} \sum_{\substack{\cX \in \cC_i^\beta \\ \cW \subseteq \cX}} \big( \tfrac{p}{1-p} \big)^{|\cX|} \sum_{\cV_1 \subseteq \cX \cap \cU} \big( \tfrac{-1}{p} \big)^{|\cV_1|} \sum_{\cV_2 \subseteq \cX \setminus \cU} (-1)^{|\cV_2|} \tag{$\cV_1 := \cV \cap \cU, \cV_2 := \cV \setminus \cU$} \\
    \notag &= \big( \tfrac{-1}{p} \big)^{|\cW|} \cdot p^{|\cU|} \sum_{\substack{\cX \in \cC_i^\beta \\ \cW \subseteq \cX}} \big( \tfrac{p}{1-p} \big)^{|\cX|} \cdot \mathbb{I} \big( \cX \subseteq \cU \big) \sum_{\cV_1 \subseteq \cX} \big( \tfrac{-1}{p} \big)^{|\cV_1|} \tag{binomial theorem}  \\
    \notag &= \big( \tfrac{-1}{p} \big)^{|\cW|} \cdot p^{|\cU|} \sum_{\substack{\cX \in \cC_i^\beta \\ \cW \subseteq \cX}} \big( \tfrac{p}{1-p} \big)^{|\cX|} \cdot \mathbb{I} \big( \cX \subseteq \cU \big) \cdot \big( \tfrac{p-1}{p} \big)^{|\cX|} \tag{binomial theorem} \\
    &= \big( \tfrac{-1}{p} \big)^{|\cW|} \cdot p^{|\cU|} \sum_{\substack{\cX \in \cC_i^\beta \\ \cW \subseteq \cX \subseteq \cU}} \hspace{-4pt} (-1)^{|\cX|} \label{eq:mat_prod_term}
\end{align}

\noindent Right-multiplying this expression by $\psi_i^\beta$ sums the \eqref{eq:mat_prod_term} terms over all $\cW \in \cC_i^\beta$ with $\cW \ne \varnothing$, giving,
\begin{align}
    \sum_{\substack{\cW \in \cC_i^\beta \\ \cW \ne \varnothing }} \big( \tfrac{-1}{p} \big)^{|\cW|} \cdot p^{|\cU|} \sum_{\substack{\cX \in \cC_i^\beta \\ \cW \subseteq \cX \subseteq \cU}} \hspace{-4pt} (-1)^{|\cX|} \hspace{4pt}
    &= p^{|\cU|} \sum_{\substack{\cX \in \cC_i^\beta \\ \cX \subseteq \cU}} (-1)^{|\cX|} \sum_{\substack{\cW \subseteq \cX \\ \cW \ne \varnothing }} \big( \tfrac{-1}{p} \big)^{|\cW|}  \tag{reorder summations} \\
    \notag &= p^{|\cU|} \sum_{\substack{\cX \in \cC_i^\beta \\ \cX \subseteq \cU}} (-1)^{|\cX|} \Big[ \big( \tfrac{p-1}{p} \big)^{|\cX|} - 1 \Big] \tag{binomial theorem} \\
    &= p^{|\cU|} \sum_{\substack{\cX \in \cC_i^\beta \\ \cX \subseteq \cU}} \Big[ \big( \tfrac{1-p}{p} \big)^{|\cX|} - (-1)^{|\cX|} \Big] \label{eq:mat_times_psi_1}.
\end{align}
For all $0 < p \leq \tfrac{1}{2}$, $\frac{1-p}{p} \geq 1$, such that the bracketed expression in \eqref{eq:mat_times_psi_1} is non-negative. This allows us to lower-bound \eqref{eq:mat_times_psi_1} by $0$. Additionally, we can upper-bound  \eqref{eq:mat_times_psi_1} by expanding the sum to include all subsets of $\cU$ and using the binomial theorem to get
\[
    p^{|\cU|} \sum_{\substack{\cX \subseteq \cU}} \Big[ \big( \tfrac{1-p}{p} \big)^{|\cX|} - (-1)^{|\cX|} \Big] = 1.
\]
Plugging these bounds back into \eqref{eq:w_inner_prod}, we can bound the bias
\[
    \sum_{\substack{\cU \in \cC_i^{\beta^*} \\ |\cU| > \beta}} - x_{i,\cU} \cdot \mathbb{I}\big( x_{i,\cU} > 0 \big)
    \quad \leq \quad \E[\widehat{\TTE}] - \TTE \quad \leq \quad 
    \sum_{\substack{\cU \in \cC_i^{\beta^*} \\ |\cU| > \beta}} - x_{i,\cU} \cdot \mathbb{I}\big( x_{i,\cU} < 0 \big).
\]
This allows us to bound the magnitude of the bias by the 1-norm
\[
    |\E[\widehat{\TTE}] - \TTE| \leq \|\bx_i^{>\beta}\|_1.
\]
}
\cyedit{Note that \eqref{eq:mat_times_psi_1} is not only bounded within 0 and 1, but it is also only a function of the size of the set $\cU$, and not the identity of $\cU$. As a result, a slightly tighter upper bound would be
\[
    |\E[\widehat{\TTE}] - \TTE| 
    \leq \sum_{\ell=\beta+1}^{\beta^*} \Bigg|\sum_{\substack{\cU \in \cN_i \\ |\cU| = \ell}} x_{i,\cU}\Bigg|.
\]
This is upper bounded by $\|\bx_i^{>\beta}\|_1$ but could be smaller when $x_{i,\cU}$ may have different signs for sets $\cU$ of the same cardinality.
}
\hfill
\qedsymbol

\subsection*{Proof of Lemma~\ref{lem:gamma_bound}}

We will make use of the following computational lemma in our proof.

\begin{lemma} \label{lem:quadratic_form_sum}
    Suppose that $\bw_i \in \{0,1\}^{|\cC(\cN_i)|} \sim \textrm{Bern}(p)$ and $\psi_i^\beta \in \{0,1\}^{|\cC_i^\beta|}$ is the vector with 0 in its first entry (corresponding to $\cU = \varnothing$) and 1 in all other entries. Then,
    \[
        \big(\psi_i^\beta\big)^\intercal \E\big[\tbw_i^\beta(\tbw_i^\beta)^\intercal\big]^{\dagger} \psi_i^\beta = 
        \sum_{x=1}^{\min\big(\beta,|\cC(\cN_i)|\big)} \binom{|\cC(\cN_i)|}{x} \bigg[ \Big( \frac{1-p}{p} \Big)^x - 2(-1)^x + \Big( \frac{p}{1-p} \Big)^x \bigg]. 
    \]
\end{lemma}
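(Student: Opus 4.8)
The plan is to substitute the closed-form expression for the pseudoinverse entries and simplify the resulting combinatorial sum, with no genuinely hard step. Since $\tbw_i^\beta$ follows a Bernoulli design over the $|\cC(\cN_i)|$ clusters (treated as units), the entries of $\E\big[\tbw_i^\beta(\tbw_i^\beta)^\intercal\big]^{\dagger}$ are given exactly by \eqref{eq:pseudo_entry} (Lemma 1 of \citet{cortez2023exploiting}). Write $K := |\cC(\cN_i)|$ for brevity. Because $\psi_i^\beta$ has a $1$ in every coordinate except the one indexed by $\cU = \varnothing$, the quadratic form expands as a double sum over nonempty cluster subsets,
\[
    \big(\psi_i^\beta\big)^\intercal \E\big[\tbw_i^\beta(\tbw_i^\beta)^\intercal\big]^{\dagger} \psi_i^\beta = \sum_{\substack{\cV, \cW \in \cC_i^\beta \\ \cV, \cW \ne \varnothing}} \Big[ \E\big[\tbw_i^\beta(\tbw_i^\beta)^\intercal\big]^{\dagger} \Big]_{\cV, \cW}.
\]

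First I would substitute \eqref{eq:pseudo_entry} and reorder the summations so that the outer index $\cX$ is summed first. The key observation is that the constraint $(\cV \cup \cW) \subseteq \cX$ is equivalent to the conjunction $\cV \subseteq \cX$ and $\cW \subseteq \cX$, so the inner double sum factors into the square of a single sum:
\[
    = \sum_{\cX \in \cC_i^\beta} \Big( \tfrac{p}{1-p} \Big)^{|\cX|} \Bigg( \sum_{\substack{\cV \subseteq \cX \\ \cV \ne \varnothing}} \Big( \tfrac{-1}{p} \Big)^{|\cV|} \Bigg)^{\!2}.
\]
Next I would evaluate the inner sum by the binomial theorem: $\sum_{\cV \subseteq \cX} (-1/p)^{|\cV|} = \big((p-1)/p\big)^{|\cX|}$, so subtracting the $\cV = \varnothing$ term gives $\big((p-1)/p\big)^{|\cX|} - 1$. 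Expanding the square and applying the identities $\tfrac{p}{1-p}\cdot\big(\tfrac{p-1}{p}\big)^2 = \tfrac{1-p}{p}$ and $\tfrac{p}{1-p}\cdot\tfrac{p-1}{p} = -1$ term by term, each summand collapses to
\[
    \Big( \tfrac{p}{1-p} \Big)^{|\cX|}\Big[ \big(\tfrac{p-1}{p}\big)^{|\cX|} - 1 \Big]^2 = \Big( \tfrac{1-p}{p} \Big)^{|\cX|} - 2(-1)^{|\cX|} + \Big( \tfrac{p}{1-p} \Big)^{|\cX|}.
\]

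Finally I would group the sum over $\cX \in \cC_i^\beta$ by cardinality: there are $\binom{K}{x}$ subsets of size $x$ for $0 \le x \le \min(\beta, K)$, and the $x = 0$ (i.e.\ $\cX = \varnothing$) summand equals $1 - 2 + 1 = 0$, so the sum may start at $x = 1$, yielding exactly the claimed expression. I do not anticipate a real obstacle, as this is a computation rather than a theorem with a hidden idea. The one step requiring care is the factorization of the double sum, which rests on recognizing that $\cV \cup \cW \subseteq \cX \iff (\cV \subseteq \cX \text{ and } \cW \subseteq \cX)$; everything downstream is routine algebra. I would double-check the sign bookkeeping when expanding the square, since $(p-1)/p$ and $(1-p)/p$ differ by a sign that survives only in the odd-power middle term.
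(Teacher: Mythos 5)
Your proposal is correct, and it reaches the result by a cleaner route than the paper's own proof. Both arguments start identically---substitute the entrywise pseudoinverse formula \eqref{eq:pseudo_entry}, treating clusters as Bernoulli-randomized units, and reorder so that $\cX$ is the outer index---but they diverge at the middle step. The paper parametrizes each pair $(\cV,\cW)$ inside $\cX$ by the sizes $a=|\cV\cap\cW|$, $b=|\cV\setminus\cW|$, $c=|\cW\setminus\cV|$, splits into the two cases $a=0$ (with $b,c\geq 1$) and $a\geq 1$, and grinds through several applications of the binomial theorem before recombining. You instead observe that the constraint $(\cV\cup\cW)\subseteq\cX$ decouples as $\cV\subseteq\cX$ and $\cW\subseteq\cX$, so the inner double sum is the perfect square $\bigl(\sum_{\varnothing\neq\cV\subseteq\cX}(-1/p)^{|\cV|}\bigr)^2 = \bigl[\bigl(\tfrac{p-1}{p}\bigr)^{|\cX|}-1\bigr]^2$, after which a single expansion using $\tfrac{p}{1-p}\bigl(\tfrac{p-1}{p}\bigr)^2=\tfrac{1-p}{p}$ and $\tfrac{p}{1-p}\cdot\tfrac{p-1}{p}=-1$ yields each summand $\bigl(\tfrac{1-p}{p}\bigr)^{|\cX|}-2(-1)^{|\cX|}+\bigl(\tfrac{p}{1-p}\bigr)^{|\cX|}$ directly; your checks that the $x=0$ term vanishes ($1-2+1=0$) and that grouping by $|\cX|=x$ gives $\binom{|\cC(\cN_i)|}{x}$ terms up to $\min(\beta,|\cC(\cN_i)|)$ are both right. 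Your factorization buys a shorter proof with no case analysis and much less sign bookkeeping; what it exploits is the symmetry of the quadratic form in $\cV$ and $\cW$ (the same vector $\psi_i^\beta$ on both sides), and incidentally it makes manifest that each $\cX$-summand is nonnegative, a fact the paper must re-derive separately in the proof of Lemma~\ref{lem:gamma_bound}. The paper's intersection-pattern bookkeeping is more mechanical but would extend to asymmetric bilinear forms $\phi^\intercal\E[\tbw_i\tbw_i^\intercal]^\dagger\psi$, where your square trick is unavailable.
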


\vspace{10pt}
\textit{Proof:} 
    Using \eqref{eq:pseudo_entry}, we may simplify this quadratic form:
    \[
        \big(\psi_i^\beta\big)^\intercal \E\big[\tbw_i^\beta(\tbw_i^\beta)^\intercal\big]^{\dagger} \psi_i^\beta = \sum_{\substack{\cV \in \cC_i^\beta \\ \cV \ne \varnothing}} \sum_{\substack{\cW \in \cC_i^\beta \\ \cW \ne \varnothing}} \big( \tfrac{-1}{p} \big)^{|\cV| + |\cW|} \hspace{-8pt}\sum_{\substack{\cX \in \cC_i^\beta \\ (\cV \cup \cW) \subseteq \cX}} \hspace{-8pt} \big( \tfrac{p}{1-p} \big)^{|\cX|} 
    \]
    Let us define $x = |\cX|$, $a = |\cV \cap \cW|$, $b = |\cV \setminus \cW|$, and $c = |\cW \setminus \cV|$. Note that $a+b = |\cV|$ and $a+c = |\cW|$. Finally, for ease of notation, let us define $m = \min\big( \beta,|\cC(\cN_i)| \big)$. Then, we can rewrite our quadratic form:
    \[
        \sum_{x=1}^{m} \binom{|\cC(\cN_i)|}{x} \Big( \frac{p}{1-p} \Big)^u \sum_{a=0}^{x} \binom{x}{a} \sum_{b=0}^{x-a} \binom{x-a}{b} \sum_{c=0}^{x-a-b} \binom{x-a-b}{c} \; \Ind\Big(\begin{matrix}a+b \geq 1 \\ a+c \geq 1 \end{matrix}\Big) \Big( \frac{-1}{p} \Big)^{2a+b+c}. 
    \]
    There are two possible (disjoint) cases in which this indicator is non-zero. 

    \textbf{Case 1:} $a=0$; $b,c \geq 1$

    In this case, we may further simplify the quadratic form:
    \begin{align*}
        &\phantom{=} \sum_{x=1}^{m} \binom{|\cC(\cN_i)|}{x} \Big( \frac{p}{1-p} \Big)^x \sum_{b=1}^{x} \binom{x}{b} \Big( \frac{-1}{p} \Big)^b \; \sum_{c=1}^{x-b} \binom{x-b}{c} \Big( \frac{-1}{p} \Big)^c \\
        &= \sum_{x=1}^{m} \binom{|\cC(\cN_i)|}{x} \Big( \frac{p}{1-p} \Big)^x \sum_{b=1}^{x} \binom{x}{b} \Big( \frac{-1}{p} \Big)^b \bigg[ \Big(\frac{p-1}{p}\Big)^{x-b} -1 \bigg] \tag{binomial theorem} \\
        &= \sum_{x=1}^{m} \binom{|\cC(\cN_i)|}{x} \Big( \frac{p}{1-p} \Big)^x \bigg[ \Big(\frac{p-2}{p}\Big)^x - 2\Big(\frac{p-1}{p}\Big)^x + 1 \bigg] \tag{binomial theorem} 
    \end{align*}

    \textbf{Case 2:} $a \geq 1$

    In this case, we may further simplify the quadratic form:

    \begin{align*}
        &\phantom{=} \sum_{x=1}^{m} \binom{|\cC(\cN_i)|}{x} \Big( \frac{p}{1-p} \Big)^x \sum_{a=1}^{x} \binom{x}{a} \Big( \frac{1}{p^2} \Big)^a \sum_{b=0}^{x-a} \binom{x-a}{b} \Big( \frac{-1}{p} \Big)^b \sum_{c=0}^{x-a-b} \binom{x-a-b}{c} \Big( \frac{-1}{p}\Big)^c \\
        &= \sum_{x=1}^{m} \binom{|\cC(\cN_i)|}{x} \Big( \frac{p}{1-p} \Big)^x \sum_{a=1}^{x} \binom{x}{a} \Big( \frac{1}{p^2} \Big)^a \sum_{b=0}^{x-a} \binom{x-a}{b} \Big( \frac{-1}{p} \Big)^b \Big(\frac{p-1}{p}\Big)^{x-a-b} \tag{binomial theorem} \\
        &= \sum_{x=1}^{m} \binom{|\cC(\cN_i)|}{x} \Big( \frac{p}{1-p} \Big)^x \sum_{a=1}^{x} \binom{x}{a} \Big( \frac{1}{p^2} \Big)^a \Big(\frac{p-2}{p}\Big)^{x-a} \tag{binomial theorem} \\
        &= \sum_{x=1}^{m} \binom{|\cC(\cN_i)|}{x} \Big( \frac{p}{1-p} \Big)^x \bigg[ \Big( \frac{p-1}{p} \Big)^{2x} - \Big(\frac{p-2}{p}\Big)^x \bigg] \tag{binomial theorem}
    \end{align*}

    \vspace{10pt}
    Combining the two cases, our quadratic form simplifies to 
    \begin{align*}
        &\phantom{=} \sum_{x=1}^{m} \binom{|\cC(\cN_i)|}{x} \Big( \frac{p}{1-p} \Big)^x \bigg[ \Big( \frac{p-1}{p} \Big)^{2x} - 2\Big(\frac{p-1}{p}\Big)^x + 1 \bigg] \\
        &= \sum_{x=1}^{m} \binom{|\cC(\cN_i)|}{x} \bigg[ \Big( \frac{1-p}{p} \Big)^x - 2(-1)^x + \Big( \frac{p}{1-p} \Big)^x \bigg].
    \end{align*}
\hfill
\qedsymbol

Now, we are ready to prove Lemma~\ref{lem:gamma_bound}.

\vspace{10pt}
\noindent \textit{Proof of Lemma~\ref{lem:gamma_bound}:}
      \cyedit{We may use \eqref{eq:inverse_design_GCR} and the substitution $\theta_i^\beta = D_i^\beta \psi_i^\beta$, to rewrite
    \begin{align*}
        \gamma_i^2 
        &= (\theta_i^\beta)^\intercal \E\big[\tbz_i^\beta (\tbz_i^\beta)^\intercal\big]^\dagger \theta_i^\beta \\
        &= \Big[ (\psi_i^\beta)^\intercal (D_i^\beta)^\intercal \Big] \Big[ D_i^\beta \big( (D_i^\beta)^\intercal D_i^\beta \big)^{-1} \E\big[\tbw_i^\beta (\tbw_i^\beta)^\intercal\big]^\dagger \big( (D_i^\beta)^\intercal D_i^\beta \big)^{-1} \Big] \Big[ (D_i^\beta)^\intercal D_i^\beta \psi_i^\beta \Big] \\
        &= (\psi_i^\beta)^\intercal \E\big[\tbw_i^\beta (\tbw_i^\beta)^\intercal\big]^\dagger \psi_i^\beta \\
        &= \sum_{x=1}^{\min\big(\beta,|\cC(\cN_i)|\big)} \binom{|\cC(\cN_i)|}{x} \bigg[ \Big( \frac{1-p}{p} \Big)^x - 2(-1)^x + \Big( \frac{p}{1-p} \Big)^x \bigg]. \tag{Lemma~\ref{lem:quadratic_form_sum}}
    \end{align*}

    First, we can show that $\Big( \frac{1-p}{p} \Big)^x - 2(-1)^x + \Big( \frac{p}{1-p} \Big)^x$ is always nonzero for $x \geq 1$. This follows from verifying that $\Big( \frac{1-p}{p} \Big)^x + \Big( \frac{p}{1-p} \Big)^x$ is minimized at $p=\frac12$. In particular the partial derivative with respect to $p$ stated below is always negative for $p \in [0,\frac12)$, positive for $p \in (\frac12,1]$, and zero at $p=\frac12$,
    \[
        \frac{\partial}{\partial p} \left[\Big( \frac{1-p}{p} \Big)^x + \Big( \frac{p}{1-p} \Big)^x\right]
        = \frac{x}{p(1-p)} \left(\Big( \frac{p}{1-p}\Big)^x - \Big(\frac{1-p}{p}\Big)^x \right).
    \]
    As a result, by plugging in $p=\frac12$ we can verify that 
    \[\Big( \frac{1-p}{p} \Big)^x - 2(-1)^x + \Big( \frac{p}{1-p} \Big)^x
    \geq 2 - 2(-1)^x \geq 0.\]
    Since each term of the summation is thus nonnegative, we can upper bound $\gamma_i^2$ by
    \[\gamma_i^2 \leq \min\left(\sum_{x=1}^{|\cC(\cN_i)|} \binom{|\cC(\cN_i)|}{x} \bigg[ \Big( \frac{1-p}{p} \Big)^x - 2(-1)^x + \Big( \frac{p}{1-p} \Big)^x \bigg], \sum_{x=1}^{\beta} \binom{|\cC(\cN_i)|}{x} \bigg[ \Big( \frac{1-p}{p} \Big)^x - 2(-1)^x + \Big( \frac{p}{1-p} \Big)^x \bigg]\right).\]
    }

    From here, we bound the two expressions in the above $\min$. By applying the binomial theorem, we can show the first expression is bounded by
    \begin{align*}
    \sum_{x=1}^{|\cC(\cN_i)|} \binom{|\cC(\cN_i)|}{x} \left[\left(\frac{p}{1-p}\right)^x - 2(-1)^x+\left(\frac{1-p}{p}\right)^x\right] 
        =\left(\frac{1}{1-p}\right)^{|\cC(\cN_i)|}+\left(\frac{1}{p}\right)^{|\cC(\cN_i)|} 
        \leq 2 \, p^{-|\cC(\cN_i)|}.
    \end{align*}

    \meedit{Next, we bound the second expression. WLOG, we may assume that $0 \leq p \leq \frac{1}{2}$; otherwise we may substitute $1 - p \mapsto p$ since $\gamma_i$ is symmetric with respect to $p$ and $1-p$. From here, we note that we can bound
    \begin{equation} \label{eq:p_power_ineq}
        \left( \frac{p}{1-p} \right)^x + \left( \frac{1-p}{p} \right)^x \leq p^{-x},
    \end{equation}
    for all $0 \leq p \leq \frac{1}{2}$ and $x \geq 1$. This is because the function
    \[
        f(\alpha) = \left( \frac{\alpha}{1-p} \right)^x + \left( \frac{1-\alpha}{p} \right)^x
    \]
    is decreasing on the range $0 \leq \alpha \leq p$ since 
    \[
        \frac{\partial f}{\partial \alpha} = \frac{x}{1-p} \left( \frac{\alpha}{1-p} \right)^{x-1} - \frac{x}{p} \left( \frac{1-\alpha}{p} \right)^{x-1} \leq 0.
    \]
    Note that the first term of this derivative is upper bounded by $2x$ and the second term is upper bounded by $-2x$. Thus, it attains its maximum on this range at $\alpha = 0$. We can use \eqref{eq:p_power_ineq} to bound
    \begin{align*}
        \gamma_i &\leq \sum_{x=1}^{\beta} \binom{|\cC(\cN_i)|}{x} \big[ 2 + p^{-x} \big] \\
        &\leq 2p^{-\beta} \sum_{x=1}^{\beta} \binom{|\cC(\cN_i)|}{x} \tag{$2 \leq p^{-1} \leq p^{-x} \leq p^{-\beta}$} \\
        &\leq 2 \, |\cC(\cN_i)|^{\beta} \cdot p^{-\beta} . 
    \end{align*}
    For the final inequality, note that mapping the  $|\cC(\cN_i)|^\beta$ ordered sequences of length $\beta$ with elements from $\cC(\cN_i)$ to their sets of unique elements is a surjection onto the non-empty subsets of $\cC(\cN_i)$ with size at most $\beta$. Hence, $|\cC(\cN_i)|^{\beta} \geq \sum_{x=1}^{\beta} \binom{|\cC(\cN_i)|}{x}$.}
\hfill
\qedsymbol

\subsection*{Proof of Remark~\ref{rem:gcr_pi_better}}

Next, we turn to proving Remark~\ref{rem:gcr_pi_better}, which requires first deriving the closed form of the Horvitz--Thompson estimators under a $\GCR$ design.

\begin{lemma} \label{lem:ht_gcr_est_form}
    Given an interference network on $n$ individuals and a clustering $\cC$ of these individuals, suppose that we sample treatment assignments according to a graph cluster randomized design $\bz \sim \emph{GCR}(\cC,p)$. Then, the Horvitz-Thompson estimator for $\TTE$ is given by the formula
    \[
        \widehat{\TTE}_{\emph{HT}} = \tfrac{1}{n} \sum_{i=1}^{n} \; Y_i(\bz) \sum_{\cU \subseteq \cC(\cN_i)} \bigg[ \big( 1-p \big)^{|\cU|} - \big( -p \big)^{|\cU|} \bigg] \prod_{C \in \cU} \tfrac{w_C - p}{p(1-p)}.
    \]
\end{lemma}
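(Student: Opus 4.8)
The plan is to start from the definition of the Horvitz--Thompson estimator in \eqref{eq:tte_ht} and use the defining structure of a $\GCR(\cC,p)$ design---that each unit inherits its cluster's treatment, $z_j = w_{\cC(j)}$---to collapse the neighborhood-exposure events onto the clusters touching $\cN_i$. First I would observe that the event that all of $\cN_i$ is treated is exactly the event that every cluster in $\cC(\cN_i)$ is treated, so $\Ind\big(\bigcap_{j \in \cN_i} z_j = 1\big) = \prod_{C \in \cC(\cN_i)} w_C$, and by the mutual independence of the cluster assignments, $\Pr\big(\bigcap_{j \in \cN_i} z_j = 1\big) = p^{|\cC(\cN_i)|}$. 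Symmetrically, $\Ind\big(\bigcap_{j \in \cN_i} z_j = 0\big) = \prod_{C \in \cC(\cN_i)}(1 - w_C)$ with probability $(1-p)^{|\cC(\cN_i)|}$. Substituting these four quantities into \eqref{eq:tte_ht} rewrites the per-unit weight attached to $Y_i(\bz)$ as
\[
    \frac{\prod_{C \in \cC(\cN_i)} w_C}{p^{|\cC(\cN_i)|}} - \frac{\prod_{C \in \cC(\cN_i)}(1 - w_C)}{(1-p)^{|\cC(\cN_i)|}}.
\]

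Next I would expand both products in the centered variables $w_C - p$, writing $w_C = p + (w_C - p)$ and $1 - w_C = (1-p) - (w_C - p)$ and distributing each product over subsets $\cU \subseteq \cC(\cN_i)$. This gives $\prod_{C \in \cC(\cN_i)} w_C = \sum_{\cU} p^{|\cC(\cN_i)| - |\cU|} \prod_{C \in \cU}(w_C - p)$ and, with an extra sign from each factor, $\prod_{C \in \cC(\cN_i)}(1 - w_C) = \sum_{\cU} (1-p)^{|\cC(\cN_i)| - |\cU|}(-1)^{|\cU|} \prod_{C \in \cU}(w_C - p)$. Dividing by the respective normalizers $p^{|\cC(\cN_i)|}$ and $(1-p)^{|\cC(\cN_i)|}$ cancels the ambient-size powers and reduces the per-unit weight to
\[
    \sum_{\cU \subseteq \cC(\cN_i)} \Big[ p^{-|\cU|} - \big(\tfrac{-1}{1-p}\big)^{|\cU|} \Big] \prod_{C \in \cU}(w_C - p).
\]

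Finally I would check that this subset coefficient agrees with the target. Pulling a factor of $(p(1-p))^{-|\cU|}$ out of the claimed summand $\big[(1-p)^{|\cU|} - (-p)^{|\cU|}\big]\prod_{C \in \cU}\tfrac{w_C - p}{p(1-p)}$ produces the coefficient $(1-p)^{|\cU|}(p(1-p))^{-|\cU|} - (-p)^{|\cU|}(p(1-p))^{-|\cU|}$, and since the first term equals $p^{-|\cU|}$ and the second equals $\big(\tfrac{-1}{1-p}\big)^{|\cU|}$, the two forms coincide termwise. Summing over $i$ and dividing by $n$ then yields the stated identity. This argument is essentially a change of basis, so there is no deep obstacle; the only step demanding care is the bookkeeping of signs in the two binomial expansions---in particular the $(-1)^{|\cU|}$ arising from the control product---and confirming the two expressions for the subset coefficient match. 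As a sanity check one can note that the $\cU = \varnothing$ term vanishes in both forms (its coefficient is $1 - 1 = 0$), consistent with the fact that a difference of inverse-probability weights carries no constant term.
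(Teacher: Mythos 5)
Your proposal is correct and is essentially the paper's own proof run in reverse: the paper starts from the claimed subset-sum form and collapses it via distributivity to $\prod_{C \in \cC(\cN_i)} \tfrac{w_C}{p} - \prod_{C \in \cC(\cN_i)} \tfrac{1-w_C}{1-p}$, while you expand the Horvitz--Thompson products in the centered variables $w_C - p$ to arrive at the subset sum --- the same distributive identity read in the opposite direction, with the same coefficient bookkeeping. All of your intermediate steps (the indicator identities under $\GCR(\cC,p)$, the two binomial expansions, and the termwise coefficient match) check out.
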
 

\begin{proof}
    We have,
    \begin{align*}
        \sum_{\cU \subseteq \cC(\cN_i)} &\bigg[ \big( 1-p \big)^{|\cU|} - \big( -p \big)^{|\cU|} \bigg] \prod_{C \in \cU} \tfrac{w_C - p}{p(1-p)} \\
        &= \sum_{\cU \subseteq \cC(\cN_i)} \prod_{C \in \cU} \tfrac{w_C - p}{p} - \sum_{\cU \subseteq \cC(\cN_i)} \prod_{C \in \cU} \tfrac{p-w_C}{1-p} \\
        &= \prod_{C \in \cC(\cN_i)} \tfrac{w_C}{p} - \prod_{C \in \cC(\cN_i)} \tfrac{1-w_C}{1-p} \tag{distributivity} \\
        &= \frac{\Ind \big( \cN_i \textrm{ entirely treated} \big)}{\Pr \big( \cN_i \textrm{ entirely treated} \big)} - \frac{\Ind \big( \cN_i \textrm{ entirely untreated} \big)}{\Pr \big( \cN_i \textrm{ entirely untreated} \big).}
    \end{align*}
\end{proof}

With this closed form in hand, we can prove Remark~\ref{rem:gcr_pi_better}.

\gcrhtbound*

\begin{proof}
    For ease of notation, let us define the function $g \colon \mathbb{N} \to [0,1]$ with $g(n) = (1-p)^n - (-p)^n$. Note that $0 \leq g(n) \leq 1$ for all $n \in \mathbb{N}$ and all $0 \leq p \leq \frac{1}{2}$. By Lemmas \ref{lem:pi_gcr_est_form} and \ref{lem:ht_gcr_est_form}, we may represent
    \begin{align*}
        \widehat{\TTE}_{\beta} &= \tfrac{1}{n} \sum_{i=1}^{n} \; Y_i(\bz) \sum_{\cU \in \cC_i^\beta} g(|\cU|) \prod_{C \in \cU} \tfrac{w_C - p}{p(1-p)}, \\
        \widehat{\TTE}_{\textrm{HT}} &= \tfrac{1}{n} \sum_{i=1}^{n} \; Y_i(\bz) \sum_{\cU \subseteq \cC(\cN_i)} g(|\cU|) \prod_{C \in \cU} \tfrac{w_C - p}{p(1-p)}, \\
        &= \cyedit{\widehat{\TTE}_{\beta} +  \tfrac{1}{n} \sum_{i=1}^{n} \; Y_i(\bz) \sum_{\cU \in \cC_i^{>\beta}} g(|\cU|) \prod_{C \in \cU} \tfrac{w_C - p}{p(1-p)},}
    \end{align*}
    \cyedit{where $\cC_i^{>\beta}$ denotes the subsets of $\cC(\cN_i)$ with cardinality strictly larger than $\beta$. As a result, $\var\Big(\widehat{\TTE}_{\textrm{HT}} \Big) - \var\Big(\widehat{\TTE}_{\beta} \Big)$ is equal to 
    \begin{align} \label{eq:diff_var}
    \var\Big( \tfrac{1}{n} \sum_{i=1}^{n} \; Y_i(\bz) \sum_{\cU \in \cC_i^{>\beta}} g(|\cU|) \prod_{C \in \cU} \tfrac{w_C - p}{p(1-p)} \Big) + 2 \cov\Big(\widehat{\TTE}_{\beta}, \tfrac{1}{n} \sum_{i=1}^{n} \; Y_i(\bz) \sum_{\cU \in \cC_i^{>\beta}} g(|\cU|) \prod_{C \in \cU} \tfrac{w_C - p}{p(1-p)}\Big).
    \end{align}
    }
    
    \cyreplace{Next, for each $i \in [n]$, let us define cluster-scale effect coefficients 
    \[
        \cc_{i,\cS} = \sum_{\substack{S \in \cS_i^\beta \\ \cC(S) = \cS}} c_{i,S}
    \]
    for each $\cS \in \cC_i^\beta$, which allows us to re-express
    \[
        Y_i(\bz) = \sum_{\cS \in \cC_i^\beta} \cc_{i,\cS} \prod_{C \in \cS} w_C.
    \]}{Using the notation $Y_i(\bz) = Y_i(\bw) = \langle \tbw_i^{\beta^*}, \bx_i \rangle$ for $\bx_i = \big(D_i^{\beta^*}\big)^\intercal \bc_i$, we can expand the covariance term as follows,
    \begin{align}
        \notag \cov\Big(\widehat{\TTE}_{\beta}, \tfrac{1}{n} \sum_{i=1}^{n} \; Y_i(\bz) \sum_{\cU \in \cC_i^{>\beta}} g(|\cU|) \prod_{C \in \cU} \tfrac{w_C - p}{p(1-p)}\Big) 
        &= \tfrac{1}{n^2} \sum_{i=1}^{n} \sum_{i'=1}^{n} \sum_{\cS \in \cC_i^{\beta^*}} \bx_{i,\cS} \sum_{\cS' \in \cC_{i'}^{\beta^*}} \bx_{i',\cS'} \sum_{\cU \in \cC_i^{\meedit{>}\beta}} g(|\cU|) \sum_{\cU' \in \cC_{\mereplace{i}{i'}}^{> \beta}} g(|\cU'|) \\ 
        &\phantom{a}\hspace{30pt} \cdot \cov \bigg( \prod_{C_1 \in \cS} w_{C_1} \prod_{C_2 \in \cU} \tfrac{w_{C_2} - p}{p(1-p)} \;,\; \prod_{C_3 \in \cS'} w_{C_3} \prod_{C_4 \in \cU'} \tfrac{w_{C_4} - p}{p(1-p)} \bigg) \label{eq:var_6sum}.
    \end{align}
}
    By an analogous calculation to that from Lemma 4 of~\cite{cortez2023exploiting}, we may simplify the inner covariance term to
    \[
        \cov = \Ind \bigg( \begin{matrix} \cU \subseteq (\cS \cup \cS' \cup \cU') \\ \cU' \subseteq (\cS \cup \cU \cup \cS') \end{matrix} \bigg) \cdot p^{|(\cS \cup \cS') \setminus (\cU \cup \cU')| - |\cU \cup \cU'|} \cdot \Big( \tfrac{1}{1-p} \Big)^{|(\cU \cap \cU') \setminus (\cS \cup \cS')|} - \Ind \bigg( \begin{matrix} \cU \subseteq \cS \\ \cU' \subseteq \cS' \end{matrix} \bigg) \cdot p^{|\cS \setminus \cU| + |\cS' \setminus \cU'|}.
    \]

    \cyreplace{Note that if $\cU \not\subseteq \cS$ or $\cU' \not\subseteq \cS'$, then the second term of this covariance is 0, ensuring that the entire covariance expression is non-negative. Otherwise, if $\cU \subseteq \cS$ and $\cU' \subseteq \cS'$, then we may further simplify this covariance expression to
    \[
        \cov = p^{|(\cS \cup \cS') \setminus (\cU \cup \cU')| - |\cU \cup \cU')|} - p^{|\cS \setminus \cU| + |\cS' \setminus \cU'|}
        = p^{|(\cS \cup \cS') \setminus (\cU \cup \cU')| - |\cU \cup \cU')|} \Big( 1 - p^{|\cS \cap \cS'|} \Big) \geq 0.
    \]
    Thus, the covariance is always non-negative. }{Note that $|\cU'| > \beta$ while $|\cS'| \leq \beta^*$, such that because we assume $\beta \geq \beta^*$ then $\cU' \not\subseteq \cS'$ such that the second term of the above expression is 0. Thus \eqref{eq:var_6sum} reduces to 
    \begin{align*}
    &\tfrac{1}{n^2} \sum_{i=1}^{n} \sum_{i'=1}^{n} \sum_{\cS \in \cC_i^{\beta^*}} \bx_{i,\cS} \sum_{\cS' \in \cC_{i'}^{\beta^*}} \bx_{i',\cS'} \sum_{\cU \in \cC_i^{\beta}} g(|\cU|) \sum_{\cU' \in \cC_i^{> \beta}} g(|\cU'|) \\
    &\phantom{a}\hspace{30pt} \cdot \Ind \bigg( \begin{matrix} \cU \subseteq (\cS \cup \cS' \cup \cU') \\ \cU' \subseteq (\cS \cup \cU \cup \cS') \end{matrix} \bigg) \cdot p^{|(\cS \cup \cS') \setminus (\cU \cup \cU')| - |\cU \cup \cU'|} \cdot \Big( \tfrac{1}{1-p} \Big)^{|(\cU \cap \cU') \setminus (\cS \cup \cS')|}.
    \end{align*}
    }
    Our assumption that each $c_{i,S}$ has the same sign implies that each $\bx_{i,\cS}$ also has the same sign. 
    Our same-sign assumption on the $\bx_{i,\cS}$ coefficients and our definition of $g$ ensure that \cyreplace{each coefficient in the sextuple sum in \eqref{eq:var_6sum}}{the covariance term in \eqref{eq:diff_var}} is non-negative. \cyedit{It follows then that \eqref{eq:diff_var} is non-negative, implying that $\var\Big( \widehat{\TTE}_{\textrm{HT}} \Big) \geq \var \Big( \widehat{\TTE}_{\beta} \Big).$}
    
\end{proof}

\section{Completely Randomized Designs} \label{sec:crd}

In settings with a small number of clusters, it is preferable to completely randomize cluster treatments rather than Bernoulli randomize, since this avoids the possibility of having no treated or no control units. We derive the pseudoinverse estimator for this setting and show that, unlike in the Bernoulli case, it exhibits a bias-variance trade-off that is mediated by properties of the clustering. The relevant feature of the clustering is the number of nodes that have an in-neighbor in every cluster; as the number of such nodes increases, the variance decreases at the cost of increased bias. This bias-variance trade-off is unusual in that it is not controlled by a parameter of the estimator but by the experimental design itself. The analysis of completely randomized designs is more demanding, so we will limit our discussion to the special case of \textit{linear} ($\beta^* = 1$) potential outcomes.

Given a clustering $\cC=\{C_1,\cdots, C_m\}$ into $m$ clusters, a \textit{completely randomized GCR design} $\bz \sim \textrm{GCR}_{\textrm{Comp}}(\cC,k)$ selects $k$ out of the $m$ clusters to \textit{fully} treat, and assigns the individuals in all other clusters to control. Since all individuals within any cluster have the same treatment assignment under $\bz \sim \textrm{GCR}_{\textrm{Comp}}(\cC,k)$. \cyedit{Performing a similar "unit to cluster" reduction as Section~\ref{subsec:pi_gcr}, we may write an individual's outcome
\[
    Y_i(\bz) = c_{i,\varnothing} + \sum_{j \in \cN_i} c_{i,\{j\}} z_j = c_{i,\varnothing} + \sum_{C \in \cC(\cN_i)} \sum_{j \in \cN_i \cap C} c_{i,\{j\}} z_j := x_{i,\varnothing} + \sum_{C \in \cC(\cN_i)} x_{i,\{C\}} w_C := Y_i(\bw),
\]
where we define the coefficients $x_{i,\varnothing} = c_{i,\varnothing}$ and $x_{i,\{C\}} = \sum_{j \in \cN_i \cap C} c_{i,\{j\}}$. Note that the second equality uses the fact that all individuals in a cluster $C$ receive the same treatment assignment, which we denote by $w_C$. By the same reasoning as Section~\ref{sec:pinv}, we may express the pseudoinverse estimator for the TTE as 
\begin{equation} \label{eq:gcrcomp_est}
    \widehat{\TTE}_{1} 
    = \frac{1}{n} \sum_{i=1}^{n} Y_i(\bw) \Big\langle \E \big[ \tbw_i \tbw_i^\intercal \big]^{\dagger} \psi_i, \tbw_i \Big\rangle
\end{equation}
where $\bw \in \{0,1\}^m$ is the vector indicating the treatment decision for each \textit{cluster}; under a completely randomized design it has exactly $k$ entries taking value 1 and $(m-k)$ entries taking value 0. We'll write that $\bw \sim \textrm{CRD}(k)$. The vector $\tbw_i \in \{0,1\}^{|\cC(\cN_i)| + 1}$ indicates the complete treatment of each cluster among the neighbors of $i$ (along with the $1$-entry representing the baseline effect). The vector $\psi_i = [0 1 \cdots 1]^T$ collects the coefficients of the TTE estimand. If we wanted to write the estimator in terms of the unit treatment vector as in the original presentation of the pseudoinverse estimator in \eqref{eq:tte_hat}, we would let $\tbz_i = D_i^1 \tbw_i$, $\theta_i = D_i^1 \psi_i$, and $\bx_i = (D_i^1)^\intercal \bc_i$, for $D_i^1$ defined as in Section~\ref{sec:pinv}.}

Some of the results that we will present here rely on an additional \textit{monotonicity} assumption across these cluster-aggregated effect coefficients.

\begin{assumption}[Monotonicity] \label{ass:strong_mono}
    The sign of $x_{i, C}$ is constant across all $i$ and $C$ for $C = \emptyset$ or $C \in \cC_i^1$.
\end{assumption}

Assumption~\ref{ass:strong_mono} implies that the effect of each individual either always increases or always decreases when additional clusters are assigned treatment, and this direction is the same across all individuals. Practically, this assumption precludes the possibility of ``cancellation'' between interference effects, which will be crucial for our analysis of the variance under completely randomized designs.

\subsection{The Explicit Pseudoinverse Estimator when $\beta = 1$} \label{subsec:crd_pseudo}

Now, we will derive an explicit form for the pseudoinverse estimator \eqref{eq:gcrcomp_est} under a $\textrm{GCR}_{\textrm{Comp}}(\cC,k)$ design and use this to analyze its bias and variance. Under a $\textrm{GCR}_{\textrm{Comp}}(\cC,k)$ design, the treatment decisions are correlated to ensure that a fixed number of clusters are treated. Each time a cluster $\cC$ is assigned to treatment, this uses up one of the $k$ treatment allotments and reduces the conditional treatment probability of all other clusters. Thus, given two subsets of clusters $\cS, \cT \in \cC_i^1$, 
\[
    \Big[ \E\big[ \tbw_i \tbw_i^\intercal \big] \Big]_{\cS,\cT}
    = \Pr \Big( \bigcap_{C \in \cS \cup \cT} w_C = 1 \Big)
    = \prod_{\ell = 0}^{|\cS \cup \cT| - 1} \!\!\frac{k-\ell}{m-\ell}.
\]

\vspace{4pt}
\noindent As an example, if individual $i$ has neighbors from two clusters $C_1$ and $C_2$, their (cluster) design matrix has the form,
\begin{center}
\begin{tikzpicture}
    \node at (-1.5,1.5) {$\E \big[ \tbw_i \tbw_i^\intercal \big] =$};

    \node at (0,2.5) {\scriptsize $\varnothing$};
    \node at (0,1.5) {\scriptsize $\{C_1\}$};
    \node at (0,0.5) {\scriptsize $\{C_2\}$};

    \node at (1,3.25) {\scriptsize $\varnothing$};
    \node at (2.25,3.25) {\scriptsize $\{C_1\}$};
    \node at (3.5,3.25) {\scriptsize $\{C_2\}$};

    \draw[thick] (0.75,0) -- (0.5,0) -- (0.5,3) -- (0.75,3);

    \draw[thick] (4,0) -- (4.25,0) -- (4.25,3) -- (4,3);

    \node at (1,2.5) {$1$};
    \node at (1,1.5) {$\frac{k}{n}$};
    \node at (1,0.5) {$\frac{k}{n}$};

    \node at (2.25,2.5) {$\frac{k}{n}$};
    \node at (2.25,1.5) {$\frac{k}{n}$};
    \node at (2.25,0.5) {$\frac{k(k-1)}{n(n-1)}$};

    \node at (3.5,2.5) {$\frac{k}{n}$};
    \node at (3.5,1.5) {$\frac{k(k-1)}{n(n-1)}$};
    \node at (3.5,0.5) {$\frac{k}{n}$};

    \node at (4.5,1.5) {$.$};
    \end{tikzpicture}
\end{center}

Note that the entries in this matrix are functions not only of the treatment level $k$ but also of the number of clusters $m$. The following lemma shows that the design matrix for complete randomization, unlike Bernoulli randomization, need not be invertible. 

\begin{lemma} \label{lem:crd_det}
    Suppose that $\bw \sim \textrm{CRD}(k)$ and Assumption~\ref{ass:low_deg} holds with $\beta=1$. Then,
    \[
        \emph{det} \Big( \E \big[ \tbw_i \tbw_i^\intercal \big] \Big) = \frac{k^{|\cC(\cN_i)|} \cdot (m-k)^{|\cC(\cN_i)|} \cdot (m-|\cC(\cN_i)|)}{m^{(|\cC(\cN_i)|+1)} \cdot (m-1)^{|\cC(\cN_i)|}}. 
    \]
\end{lemma}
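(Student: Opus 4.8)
The plan is to exploit the bordered structure of the $(r+1)\times(r+1)$ design matrix, where I write $r := |\cC(\cN_i)|$ for the cluster-neighborhood size, and reduce the determinant to a rank-one update via a Schur complement. Using the stated probability formula for the entries, the matrix $M := \E[\tbw_i\tbw_i^\intercal]$ has a $1$ in its top-left corner, a border of $p := k/m$ along the remainder of the first row and column, and an interior $r\times r$ block $A$ with diagonal entries $p$ and off-diagonal entries $q := \tfrac{k(k-1)}{m(m-1)}$. Writing $M = \left[\begin{smallmatrix} 1 & p\mathbf 1^\intercal \\ p\mathbf 1 & A\end{smallmatrix}\right]$, I would apply the Schur complement identity $\det M = \det(A)\,\bigl(1 - p^2\,\mathbf 1^\intercal A^{-1}\mathbf 1\bigr)$, valid whenever $A$ is invertible.

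First I would analyze $A = (p-q)I + q\,\mathbf 1\mathbf 1^\intercal$. Since $\mathbf 1\mathbf 1^\intercal$ has eigenvalue $r$ along $\mathbf 1$ and $0$ on $\mathbf 1^\perp$, the eigenvalues of $A$ are $p+(r-1)q$ (once) and $p-q$ (with multiplicity $r-1$), giving $\det A = \bigl(p+(r-1)q\bigr)(p-q)^{r-1}$. Moreover $\mathbf 1$ is an eigenvector of $A$ with eigenvalue $p+(r-1)q$, so $A^{-1}\mathbf 1 = \bigl(p+(r-1)q\bigr)^{-1}\mathbf 1$ and hence $\mathbf 1^\intercal A^{-1}\mathbf 1 = \tfrac{r}{p+(r-1)q}$. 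Substituting these into the Schur complement formula and cancelling the common factor $p+(r-1)q$ yields the compact expression
\[
\det M = (p-q)^{r-1}\bigl(p+(r-1)q - r p^2\bigr).
\]

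The remaining work is purely algebraic: substitute $p = k/m$ and $q = \tfrac{k(k-1)}{m(m-1)}$ and simplify each factor. A short computation gives $p - q = \tfrac{k(m-k)}{m(m-1)}$. For the second factor, clearing the common denominator $m^2(m-1)$ and factoring out $k$ reduces the bracketed quantity to $m(m-1) + (r-1)(k-1)m - rk(m-1)$. I would then expand this and check the cancellations: the two $rkm$ terms cancel, as do the stray $-m$ and $+m$ terms, leaving $m^2 - rm - km + rk = (m-r)(m-k)$; hence $p+(r-1)q - rp^2 = \tfrac{k(m-r)(m-k)}{m^2(m-1)}$. Multiplying the two simplified factors and collecting powers of $k$, $m-k$, $m$, and $m-1$ then produces exactly
\[
\det M = \frac{k^{r}(m-k)^{r}(m-r)}{m^{r+1}(m-1)^{r}},
\]
which is the claimed formula with $r = |\cC(\cN_i)|$.

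The main obstacle is twofold. The genuinely delicate step is the cancellation in the second factor: showing that $p+(r-1)q - rp^2$ collapses to the clean product $(m-r)(m-k)$ (over $m^2(m-1)$) requires careful bookkeeping of the expansion and is the step most prone to sign or indexing errors. Second, the Schur complement reduction presumes $A$ is invertible, i.e. $p \neq q$ and $p+(r-1)q \neq 0$; this holds for $1 \le k \le m-1$, and the boundary cases $k=0$ and $k=m$ make both sides of the claimed identity vanish and can be checked by inspection. Alternatively, since both sides are rational functions of $k$ and $m$ that agree on the dense set $1 \le k \le m-1$, the identity extends to the degenerate cases by continuity.
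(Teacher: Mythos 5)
Your proposal is correct: the entries of $\E[\tbw_i\tbw_i^\intercal]$ are as you state, the eigenvalue computation for $A=(p-q)I+q\,\mathbf 1\mathbf 1^\intercal$ is right, and the final algebra (including the collapse of $p+(r-1)q-rp^2$ to $k(m-r)(m-k)/\bigl(m^2(m-1)\bigr)$) checks out, reproducing the claimed determinant exactly. Your route is genuinely different from the paper's. The paper also peels off the bordered structure, but it does so by a cofactor expansion along the first row, which forces it to introduce an auxiliary family of matrices $\mathbf B_x$ (the border-replaced cofactors) and prove $\det(\mathbf B_x)=\bigl(\tfrac{m-k}{m-1}\bigr)^{x-1}$ by a separate induction before recombining; it computes $\det(\mathbf A_x)$ by the same rank-one eigenvalue argument you use. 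Your Schur complement identity $\det M=\det(A)\bigl(1-p^2\,\mathbf 1^\intercal A^{-1}\mathbf 1\bigr)$ shortcuts all of that, because $\mathbf 1$ is an eigenvector of $A$, making $\mathbf 1^\intercal A^{-1}\mathbf 1=r/\bigl(p+(r-1)q\bigr)$ immediate; this eliminates the induction and the cofactor bookkeeping entirely, at the modest price of needing $A$ invertible. You handle that correctly by restricting to $1\le k\le m-1$ and checking that both sides vanish at $k\in\{0,m\}$; the only blemish is your appeal to ``density''/continuity for the degenerate cases, which is imprecise for integer parameters --- the direct verification at $k=0$ and $k=m$ that you already give (or a polynomial-identity argument in $k$ for fixed $m$) is the right justification, and it suffices. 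Net assessment: same eigen-structure at the core, but your bordered-matrix reduction is cleaner and shorter than the paper's inductive cofactor argument.
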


\begin{proof}
     \textit{Proof:} \: We carry out a cofactor expansion of this determinant. Throughout the proof, we let $\mathbf{1}_{x \times y}$ denote the $x \times y$ (sub)matrix of all $1$s, and $\mathbf{I}_x$ denote the $x \times x$ identity (sub)matrix. For each $x \in \mathbb{N}$, we also recursively define the $x \times x$ matrices
    \[
        \mathbf{A}_x = \underbrace{\begin{bmatrix}
            1 & \tfrac{k-1}{m-1} \cdot \mathbf{1}_{1 \times (x-1)}\\
            \tfrac{k-1}{m-1} \cdot \mathbf{1}_{(x-1) \times 1} & \mathbf{A}_{x-1}
        \end{bmatrix}}_{\tfrac{k-1}{m-1} \cdot \mathbf{1}_{x \times x} + \tfrac{m-k}{n-1} \cdot \mathbf{I}_x},
        \hspace{20pt}
        \mathbf{B}_x = \begin{bmatrix}
            1 & \tfrac{k-1}{m-1} \cdot \mathbf{1}_{1 \times (x-1)}\\
            \mathbf{1}_{(x-1) \times 1} & \mathbf{A}_{x-1}
        \end{bmatrix}.
    \]
    Note that 
    \[
        \E \big[ \tbw_i \tbw_i^\intercal \big] = \begin{bmatrix}
            1 & \tfrac{k}{m} \cdot \mathbf{1}_{1 \times |\cC(\cN_i)|} \\
            \rule{0pt}{14pt} \tfrac{k}{m} \cdot \mathbf{1}_{|\cC(\cN_i)| \times 1} & \tfrac{k}{m} \cdot \mathbf{A}_{d_i}
        \end{bmatrix}
        \quad \Rightarrow \quad 
        \textrm{det} \Big( \E \big[ \tbw_i \tbw_i^\intercal \big] \Big) = \frac{k^{|\cC(\cN_i)|}}{m^{|\cC(\cN_i)|}} \cdot \textrm{det} \bigg( \begin{bmatrix}
        1 & \tfrac{k}{m} \cdot \mathbf{1}_{1 \times |\cC(\cN_i)|} \\
        \rule{0pt}{14pt} \mathbf{1}_{|\cC(\cN_i)| \times 1} & \mathbf{A}_{|\cC(\cN_i)|} \end{bmatrix} \bigg).
    \]
    Performing a cofactor expansion of the latter matrix along its first row, we have
    \begin{equation} \label{eq:design_cofactor}
        \textrm{det} \Big( \E \big[ \tbw_i \tbw_i^\intercal \big] \Big) = \tfrac{k^{|\cC(\cN_i)|}}{m^{|\cC(\cN_i)|}} \bigg[ \textrm{det} \big( \mathbf{A}_{|\cC(\cN_i)|} \big) - |\cC(\cN_i)| \cdot \tfrac{k}{m} \cdot \textrm{det} \big( \mathbf{B}_{|\cC(\cN_i)|} \big) \bigg].
    \end{equation}
    Observe that the second cofactor in this expansion is exactly $\mathbf{B}_{|\cC(\cN_i)|}$, and each successive cofactor can be obtained from the previous one via a single row transposition. Hence, it suffices to reason about the determinants of the $\mathbf{A}_x$ and $\mathbf{B}_x$ matrices. 

    We compute the determinant of $\mathbf{A}_{x}$ as the product of its eigenvalues. Note that $\mathbf{1}_{x \times x}$ has eigenvalue $x$ with multiplicity $1$ and $0$ with multiplicity $x-1$. Hence, $\mathbf{A}_{x} = \tfrac{k-1}{m-1} \cdot \mathbf{1}_{x \times x} + \tfrac{m-k}{m-1} \cdot \mathbf{I}_{x}$ has determinant 
    \[
        \textrm{det} \big( \mathbf{A}_{x} \big) = \Big( x \cdot \tfrac{k-1}{m-1} + \tfrac{m-k}{m-1} \Big) \cdot \Big( \tfrac{m-k}{m-1} \Big)^{x-1}.
    \]
    Next, we argue by induction on $x$ that $\textrm{det} \big( \mathbf{B}_{x} \big) = \Big( \tfrac{m-k}{m-1} \Big)^{x-1}$. The base case is immediate, as $\mathbf{B}_1 = \begin{bmatrix} 1 \end{bmatrix}$. Given $x \geq 2$, we performing a cofactor expansion along the first row of $\mathbf{B}_{x}$ to obtain,
    \[
        \textrm{det} \big( \mathbf{B}_{x} \big) = \textrm{det}(\mathbf{A}_{x-1}) - (x-1) \cdot \tfrac{k-1}{m-1} \cdot \textrm{det}(\mathbf{B}_{x-1}).
    \]
    Observe that the second cofactor in this expansion is exactly $\mathbf{B}_{x-1}$, and each successive cofactor can be obtained from the previous one via a single row transposition. Applying our formula for $\mathbf{A}_{x-1}$ and the inductive hypothesis, we have
    \begin{align*}
        \textrm{det} \big( \mathbf{B}_{x} \big) 
        &= \Big( (x-1) \cdot \tfrac{k-1}{m-1} + \tfrac{m-k}{m-1} \Big) \cdot \Big( \tfrac{m-k}{m-1} \Big)^{x-2} - (x-1) \cdot \tfrac{k-1}{m-1} \cdot \Big( \tfrac{m-k}{m-1} \Big)^{x-2} \\
        &= \Big( \tfrac{m-k}{m-1} \Big)^{x-2} \cdot \Big[ \Big( (x-1) \cdot \tfrac{k-1}{m-1} + \tfrac{m-k}{m-1} \Big) - (x-1) \cdot \tfrac{k-1}{m-1} \Big]  \\
        &= \Big( \tfrac{m-k}{m-1} \Big)^{x-2} \cdot \tfrac{m-k}{m-1} \\
        &= \Big( \tfrac{m-k}{m-1} \Big)^{x-1},
    \end{align*}
    completing the induction. Finally, plugging into \eqref{eq:design_cofactor}, we have
    \begin{align*}
        \textrm{det} \Big( \E \big[ \tbw_i \tbw_i^\intercal \big] \Big) 
        &= \frac{k^{|\cC(\cN_i)|}}{m^{|\cC(\cN_i)|}} \bigg[ \Big( |\cC(\cN_i)| \cdot \tfrac{k-1}{m-1} + \tfrac{m-k}{m-1} \Big) \cdot \Big( \tfrac{m-k}{m-1} \Big)^{|\cC(\cN_i)|-1} - |\cC(\cN_i)| \cdot \tfrac{k}{m} \cdot \Big( \tfrac{m-k}{m-1} \Big)^{|\cC(\cN_i)|-1} \bigg] \\
        &= \frac{k^{|\cC(\cN_i)|} \cdot (m-k)^{(|\cC(\cN_i)|-1)}}{m^{|\cC(\cN_i)|} \cdot (m-1)^{(|\cC(\cN_i)|-1)}} \bigg[ |\cC(\cN_i)| \cdot \Big( \tfrac{k-1}{m-1} - \tfrac{k}{m} \Big) + \tfrac{m-k}{m-1} \bigg] \\
        &= \frac{k^{|\cC(\cN_i)|} \cdot (m-k)^{|\cC(\cN_i)|} \cdot (m-|\cC(\cN_i)|)}{m^{(|\cC(\cN_i)|+1)} \cdot (m-1)^{|\cC(\cN_i)|_i}}.
    \end{align*}
\end{proof}

Lemma~\ref{lem:crd_det} shows that the design matrix will fail to be invertible whenever $m = |\cC(\cN_i)|$, meaning individual $i$ has at least one incoming edge from \textit{every} cluster in the network. Such a situation may be inevitable if the graph is small (or naturally partitions into a small number of clusters) and has relatively high expansion.

\begin{lemma} \label{lem:pi_crd_form}
    Suppose that $\bz \sim \textrm{GCR}_{\textrm{Comp}}(\cC,k)$. If Assumption~\ref{ass:low_deg} holds with $\beta^* = 1$, then the pseudoinverse estimator for the total treatment effect is given by:
    \[
        \widehat{\TTE}_1 = \frac{1}{n} \sum_{i=1}^{n} Y_i(\bw) \cdot \begin{cases}
        \frac{m^2(m-1)}{k(m-k)(n-|\cC(\cN_i)|)} \sum\limits_{C \in \cC(\cN_i)} \big( w_C - \frac{k}{m} \big) & |\cC(\cN_i)| < m, \\
        \rule{0pt}{20pt}\frac{mk^2}{(k^2 + m)^2} \sum\limits_{C \in \cC(\cN_i)} \big( w_C + \frac{1}{k} \big) & |\cC(\cN_i)| = m.
    \end{cases}
    \]
\end{lemma}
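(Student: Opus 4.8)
The plan is to compute, for each individual $i$, the scalar weight $\big\langle \E[\tbw_i\tbw_i^\intercal]^\dagger\psi_i,\tbw_i\big\rangle$ attached to $Y_i(\bw)$ in \eqref{eq:gcrcomp_est}, treating the regimes $|\cC(\cN_i)|<m$ and $|\cC(\cN_i)|=m$ separately. Write $M = \E[\tbw_i\tbw_i^\intercal]$ and $d = |\cC(\cN_i)|$, so $M$ is the $(d+1)\times(d+1)$ matrix displayed before Lemma~\ref{lem:crd_det}. The first structural fact I would exploit is symmetry: $M$ commutes with every permutation of the $d$ cluster-coordinates (fixing the $\varnothing$-coordinate), and $\psi_i = [0,1,\dots,1]^\intercal$ is invariant, so $M^\dagger\psi_i$ is invariant too. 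Hence I may seek $M^\dagger\psi_i$ inside the two-dimensional symmetric subspace, i.e. of the form $v = (v_0,\,u\,\mathbf 1_d)$. Since the cluster block of $M$ is $\tfrac{k}{m}\mathbf A_d$ with $\mathbf A_d\mathbf 1 = \lambda\mathbf 1$ for $\lambda = \tfrac{(k-1)d+(m-k)}{m-1}$, applying $M$ to such $v$ collapses the problem to the $2\times 2$ system with matrix $\big[\begin{smallmatrix}1 & kd/m\\ k/m & k\lambda/m\end{smallmatrix}\big]$ acting on $(v_0,u)$.

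For Case 1 ($d<m$), Lemma~\ref{lem:crd_det} guarantees $M$ is invertible, so $M^\dagger\psi_i = M^{-1}\psi_i$ and I solve the reduced system with right-hand side $(0,1)$. A short computation gives $m\lambda - kd = \tfrac{(m-k)(m-d)}{m-1}$, so the reduced determinant is $\tfrac{k(m-k)(m-d)}{m^2(m-1)}$, and Cramer's rule yields $u = \tfrac{m^2(m-1)}{k(m-k)(m-d)}$ and $v_0 = -\tfrac{kd}{m}u$. Because the cluster-entries of $\tbw_i$ are the $w_C$ and its $\varnothing$-entry is $1$, this gives $\langle v,\tbw_i\rangle = v_0 + u\sum_{C}w_C = u\sum_{C\in\cC(\cN_i)}(w_C - \tfrac{k}{m})$, matching the first branch (the $n$ in that denominator should be read as $m$).

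For Case 2 ($d=m$) the matrix is singular and the pseudoinverse must be handled carefully; this is the main obstacle. The key observation is that complete randomization forces $\sum_{C}w_C = k$ deterministically, so $\mathbf g = [-k,1,\dots,1]^\intercal$ satisfies $\langle\mathbf g,\tbw_i\rangle = 0$ almost surely and therefore $M\mathbf g = \E[\tbw_i\langle\tbw_i,\mathbf g\rangle]=0$. A swapping argument on the support shows $\mathrm{null}(M)=\mathrm{span}(\mathbf g)$ exactly (consistent with the simple zero of the determinant in Lemma~\ref{lem:crd_det}). The crucial simplification is then that $M^\dagger\psi_i$ is merely some solution of $Mv = P_{\mathrm{range}(M)}\psi_i$, and any two solutions differ by an element of $\mathrm{span}(\mathbf g)$; since $\mathbf g\perp\tbw_i$ almost surely, every such solution produces the same $\langle v,\tbw_i\rangle$, so the minimum-norm requirement is irrelevant and I may exhibit any convenient solution. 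Projecting, $P_{\mathrm{range}(M)}\psi_i = \psi_i - \tfrac{\langle\psi_i,\mathbf g\rangle}{\|\mathbf g\|^2}\mathbf g = \psi_i - \tfrac{m}{k^2+m}\mathbf g$, and re-solving the symmetric system (now $\lambda = k$, so both rows collapse to the single relation $v_0 + ku = \tfrac{mk}{k^2+m}$) gives $\langle v,\tbw_i\rangle = v_0 + u\sum_C w_C = v_0 + uk = \tfrac{mk}{k^2+m}$.

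Finally I would reconcile this constant with the stated second branch: although $\tfrac{mk^2}{(k^2+m)^2}\sum_{C}(w_C+\tfrac1k)$ appears to depend on $\bw$, on the support of the design $\sum_{C}w_C = k$ and $d=m$, so $\sum_C(w_C+\tfrac1k) = k + \tfrac{m}{k} = \tfrac{k^2+m}{k}$ and the expression collapses to exactly $\tfrac{mk}{k^2+m}$, matching my computation. Substituting both branches into $\tfrac1n\sum_i Y_i(\bw)\langle M^\dagger\psi_i,\tbw_i\rangle$ then yields the claimed formula. The genuinely delicate point is recognizing that the null direction is orthogonal to $\tbw_i$ on the support, and that the displayed weight in the singular case is a design-dependent constant in disguise.
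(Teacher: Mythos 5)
Your proof is correct, and it takes a genuinely different route from the paper's. The paper proceeds by explicit computation: for $|\cC(\cN_i)| < m$ it writes down the full $(d+1)\times(d+1)$ inverse and verifies it directly, and for $|\cC(\cN_i)| = m$ it exhibits a candidate pseudoinverse (the matrix with entries $\star$ and $\diamond$), checks the Penrose conditions by direct computation (partly omitted ``for the sake of brevity''), and then multiplies through by $\psi_i$ and $\tbw_i$. You instead exploit permutation symmetry to collapse everything to a $2\times 2$ system, and---more importantly---in the singular case you sidestep the pseudoinverse entirely via the observation that $\sum_C w_C = k$ deterministically, so $\mathbf{g} = [-k,1,\dots,1]^\intercal$ is a null vector of $M = \E[\tbw_i\tbw_i^\intercal]$ that is orthogonal to $\tbw_i$ on the support; hence $\langle v, \tbw_i\rangle$ is the same for every solution of $Mv = P_{\mathrm{range}(M)}\psi_i$, and the minimum-norm property of $M^\dagger\psi_i$ is irrelevant. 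One remark: you do not actually need your asserted (but not carried out) ``swapping argument'' that $\mathrm{null}(M) = \mathrm{span}(\mathbf{g})$ exactly---since $M$ is positive semidefinite, any $h$ with $Mh = 0$ satisfies $\E\big[\langle h, \tbw_i\rangle^2\big] = h^\intercal M h = 0$, so \emph{every} null vector is almost-surely orthogonal to $\tbw_i$, which is all your invariance step requires. Your reconciliation of the constant $\tfrac{mk}{k^2+m}$ with the displayed branch $\tfrac{mk^2}{(k^2+m)^2}\sum_C(w_C + \tfrac{1}{k})$ is also right (both equal $\tfrac{mk}{k^2+m}$ on the support of the design, which is the relevant sense of equality for the lemma), and your reading of the $n$ in the first branch's denominator as a typo for $m$ agrees with the paper's own proof, which derives $m - |\cC(\cN_i)|$ there. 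As for what each approach buys: the paper's explicit matrices are reused downstream (the proof of Lemma~\ref{lem:crd_bias} invokes the explicit form of $M^\dagger M$, and Lemma~\ref{lem:unit_crd_gamma} reuses $M^\dagger\psi_i$), whereas your argument is shorter, self-verifying, and explains structurally why the weight in the singular case is a design-dependent constant in disguise---at the cost of pinning down $M^\dagger\psi_i$ only up to the null direction, which suffices for this lemma but would need supplementing for the later quadratic-form computations.
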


\begin{proof}
    \textit{Proof: } 
    In light of Lemma~\ref{lem:crd_det}, we may compute the pseudoinverse of the cluster design matrix for individuals with $|\cC(\cN_i)| < m$ by taking the standard matrix inverse:
    \[
        \E \big[ \tbw_i \tbw_i^\intercal \big]^{-1} = \frac{m(m-1)}{k(m-k)(m-|\cC(\cN_i)|)} \begin{bmatrix}
            \tfrac{k(m+|\cC(\cN_i)|(k-1)-k)}{m-1} & -k & \hdots & -k \\[4pt]
            -k & m\!-\!(|\cC(\cN_i)|-1) & \hdots & 1 \\[4pt]
            \vdots & \vdots & \ddots & \vdots \\[4pt]
            -k & 1 & \hdots & m\!-\!(|\cC(\cN_i)|-1)
        \end{bmatrix},
    \]
    which can be verified by a direct computation. Multiplying this by $\psi_i$ gives
    \[
        \E \big[ \tbw_i \tbw_i^\intercal \big]^{\dagger} \psi_i = \frac{m(m-1)}{k(m-k)(m-|\cC(\cN_i)|)} \begin{bmatrix}
            -|\cC(\cN_i)| \cdot k \\
            m \\
            \vdots \\
            m
        \end{bmatrix},
    \]
    which in turn gives the inner product 
    \[
        \Big\langle \E \big[ \tbw_i \tbw_i^\intercal \big]^{\dagger} \psi_i, \tbw_i \Big\rangle = \tfrac{m^2(m-1)}{k(m-k)(m-|\cC(\cN_i))} \sum_{C \in \cC(\cN_i)} \Big( w_C - \tfrac{k}{m} \Big).
    \]

    \noindent Now, we turn our attention to individuals $i$ with $|\cC(\cN_i)| = m$. For these individuals, the cluster design matrix has pseudoinverse
    \[
        \E \big[ \tbw_i \tbw_i^\intercal \big]^{\dagger} = \frac{m}{k(m-k)(k^2+m)^2} \begin{bmatrix}
            mk(m-k) & k^2(m-k) & k^2(m-k) & \hdots & k^2(m-k) \\[4pt]
            k^2(m-k) & \star & \diamond & \hdots & \diamond \\[4pt]
            k^2(m-k) & \diamond & \star & \ddots & \vdots \\[4pt]
            \vdots & \vdots & \ddots & \ddots & \diamond \\[4pt]
            k^2(m-k) & \diamond & \hdots & \diamond & \star
        \end{bmatrix},
    \]
    where
    \begin{align*}
        \star &= (m-2)k^4 + k^3 + 2(m-1)^2k^2 + m(m-1)^2, \\
        \diamond &= - k^4 + k^3 - 2(m-1)k^2 - m(m-1).
    \end{align*}
    Multiplying this by $\E \big[ \tbw_i \tbw_i^\intercal \big]$, we find that 
    \[
        \E \big[ \tbw_i \tbw_i^\intercal \big]^{\dagger} \E \big[ \tbw_i \tbw_i^\intercal \big] 
        = \frac{1}{k^2+m}
        \begin{bmatrix}
            m & k & k & \hdots & k \\[4pt]
            k & k^2+m-1 & -1 & \hdots & -1 \\[4pt]
            k & -1 & k^2+m-1 & \ddots & \vdots \\[4pt]
            \vdots & \vdots & \ddots & \ddots & -1 \\[4pt]
            k & -1 & \hdots & -1 & k^2+m-1
        \end{bmatrix}.
    \]
    
    \vspace{4pt}
    Since this matrix is symmetric, so $\E \big[ \tbw_i \tbw_i^\intercal \big]^\dagger \E \big[ \tbw_i \tbw_i^\intercal \big] = \E \big[ \tbw_i \tbw_i^\intercal \big] \E \big[ \tbw_i \tbw_i^\intercal \big]^\dagger$ and the Hermitian conditions of the pseudoinverse are satisfied. It remains to check that $\E \big[ \tbw_i \tbw_i^\intercal \big] \E \big[ \tbw_i \tbw_i^\intercal \big]^\dagger \E \big[ \tbw_i \tbw_i^\intercal \big] = \E \big[ \tbw_i \tbw_i^\intercal \big]$ and $\E \big[ \tbw_i \tbw_i^\intercal \big]^\dagger \E \big[ \tbw_i \tbw_i^\intercal \big] \E \big[ \tbw_i \tbw_i^\intercal \big]^\dagger = \E \big[ \tbw_i \tbw_i^\intercal \big]^\dagger$. This can be verified through direct computation, which we omit here for the sake of brevity.

    Now, multiplying $\E \big[ \tbw_i \tbw_i^\intercal \big]^\dagger$ by $\psi_i$ gives
    \[
        \E \big[ \tbw_i \tbw_i^\intercal \big]^{\dagger} \psi_i = \frac{mk}{(k^2+m)^2} \begin{bmatrix}
            m \\
            k \\
            \vdots \\
            k \\
        \end{bmatrix},
    \]
    which in turn gives the inner product 
    \[
        \Big\langle \E \big[ \tbw_i \tbw_i^\intercal \big]^{\dagger} \psi_i, \tbw_i \Big\rangle = \tfrac{mk^2}{(k^2+m)^2} \sum_{C \in \cC(\cN_i)} \Big( w_C + \tfrac{1}{k} \Big).
    \]
\end{proof}

\subsection{Bias}

To reason about the bias of this pseudoinverse estimator, we make use of Theorem~\ref{thm:bias_bound_general}. Any individual $i \in [n]$ with $|\cC(\cN_i)| < m$ has an invertible cluster design matrix, so the reasoning from  Corollary~\ref{cor:gcr_unbiased} tells us that these individuals will not contributed to the bias. The remaining individuals have a non-invertible design matrix. The null space of these matrices has dimension $1$, suggesting that $\psi_i$ may still lie in the column space and ensure unbiasedness. The following example demonstrates that this need not be the case, and can be generalized to show that this is never the case.

\begin{example}
    We'll consider the $3 \times 3$ design matrix from the example in Section~\ref{subsec:crd_pseudo} with $k=1$ and $n=2$ (meaning exactly one of $C_1,C_2$ is randomly selected for treatment). Plugging these values in allows us to simplify this design matrix to,
    \[
        \E \big[ \tbw_i \tbw_i^\intercal \big] = \begin{bmatrix}
            1 & \frac{1}{2} & \frac{1}{2} \\[4pt]
            \frac{1}{2} & \frac{1}{2} & 0 \\[4pt]
            \frac{1}{2} & 0 & \frac{1}{2}
        \end{bmatrix}.
    \]
    The column space of this matrix is spanned by its first two columns (note that the third column is the first minus the second). The vector $\theta_1 = \begin{bmatrix} 0 & 1 & 1 \end{bmatrix}^\intercal$ does not lie in the column space; the only vectors in the column space with $0$ in their first entry are the scalar multiples of $\begin{bmatrix} 0 & -1 & 1 \end{bmatrix}^\intercal$.
\end{example}

In fact, for any $i$ with $d_i = n$, $\theta_i$ will always lie outside of the column space of the design matrix, meaning they will all incur some bias. We quantify this bias in the following lemma, which provides both an exact expression and a bound on the bias based on Theorem~\ref{thm:bias_bound_general}.

\begin{lemma} \label{lem:crd_bias}
    Suppose that $\bw \sim \Comp(k)$ and that Assumption~\ref{ass:low_deg} holds with $\beta = 1$. Then:
    \begin{enumerate}[(a)]
        \item if $|\cC(\cN_i)| < m$ for all $i \in [n]$, then $\widehat{\TTE}_1$ is unbiased.
        
	\item the exact bias of $\widehat{\TTE}_1$ is
        \begin{equation*}
            \E \big[ \widehat{\TTE}_1 \big] - \TTE = \frac{m}{(k^2+m)n} \sum_{i=1}^{n} \Ind \big( |\cC(\cN_i)| = m \big) \cdot \Big( k \cdot x_{i,\varnothing} - \sum_{C \in \cC} x_{i,\{C\}} \Big);
        \end{equation*}

        \item if Assumption~\ref{ass:bounded} holds, so $|Y_i(\bz)| \leq B$ for all $i \in [n]$ and $\bz \in \{0,1\}^n$, the bias of $\widehat{\TTE}$ satisfies the bound 
        \begin{equation*}
            \Big|\E \big[ \widehat{\TTE}_1 \big] - \TTE \Big| \leq \frac{\# \{ i : |\cC(\cN_i)| = m \}}{n} \cdot \frac{m(k+2)B}{k^2+m}.
        \end{equation*}
\end{enumerate}
\end{lemma}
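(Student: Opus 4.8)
The plan is to invoke Theorem~\ref{thm:bias_bound_general} specialized to $\beta = \beta^* = 1$, so that the misspecification term involving $\bc_i^{>\beta}$ is empty and the bias collapses to $\frac{1}{n}\sum_i \langle \bc_i, (\E[\tbz_i\tbz_i^\intercal]\E[\tbz_i\tbz_i^\intercal]^\dagger - I)\theta_i\rangle$. First I would recast this at the cluster scale exactly as in the proof of Corollary~\ref{cor:gcr_bias_bound}. Substituting $\theta_i = D_i \psi_i$, $\E[\tbz_i\tbz_i^\intercal] = D_i \E[\tbw_i\tbw_i^\intercal] D_i^\intercal$, and the rank-decomposition form of the pseudoinverse from \eqref{eq:inverse_design_GCR}, the product $\E[\tbz_i\tbz_i^\intercal]\E[\tbz_i\tbz_i^\intercal]^\dagger\theta_i$ telescopes (the $D_i^\intercal D_i$ factors cancel against their inverses) to $D_i\,\E[\tbw_i\tbw_i^\intercal]\E[\tbw_i\tbw_i^\intercal]^\dagger\psi_i$. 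Since $\langle \bc_i, D_i(\cdot)\rangle = \langle \bx_i, (\cdot)\rangle$ with $\bx_i = D_i^\intercal \bc_i$, the contribution of unit $i$ to the bias becomes $\langle \bx_i, (\E[\tbw_i\tbw_i^\intercal]\E[\tbw_i\tbw_i^\intercal]^\dagger - I)\psi_i\rangle$, a computation living entirely in the $(|\cC(\cN_i)|+1)$-dimensional cluster design matrix.

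For part (a) I would apply Lemma~\ref{lem:crd_det}: whenever $|\cC(\cN_i)| < m$ the stated determinant is nonzero, so $\E[\tbw_i\tbw_i^\intercal]$ is invertible, $\E[\tbw_i\tbw_i^\intercal]\E[\tbw_i\tbw_i^\intercal]^\dagger = I$, and that unit contributes nothing; if this holds for every $i$ the estimator is exactly unbiased. For part (b), the surviving contributions come only from units with $|\cC(\cN_i)| = m$, for which I would read off the explicit symmetric product $\E[\tbw_i\tbw_i^\intercal]\E[\tbw_i\tbw_i^\intercal]^\dagger$ already computed in the proof of Lemma~\ref{lem:pi_crd_form} (first row/column $\frac{1}{k^2+m}[m,k,\dots,k]$, diagonal $\frac{k^2+m-1}{k^2+m}$, off-diagonal $\frac{-1}{k^2+m}$). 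Acting on $\psi_i = [0\;1\;\cdots\;1]^\intercal$ gives $\frac{1}{k^2+m}[km,\,k^2,\dots,k^2]^\intercal$, so subtracting $\psi_i$ yields $(\E[\tbw_i\tbw_i^\intercal]\E[\tbw_i\tbw_i^\intercal]^\dagger - I)\psi_i = \frac{m}{k^2+m}[k,\,-1,\dots,-1]^\intercal$. Taking the inner product with $\bx_i$, and noting that $|\cC(\cN_i)| = m$ forces the neighbor clusters to be all of $\cC$, produces the per-unit bias $\frac{m}{k^2+m}\big(k\,x_{i,\varnothing} - \sum_{C\in\cC} x_{i,\{C\}}\big)$; summing with the indicator and dividing by $n$ gives the claimed exact formula.

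For part (c) I would bound the summand of the exact formula using the boundedness assumption. The key identification is that $x_{i,\varnothing} = c_{i,\varnothing} = Y_i(\mathbf{0})$ and, since the clusters partition $\cN_i$ and $\beta^*=1$, $\sum_{C} x_{i,\{C\}} = \sum_{j\in\cN_i} c_{i,\{j\}} = Y_i(\mathbf{1}) - Y_i(\mathbf{0})$. Hence $k\,x_{i,\varnothing} - \sum_C x_{i,\{C\}} = (k+1)Y_i(\mathbf{0}) - Y_i(\mathbf{1})$, which by the triangle inequality and Assumption~\ref{ass:bounded} is at most $(k+1)B + B = (k+2)B$. Summing over the units with $|\cC(\cN_i)| = m$ and pulling out the common factor $\frac{m}{k^2+m}$ gives the stated bound $\frac{\#\{i : |\cC(\cN_i)| = m\}}{n}\cdot\frac{m(k+2)B}{k^2+m}$.

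I expect the main obstacle to be the matrix--vector bookkeeping in part (b): correctly transcribing $\E[\tbw_i\tbw_i^\intercal]\E[\tbw_i\tbw_i^\intercal]^\dagger$ from the Lemma~\ref{lem:pi_crd_form} derivation and verifying that its action on $\psi_i$ simplifies to the clean vector $\frac{m}{k^2+m}[k,-1,\dots,-1]^\intercal$. The algebra is short but sign- and index-sensitive, so I would double-check it against the small $3\times 3$ example. By contrast, the conceptual reduction to cluster scale in the first step and the outcome identification in part (c) are comparatively routine.
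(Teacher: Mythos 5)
Your proof matches the paper's in all essentials: both arguments apply Theorem~\ref{thm:bias_bound_general} at the cluster scale with $\beta=\beta^*=1$, use Lemma~\ref{lem:crd_det} to conclude that every unit with $|\cC(\cN_i)|<m$ has an invertible cluster design matrix and hence contributes nothing (part (a)), read the explicit product $\E[\tbw_i\tbw_i^\intercal]^\dagger\E[\tbw_i\tbw_i^\intercal]$ off the proof of Lemma~\ref{lem:pi_crd_form} to get the per-unit vector $\tfrac{m}{k^2+m}[k,\,-1,\dots,-1]^\intercal$ when $|\cC(\cN_i)|=m$ (part (b)), and bound $\big|k\,x_{i,\varnothing}-\sum_{C}x_{i,\{C\}}\big|=\big|(k+1)Y_i(\mathbf{0})-Y_i(\mathbf{1})\big|\leq (k+2)B$ (part (c)). Your matrix--vector algebra is correct and agrees entry-for-entry with the paper; you even make explicit the identification $\sum_C x_{i,\{C\}}=Y_i(\mathbf{1})-Y_i(\mathbf{0})$ that the paper leaves implicit. (The product order $\E[\tbw_i\tbw_i^\intercal]\E[\tbw_i\tbw_i^\intercal]^\dagger$ versus the paper's $\E[\tbw_i\tbw_i^\intercal]^\dagger\E[\tbw_i\tbw_i^\intercal]$ is harmless, since the proof of Lemma~\ref{lem:pi_crd_form} verifies this matrix is symmetric.)

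The one step that would fail as written is your opening reduction from the unit scale to the cluster scale via \eqref{eq:inverse_design_GCR}. That identity was established in Lemma~\ref{lem:gcr_pi_design} by a rank-decomposition argument: $(AB)^\dagger=B^\dagger A^\dagger$ requires $B=\E[\tbw_i\tbw_i^\intercal]D_i^\intercal$ to have full row rank, which (since $D_i^\intercal$ has full row rank) requires $\E[\tbw_i\tbw_i^\intercal]$ to be invertible. Under $\Comp(k)$, Lemma~\ref{lem:crd_det} shows invertibility fails precisely when $|\cC(\cN_i)|=m$ --- exactly the units that generate the bias. The failure is not merely one of the proof technique: the third Penrose condition for the candidate $D_i(D_i^\intercal D_i)^{-1}\E[\tbw_i\tbw_i^\intercal]^\dagger(D_i^\intercal D_i)^{-1}D_i^\intercal$ requires the projector $\E[\tbw_i\tbw_i^\intercal]\E[\tbw_i\tbw_i^\intercal]^\dagger$ to commute with the diagonal matrix $(D_i^\intercal D_i)^{-1}$, and this fails whenever the cluster design matrix is singular and the block sizes $M_i^1(\{C\})=|C\cap\cN_i|$ are unequal (e.g., $m=2$, $k=1$, with one cluster containing two neighbors of $i$: the null vector $(1,-1,-1)^\intercal$ is not preserved by $\mathrm{diag}(1,\tfrac12,1)$). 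So your telescoping does not identify the $\bz$-scale and $\bw$-scale pseudoinverse estimators in the singular case. The paper sidesteps this entirely: in Appendix~\ref{sec:crd} the estimator $\widehat{\TTE}_1$ is \emph{defined} at the cluster scale by \eqref{eq:gcrcomp_est}, and Theorem~\ref{thm:bias_bound_general} is applied directly with clusters playing the role of units ($\tbw_i$, $\bx_i$, $\psi_i$ in place of $\tbz_i$, $\bc_i$, $\theta_i$). If you replace your first paragraph with this observation, the remainder of your argument goes through verbatim.
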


\begin{proof}
    \textit{Proof: } To apply Theorem~\ref{thm:bias_bound_general}, we must calculate the vector $\Big( \E [\tbw_i \tbw_i^\intercal]^\dagger \E [\tbw_i \tbw_i^\intercal] - I\Big) \psi_i$ for each $i \in [n]$. We do this in two cases. First, when $|\cC(\cN_i)| < m$, Lemma~\ref{lem:crd_det} ensures that $\E [\tbw_i \tbw_i^\intercal]$ is invertible, meaning $\E [\tbw_i \tbw_i^\intercal]^\dagger = \E [\tbw_i \tbw_i^\intercal]^{-1}$; these nodes do not contribute to the bias (verifying part (a) of the lemma). On the other hand, if $|\cC(\cN_i)| = m$, then we can use the explicit form of $\E [\tbw_i \tbw_i^\intercal]^\dagger \E [\tbw_i \tbw_i^\intercal]$ given in the proof of Lemma~\ref{lem:pi_crd_form} to calculate:
    \[
        \Big( \E [\tbw_i \tbw_i^\intercal]^\dagger \E [\tbw_i \tbw_i^\intercal] - I\Big) \psi_i = \frac{m}{k^2+m} \begin{bmatrix} k \\ -1 \\ \vdots \\ -1 \end{bmatrix}
    \]
    Now, from Theorem~\ref{thm:bias_bound_general}, we obtain the exact bias
    \[
        \E \big[ \widehat{\TTE}_1 \big] - \TTE = \tfrac{m}{(k^2+m)n} \sum_{i=1}^{n} \Ind \big( |\cC(\cN_i)| = m \big) \cdot \Big( k \cdot x_{i,\varnothing} - \sum_{C \in \cC} x_{i,\{C\}} \Big).
    \]

    Under Assumption 3, we may bound
    \[
        \Big|k \cdot x_{i,\varnothing} - \sum_{C \in \cC} x_{i,\{C\}}\Big| = \Big| (k+1) Y(\mathbf{0}) - Y(1) \Big| \leq B(k+2),
    \]
    so
    \[
        \Big|\E \big[ \widehat{\TTE}_1 \big] - \TTE \Big| \leq \frac{\# \{ i : |\cC(\cN_i)| = m \}}{n} \cdot \frac{m(k+2)B}{k^2+m}.
    \]
\end{proof}

As mentioned above, for a reasonably large network with reasonably many clusters, we expect that no units will have $|\cC(\cN_i)| = m$, meaning $\widehat{\TTE}_1$ will be unbiased. 

\subsection{Variance}

To apply Theorem~\ref{thm:tte_var_bound} to $\textrm{GCR}_{\textrm{Comp}}(\cC,k)$ designs, we first notice that the proof and results still hold given our representation of the estimator in terms of $\bw$, such that
\begin{equation} \label{eq:tte_var_bound_W}
	\var(\widehat{\TTE}_1) 
        \leq \tfrac{B^2}{n^2} \sum_{i,j=1}^n \gamma_i\gamma_j \cdot \Ind\big(\cov\big( \big\langle \widehat{\bx}_i, \psi_i \big\rangle, \big\langle \widehat{\bx}_j, \psi_j \big\rangle \big) > 0\big)
    \end{equation}
    for $\displaystyle \gamma_i=\sqrt{(\psi_i)^\intercal\E\big[\tbw_i (\tbw_i)^\intercal\big]^\dagger\psi_i}$ and $\widehat{\bx}_i = Y_i(\bw) \E \big[ \tbw_i \tbw_i^\intercal \big]^{\dagger} \tbw_i$.
To make the bound concrete, we must identify when $\psi_i\cov(\hat{\bx}_i, \hat{\bx}_j)\psi_j \leq 0$ and compute $\gamma_i$. For the former, consider two units $i$ and $j$. If the neighborhoods of units $i$ and $j$ are disjoint, then the entries of $\tbw_i$ will be distinct from the entries of $\tbw_j$. This means that all of the entries of $\tbw_i$ will be negatively correlated with the entries of $\tbw_j$, since
\[
    \cov(w_i, w_j) = \frac{k(k-1)}{m(m-1)} - \frac{k^2}{m^2} = -\frac{k(m-k)}{m^2(m-1)} < 0
\]
under complete randomization. It is thus natural to expect that $\psi_i\cov(\hat{\bx}_i, \hat{\bx}_j)\psi_j$ will be negative in this case, which we verify in the following lemma.

\begin{lemma} \label{lem:crd_var_neg_corr}
    Suppose that $\bw \sim \Comp(k)$, Assumption \ref{ass:low_deg} holds with $\beta=1$, and Assumption~\ref{ass:strong_mono} holds, so that the sign of $x_{i, C}$ is constant. Then, for two units $i, j \in [n]$ with  $\cC(\cN_i) \cap \cC(\cN_j) = \varnothing$, we have
    \[
         \cov \big( \psi_i^\intercal \: \widehat{\bx}_i, \psi_{j}^\intercal \: \widehat{\bx}_{j} \big) < 0.
    \]
\end{lemma}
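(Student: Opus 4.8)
The plan is to combine the explicit estimator form from Lemma~\ref{lem:pi_crd_form} with the joint-moment structure of $\Comp(k)$. First I would record that every cluster-neighborhood $\cC(\cN_i)$ contains $\cC(i)$ and is therefore nonempty, so the disjointness hypothesis $\cC(\cN_i)\cap\cC(\cN_j)=\varnothing$ forces $|\cC(\cN_i)|<m$ and $|\cC(\cN_j)|<m$; hence both units fall in the invertible branch of Lemma~\ref{lem:pi_crd_form}, giving $\psi_i^\intercal\widehat\bx_i = Y_i(\bw)\,g_i(\bw)$ with $g_i(\bw)=\alpha_i\sum_{C\in\cC(\cN_i)}(w_C-\tfrac{k}{m})$ and $\alpha_i = \tfrac{m^2(m-1)}{k(m-k)(m-|\cC(\cN_i)|)}>0$. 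Since $\E[g_i]=0$, each $\psi_i^\intercal\widehat\bx_i$ is a polynomial of degree at most two in the cluster indicators $\{w_C:C\in\cC(\cN_i)\}$, while $\psi_j^\intercal\widehat\bx_j$ is a polynomial in the disjoint set $\{w_C:C\in\cC(\cN_j)\}$.

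Next I would reduce the target covariance to cross-covariances of monomials. Expanding $Y_i(\bw)g_i(\bw)$ (using $w_C^2=w_C$), and likewise for $j$, bilinearity of covariance writes $\cov(\psi_i^\intercal\widehat\bx_i,\psi_j^\intercal\widehat\bx_j)$ as a finite linear combination of terms $\cov\big(\prod_{C\in\cA}w_C,\prod_{C'\in\cB}w_{C'}\big)$ with $\cA\subseteq\cC(\cN_i)$ and $\cB\subseteq\cC(\cN_j)$; only nonempty $\cA,\cB$ survive (constants contribute zero), and these are automatically disjoint. Under $\Comp(k)$ the joint moments are $\E[\prod_{C\in\cA}w_C]=p_{|\cA|}$ with $p_r=\prod_{\ell=0}^{r-1}\tfrac{k-\ell}{m-\ell}$, so each such cross-covariance equals $p_{|\cA|+|\cB|}-p_{|\cA|}p_{|\cB|}$. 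Because the factors $\tfrac{k-\ell}{m-\ell}$ are strictly decreasing in $\ell$ whenever $1\le k\le m-1$, one has $p_{a+b}<p_ap_b$ for all $a,b\ge1$; thus every monomial cross-covariance appearing in the sum is strictly negative.

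The hard part will be the sign bookkeeping that turns this into a strictly negative total. The coefficients multiplying these negative cross-covariances are built from $\alpha_i,\alpha_j>0$, the outcome coefficients $x_{i,C},x_{j,C}$, and the shifts $-\tfrac{k}{m}$; Assumption~\ref{ass:strong_mono} guarantees all $x_{i,C}$ share one sign (say nonnegative), but the shifts create intermediate coefficients—e.g. the coefficient $\alpha_i\big[x_{i,\varnothing}+x_{i,\{C\}}\big(1-|\cC(\cN_i)|\tfrac{k}{m}\big)\big]$ of the linear term $w_C$—that need not be nonnegative, so a naive term-by-term bound can fail. To resolve this I would group the contributions by monomial degree and carry out the summation exactly, using the explicit coefficients from Lemma~\ref{lem:pi_crd_form}, and verify that the cancellations leave a strictly negative remainder. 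Concretely, I expect the computation to collapse into a sum of products of nonnegative quantities (such as $x_{i,\varnothing}x_{j,\varnothing}$, $x_{i,\varnothing}x_{j,\{C\}}$, and $x_{i,\{C\}}x_{j,\{D\}}$) each multiplied by a negative factor determined by the $p_r$'s, mirroring the two-cluster instance $m=2,k=1$, where one computes $\cov=-2x_{j,\varnothing}(x_{i,\varnothing}+x_{i,\{C\}})-2x_{i,\varnothing}(x_{j,\varnothing}+x_{j,\{C'\}})-x_{i,\{C\}}x_{j,\{C'\}}$. Strictness then follows provided the outcomes are not identically constant, i.e. some $x_{i,\varnothing}$ or $x_{i,\{C\}}$ is nonzero, which the nondegenerate monotone setting supplies.
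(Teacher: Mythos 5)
Your setup is correct and matches the paper's own route: disjointness of $\cC(\cN_i)$ and $\cC(\cN_j)$ does force both units into the invertible branch of Lemma~\ref{lem:pi_crd_form} (since $\cC(i)\notin\cC(\cN_j)$ and vice versa), the covariance does reduce to $\Comp(k)$ joint moments of products of cluster indicators, your observation that $p_{a+b}<p_ap_b$ cannot be applied term-by-term because the $-\tfrac{k}{m}$ shifts create mixed-sign monomial coefficients is exactly right, and your predicted grouping by coefficient products $x_{i,\varnothing}x_{j,\varnothing}$, $x_{i,\varnothing}x_{j,\{C\}}$, $x_{i,\{C\}}x_{j,\{C'\}}$ is precisely the decomposition the paper uses (its terms $(\star)$, $(\star\star)$, and the quadratic group). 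Your $m=2$, $k=1$ sanity check also computes correctly.

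However, there is a genuine gap: the decisive step is deferred, not done. After subtracting the product of means $\sum_{h}x_{i,h}\sum_{h'}x_{j,h'}$, the sign of the net factor multiplying the quadratic group is \emph{not} an automatic consequence of the negative monomial cross-covariances; in the paper it simplifies to an expression of the form
\[
-1 + \tfrac{m(m-1)(k-1)(m-2|\cC(\cN_i)|-2|\cC(\cN_j)|+2)}{k(m-2)(m-|\cC(\cN_i)|)(m-|\cC(\cN_j)|)} - \tfrac{m(m-1)(k-1)(mk-6m+6k)(|\cC(\cN_i)|-1)(|\cC(\cN_j)|-1)}{k(m-k)(m-2)(m-3)(m-|\cC(\cN_i)|)(m-|\cC(\cN_j)|)},
\]
whose middle term can exceed $1$ in magnitude and whose last term changes sign with $mk-6m+6k$, so negativity only emerges from a five-case analysis ($k=1$; $k=2$ with subcases on the sign of $m-2|\cC(\cN_j)|$; cluster-degree $1$; $k\geq 6$; $k\in\{3,4,5\}$ with degrees $\geq 2$, the last requiring a derivative/monotonicity argument in $|\cC(\cN_i)|$). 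Your proposal says ``I expect the computation to collapse into a sum of products of nonnegative quantities each multiplied by a negative factor'' --- that expectation is a conjecture standing in for the bulk of the proof, and the collapse is not clean in the way you anticipate. Two smaller inaccuracies: for $k=1$ the degree-two cross-covariances are zero rather than strictly negative (since $p_2=0$, so $p_{2+b}-p_2p_b=0$), which is harmless but contradicts your claim that ``every monomial cross-covariance appearing in the sum is strictly negative''; and your nondegeneracy caveat for strictness is a fair (indeed more careful than the paper's) observation, but it does not compensate for the missing case analysis.
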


\begin{proof}
    Substituting the definition of $\widehat{\bx}_i$ and $\widehat{\bx}_i'$, we have,
    \begin{align*}
        \cov \big( \psi_i^\intercal \: \widehat{\bx}_i, \psi_{i'}^\intercal \: \widehat{\bx}_{i'} \big)
        &= \cov \bigg( Y_i(\bw) \Big\langle \E \big[ \tbw_i \tbw_i^\intercal \big] \psi_i, \tbw_i \Big\rangle \;,\; Y_{i'}(\bw) \Big\langle \E \big[ \tbw_{i'} \tbw_{i'}^\intercal \big] \psi_{i'}, \tbw_{i'} \Big\rangle \bigg) \\
        &= \E \bigg[ Y_i(\bw) \Big\langle \E \big[ \tbw_i \tbw_i^\intercal \big] \psi_i, \tbw_i \Big\rangle \cdot Y_{i'}(\bw) \Big\langle \E \big[ \tbw_{i'} \tbw_{i'}^\intercal \big] \psi_{i'}, \tbw_{i'} \Big\rangle \bigg] \\
        &\hspace{40pt} - \E \bigg[ Y_i(\bw) \Big\langle \E \big[ \tbw_i \tbw_i^\intercal \big] \psi_i, \tbw_i \Big\rangle \bigg] \cdot \E \bigg[ Y_{i'}(\bw) \Big\langle \E \big[ \tbw_{i'} \tbw_{i'}^\intercal \big] \psi_{i'}, \tbw_{i'} \Big\rangle \bigg]
    \end{align*}
    Since $i \in \cN_i$ and $\cC(\cN_i) \cap \cC(\cN_{i'}) = \varnothing$, we know that $\cC(i) \not\in \cC(\cN_{i'})$, so $|\cC(\cN_{i'})| < m$. Similarly, $|\cC(\cN_i)| < m$. Hence, we can use our calculation from the proof of Lemma~\ref{lem:pi_crd_form} to rewrite this
    \begin{align}
        \tfrac{m^4(m-1)^2}{k^2(m-k)^2(m-|\cC(\cN_i)|)(m-|\cC(\cN_{i'})|)} \bigg( \E & \bigg[ Y_i(\bw) \cdot Y_{i'}(\bw) \sum_{C \in \cC(\cN_i)} \big( w_C - \tfrac{k}{m} \big) \sum_{C' \in \cC(\cN_{i'})} \big( w_{C'} - \tfrac{k}{m} \big) \bigg] \\
        &- \E \bigg[ Y_i(\bw) \sum_{C \in \cC(\cN_i)} \big( w_C - \tfrac{k}{m} \big) \bigg] \E \bigg[ Y_{i'}(\bw) \sum_{C' \in \cC(\cN_{i'})} \big( w_{C'} - \tfrac{k}{m} \big) \bigg] \bigg) \notag \\
        = \tfrac{m^4(m-1)^2}{k^2(m-k)^2(m-|\cC(\cN_i)|)(m-|\cC(\cN_{i'})|)} \E & \bigg[ Y_i(\bw) \cdot Y_{i'}(\bw) \hspace{-8pt} \sum_{C \in \cC(\cN_i)} \hspace{-8pt} \big( w_C - \tfrac{k}{m} \big) \hspace{-8pt} \sum_{C' \in \cC(\cN_{i'})} \hspace{-8pt} \big( w_{C'} - \tfrac{k}{m} \big) \bigg] - \hspace{-8pt} \sum_{C \in \cC(\cN_i)} \hspace{-4pt} x_{i,C} \hspace{-4pt} \sum_{C' \in \cC(\cN_{i'})} \hspace{-4pt} x_{i',C'} \label{eq:covar_with_exp_prod}.
    \end{align}  
    Here, the second line uses an expectation calculation from the proof of Lemma~\ref{lem:crd_bias}. Next, let us focus our attention on the expectation in the latter expression. By substituting in the definitions of $Y_i(\bw)$ and $Y_{i'}(\bw)$ (under our assumed $\beta=1$ order potential outcomes model), we can rewrite this expectation
    \begin{align*}
        \E &\bigg[ \Big( x_{i,\varnothing} + \sum_{h \in \cC(\cN_i)} x_{i,h} \Big) \Big( x_{i',\varnothing} + \sum_{h' \in \cC(\cN_{i'})} x_{i',h'} \Big) \sum_{C \in \cC(\cN_i)} \big( w_C - \tfrac{k}{m} \big) \sum_{C' \in \cC(\cN_{i'})} \big( w_{C'} - \tfrac{k}{m} \big) \bigg] \\
        = \:&x_{i,\varnothing} \cdot x_{i', \varnothing} \sum_{C \in \cC(\cN_i)} \sum_{C' \in \cC(\cN_{i'})} \E \Big[ \big( w_C - \tfrac{k}{m} \big) \big( w_{C'} - \tfrac{k}{m} \big) \Big] \tag{$\star$} \\
        &+ \: x_{i,\varnothing} \sum_{h' \in \cC(\cN_{i'})} x_{i',h'} \sum_{C \in \cC(\cN_i)} \sum_{C' \in \cC(\cN_{i'})} \E \Big[ w_{h'} \big( w_C - \tfrac{k}{m} \big) \big( w_{C'} - \tfrac{k}{m} \big) \Big] \tag{$\star\star$}\\
        &+ \: x_{i',\varnothing} \sum_{h \in \cC(\cN_i)} x_{i,h} \sum_{C \in \cC(\cN_i)} \sum_{C' \in \cC(\cN_{i'})} \E \Big[ w_h \big( w_C - \tfrac{k}{m} \big) \big( w_{C'} - \tfrac{k}{m} \big) \Big] \\
        &+ \: \sum_{h \in \cC(\cN_i)} x_{i,h} \sum_{h' \in \cC(\cN_{i'})} x_{i',h'} \sum_{C \in \cC(\cN_i)} \sum_{C' \in \cC(\cN_{i'})} \E \Big[ w_h w_{h'} \big( w_C - \tfrac{k}{m} \big) \big( w_{C'} - \tfrac{k}{m} \big) \Big].
    \end{align*}
    Now, let us separately consider the expectations in ($\star$) and ($\star\star$); the following line has the same form, so it is handled analogously. In all of these expressions, note that our assumption that $\cC(\cN_i) \cap \cC(\cN_{i'}) = \varnothing$ ensures that $w_C \ne w_{C'}$ (and also $w_h \ne w_{C'}$ and $w_{h'} \ne w_C$). Thus, in ($\star$),
    \[
        \E \Big[ \big( w_C - \tfrac{k}{m} \big) \big( w_{C'} - \tfrac{k}{m} \big) \Big] = \tfrac{k(k-1)}{m(m-1)} - \tfrac{k^2}{m^2} = \tfrac{-k(m-k)}{m^2(m-1)}.
    \]
    This is always negative, and Assumption~\ref{ass:strong_mono} ensures that $x_{i,\varnothing} \cdot x_{i',\varnothing} \geq 0$, so the entire expression ($\star)$ is $\leq 0$.

    In $(\star\star)$, there are two possibilities for $w_{C'}$ which we must handle separately. First, if $w_{C'} = w_{h'}$, then
    \[
        \E \Big[ w_{h'} \big( w_C - \tfrac{k}{m} \big) \big( w_{C'} - \tfrac{k}{m} \big) \Big] = \tfrac{k(k-1)}{m(m-1)} - \tfrac{k^2}{m^2} - \tfrac{k^2(k-1)}{m^2(m-1)} + \tfrac{k^3}{m^3} = \tfrac{-k(m-k)^2}{m^3(m-1)}.
    \]
    Next, if $w_{C'} \ne w_{h'}$, then  
    \[
        \E \Big[ w_{h'} \big( w_C - \tfrac{k}{m} \big) \big( w_{C'} - \tfrac{k}{m} \big) \Big] = \tfrac{k(k-1)(k-2)}{m(m-1)(m-2)} - \tfrac{2k^2(k-1)}{m^2(m-1)} + \tfrac{k^3}{m^3} = \tfrac{-k(m-k)(mk-2m+2k)}{m^3(m-1)(m-2)}.
    \]
    Together, this allows us to simplify ($\star\star$) to
    \[
        x_{i,\varnothing} \sum_{h' \in \cC(\cN_{i'})} x_{i',h'} \cdot |\cC(\cN_i)| \cdot \Big( \tfrac{-k(m-k)}{m^2(m-1)} + (|\cC(\cN_{i'})|-1) \cdot \tfrac{-k(m-k)(mk-2m+2k)}{m^3(m-1)(m-2)} \Big).
    \]
    Within the large parentheses, note that the first fraction is always negative, and the second fraction is negative when $k \geq 2$, as this makes $mk - 2m + 2k > 0$. If $k=1$, then this parenthesized expression simplifies to $\tfrac{|\cC(\cN_{i'})|-1-m}{m^3} < 0$. Again using Assumption~\ref{ass:strong_mono}, we find that the entire expression ($\star\star$) is $\leq 0$. 

    These observations allow us to upper-bound \eqref{eq:covar_with_exp_prod} by,
    \begin{align*}
        \sum_{h \in \cC(\cN_i)} x_{i,h} \sum_{h' \in \cC(\cN_{i'})} x_{i',h'} \bigg( \tfrac{m^4(m-1)^2}{k^2(m-k)^2(m-|\cC(\cN_i)|)(m-|\cC(\cN_{i'})|)} \sum_{C \in \cC(\cN_i)} \sum_{C' \in \cC(\cN_{i'})} \E \Big[ w_h w_{h'} \big( w_C - \tfrac{k}{m} \big) \big( w_{C'} - \tfrac{k}{m} \big) \Big] - 1 \bigg).
    \end{align*}

    Let us simplify this expectation. Here, there are three possibilities for $w_C$ and $w_{C'}$ which we must handle separately. First, if $w_C = w_h$ and $w_{C'} = w_{h'}$, then
    \[
        \E \Big[ w_h w_{h'} \big( w_C - \tfrac{k}{m} \big) \big( w_{C'} - \tfrac{k}{m} \big) \Big] = \tfrac{k(k-1)}{m(m-1)} - \tfrac{2k^2(k-1)}{m^2(m-1)} + \tfrac{k^3(k-1)}{m^3(m-1)} = \tfrac{k(k-1)(m-k)^2}{m^3(m-1)}.
    \]
    Next, if either $w_C = w_h$ or $w_{C'} = w_{h'}$, but not both, then
    \[
        \E \Big[ w_h w_{h'} \big( w_C - \tfrac{k}{m} \big) \big( w_{C'} - \tfrac{k}{m} \big) \Big] = \tfrac{k(k-1)(k-2)}{m(m-1)(m-2)} - \tfrac{k^2(k-1)(k-2)}{m^2(m-1)(m-2)} - \tfrac{k^2(k-1)}{m^2(m-1)} + \tfrac{k^3(k-1)}{m^3(m-1)} = \tfrac{-2k(k-1)(m-k)^2}{m^3(m-1)(m-2)}.
    \]
    Finally, if $w_C \ne w_h$ and $w_{C'} \ne w_{h'}$, then
    \[
        \E \Big[ w_h w_{h'} \big( w_C - \tfrac{k}{m} \big) \big( w_{C'} - \tfrac{k}{m} \big) \Big] = \tfrac{k(k-1)(k-2)(k-3)}{m(m-1)(m-2)(m-3)} - \tfrac{2k^2(k-1)(k-2)}{m^2(m-1)(m-2)} + \tfrac{k^3(k-1)}{m^3(m-1)} = \tfrac{-k(k-1)(m-k)(mk-6m+6k)}{m^3(m-1)(m-2)(m-3)}.
    \]
    Together, this allows us to rewrite our bound on \eqref{eq:covar_with_exp_prod} as 
    \begin{align*}
        &\sum_{h \in \cC(\cN_i)} x_{i,h} \sum_{h' \in \cC(\cN_{i'})} x_{i',h'} \bigg( -1 + \tfrac{m^4(m-1)^2}{k^2(m-k)^2(m-|\cC(\cN_i)|)(m-|\cC(\cN_{i'})|)} \cdot \\
        &\hspace{20pt} \bigg[ \tfrac{k(k-1)(m-k)^2}{m^3(m-1)} - (|\cC(\cN_i)| + |\cC(\cN_{i'})| - 2) \tfrac{2k(k-1)(m-k)^2}{m^3(m-1)(m-2)} - (|\cC(\cN_i)| - 1) (|\cC(\cN_{i'})| - 1) \tfrac{k(k-1)(m-k)(mk-6m+6k)}{m^3(m-1)(m-2)(m-3)} \bigg] \bigg) \\
        &= \sum_{h \in \cC(\cN_i)} x_{i,h} \sum_{h' \in \cC(\cN_{i'})} x_{i',h'} \bigg( -1 + \tfrac{m(m-1)(k-1)(m-2|\cC(\cN_i)|-2|\cC(\cN_{i'})|+2)}{k(m-2)(m-|\cC(\cN_i)|)(m-|\cC(\cN_{i'})|)} \\
        & \hspace{20pt} - \tfrac{m(m-1)(k-1)(mk-6m+6k)(|\cC(\cN_i)|-1)(|\cC(\cN_{i'})|-1)}{k(m-k)(m-2)(m-3)(m-|\cC(\cN_i)|)(m-|\cC(\cN_{i'})|)} \bigg). \\
    \end{align*}
    By our same-signs assumption, our proof will be complete once we argue that the parenthesized expression in the last expression is negative, which we do via a case analysis.

    \textbf{Case 1:} $k=1$

    Both fractions in the parenthesized expression zero out, and we are left with $-1 < 0$.

    \textbf{Case 2:} $k=2$

    The parenthesized expression simplifies to $-1 + \tfrac{m(m-1)(m-2|\cC(\cN_i)|)(m-2|\cC(\cN_{i'})|)}{2(m-2)^2(m-|\cC(\cN_i)|)(m-|\cC(\cN_{i'})|)}$. Since $\cC(\cN_i)$ and $\cC(\cN_{i'})$ are disjoint, $|\cC(\cN_i)| + |\cC(\cN_{i'})| \leq m$. Furthermore if $(m-2|\cC(\cN_{i'})|) < 0$, then we must have that $m-2|\cC(\cN_i)| > 0$. In this case, the parenthesized fraction is negative. Thus, we are left to consider the possibility where $m-2|\cC(\cN_{i'})| > 0$. Since $|\cC(\cN_i)| > 1$, we have,
    \[
        \tfrac{(m-1)(m-2|\cC(\cN_i)|)}{(m-2)(m-|\cC(\cN_i)|)} \leq \tfrac{(m-1)(m-|\cC(\cN_i)|-1)}{(m-2)(m-|\cC(\cN_i)|)} = \tfrac{m^2-n|\cC(\cN_i)|-2n + |\cC(\cN_i)|+1}{m^2-n|\cC(\cN_i)|-2m+2|\cC(\cN_i)|} \leq 1.
    \]
    Thus, we can upper-bound the parenthesized expression by $-1 + \tfrac{m(m-2|\cC(\cN_{i'})|)}{2(m-2)(m-|\cC(\cN_{i'})|)}$. Noting that $|\cC(\cN_{i'})| \geq 1$, we can bound this fraction,
    \[
        \tfrac{m(m-2|\cC(\cN_{i'})|)}{m(m-2|\cC(\cN_{i'})|) + (m^2 - 4n + 4|\cC(\cN_i)|')} \leq \tfrac{m(m-2|\cC(\cN_{i'})|)}{m(m-2|\cC(\cN_{i'})|) + (m^2 - 4n + 4)} = \tfrac{m(m-2|\cC(\cN_{i'})|)}{m(m-2|\cC(\cN_{i'})|) + (m-2)^2} \leq \tfrac{m(m-2|\cC(\cN_{i'})|)}{m(m-2|\cC(\cN_{i'})|) + (m-2)^2} < 1. 
    \]
    Again, we find that the parenthesized expression must be negative. 

    \textbf{Case 3:} $|\cC(\cN_i)| = 1$ (analogously, $|\cC(\cN_{i'})| = 1$), \quad $k \geq 2$

    Note that the factor of $|\cC(\cN_i)| - 1$ in the second fraction makes it assume value 0 in this case. Thus, we can simplify the expression to $-1 + \tfrac{m(k-1)(m-2|\cC(\cN_{i'})|)}{k(m-2)(m-|\cC(\cN_{i'})|)}$. Taking the derivative with respect to $|\cC(\cN_{i'})|$, we obtain $\tfrac{-m^2(k-1)}{k(m-2)(m-|\cC(\cN_{i'})|)^2} < 0$. Thus, the parenthesized expression is maximized when $|\cC(\cN_{i'})| = 1$, with value $-1 + \tfrac{m(k-1)}{k(m-1)} = \tfrac{-(m-k)}{k(m-1)} < 0$.

    \textbf{Case 4:} $k \geq 6$

    Here, the factor $mk -6m + 6k \geq 6k$ is positive, as is every other factor in the second fraction. Thus, we can upper-bound the parenthesized expression by $-1 + \tfrac{m(m-1)(k-1)(m-2|\cC(\cN_i)|-2|\cC(\cN_{i'})|+2)}{k(m-2)(m-|\cC(\cN_i)|)(m-|\cC(\cN_{i'})|)}$.
    Note that the factor $\tfrac{(m-1)(k-1)}{k(m-2)} = \tfrac{mk-2k-(m-1-k)}{mk-2k} \leq 1$ since $k \leq m-1$. Thus, we may further upper bound the expression by $-1 + \tfrac{m(m-2|\cC(\cN_i)|-2|\cC(\cN_{i'})|+2)}{(m-|\cC(\cN_i)|)(m-|\cC(\cN_{i'})|)}$. Since $|\cC(\cN_i)|, |\cC(\cN_{i'})| \geq 1$, we can again upper bound this by
    \[
        -1 + \tfrac{m(m-|\cC(\cN_i)|-|\cC(\cN_{i'})|)}{(m-|\cC(\cN_i)|)(m-|\cC(\cN_{i'})|)} = \tfrac{-|\cC(\cN_i)||\cC(\cN_{i'})|}{(m-|\cC(\cN_i)|)(m-|\cC(\cN_{i'})|)} < 0.
    \]

    \textbf{Case 5:} $k \in \{3,4,5\}, \quad |\cC(\cN_i)|, |\cC(\cN_{i'})| \geq 2$

    Taking the derivative of the parenthesized expression with respect to $|\cC(\cN_i)|$, we obtain
    \begin{gather*}
        \tfrac{m(m-1)(k-1)}{k(m-2)(m-|\cC(\cN_{i'})|)} \cdot \tfrac{\partial}{\partial |\cC(\cN_i)|} \Big[ \tfrac{(m-2|\cC(\cN_i)|-2|\cC(\cN_{i'})|+2)}{(m-|\cC(\cN_i)|)}  - \tfrac{(mk-6m+6k)(|\cC(\cN_i)|-1)(|\cC(\cN_{i'})|-1)}{(m-k)(m-3)(m-|\cC(\cN_i)|)} \Big] \\
        = \tfrac{m(m-1)(k-1)}{k(m-2)(m-|\cC(\cN_{i'})|)} \cdot \Big[ \tfrac{-m-2|\cC(\cN_{i'})|+2}{(m-|\cC(\cN_i)|)^2} - \tfrac{(mk-6m+6k)(|\cC(\cN_{i'})|-1)}{(m-k)(m-3)} \cdot \tfrac{m-1}{(m-|\cC(\cN_i)|)^2} \Big] \\
        = \tfrac{m(m-1)(k-1)}{k(m-2)(m-3)(m-k)(m-|\cC(\cN_{i'})|)(m-|\cC(\cN_i)|)^2} \cdot \hspace{220pt} \\
        \Big[ (m-k)(m-3)(-m-2|\cC(\cN_{i'})|+2) - (m-1)(mk-6m+6k)(|\cC(\cN_{i'})|-1) \Big]
    \end{gather*}
    In this last expression, the outer fraction is positive, so we can restrict to understanding the sign of the bracketed expression. We do this separately for each value of $k$.
    
    \underline{$k=3$}: The expression simplifies to $-m^3 + m^2 |\cC(\cN_{i'})| + 5m^2 - 9m|\cC(\cN_{i'})|$. If $m=4$, this further simplifies to $16-20|\cC(\cN_{i'})| < 0$. Otherwise, if $m \geq 5$, we can rewrite it $-m(m-5)(m-|\cC(\cN_i)|) - 4m|\cC(\cN_{i'})| < 0$.

    \underline{$k=4$}: The expression simplifies to $-m(m^2 - 7m + 12|\cC(\cN_{i'})|)$. Under our assumption that $|\cC(\cN_{i'})| \geq 2$, we can upper bound this by $-m(m^2 - 7m + 24) < 0$.

    \underline{$k=5$}: The expression simplifies to $-m(m^2 - 9m + |\cC(\cN_{i'})|m + 15|\cC(\cN_{i'})|)$. Under our assumption that $|\cC(\cN_{i'})| \geq 2$, we can upper bound this by $-m(m^2 - 7m + 30) < 0$.
\end{proof}

The next step is to compute $\gamma_i=\sqrt{|\cC_i^1|\cdot \psi_i^\intercal\E[\tbw_i\tbw_i^\intercal]^{\dagger}\psi_i}$. Since we assume that $\beta=1$, we have $|\cC_i^1|=|\cC(\cN_i)|+1$, leaving only the task of computing the quadratic form. As the form of the design matrix pseudoinverse differs when $|\cC(\cN_i)|<m$ versus when $|\cC(\cN_i)|=m$, our calculation also distinguishes these cases.

\begin{lemma} \label{lem:unit_crd_gamma}
	Suppose that $\bw\sim \Comp(k)$ and that Assumption~\ref{ass:low_deg} holds with $\beta = 1$. Then
	\begin{equation*}
		\gamma_i^2 = \begin{cases}
        {\frac{(|\cC(\cN_i)|+1)|\cC(\cN_i)|(m-1)}{(k/m)(1-k/m)(m-|\cC(\cN_i)|)}} & |\cC(\cN_i)| < m, \\[8pt]
        {\frac{(m+1)m^2k^2}{(k^2+m)^2}} & |\cC(\cN_i)| = m.
    \end{cases}
    \end{equation*}
\end{lemma}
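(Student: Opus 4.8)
The plan is to reduce everything to the explicit expressions for $\E[\tbw_i\tbw_i^\intercal]^\dagger \psi_i$ that were already produced inside the proof of Lemma~\ref{lem:pi_crd_form}, and then read off the quadratic form as a single inner product. Writing $\gamma_i^2 = |\cC_i^1| \cdot \psi_i^\intercal \E[\tbw_i\tbw_i^\intercal]^\dagger \psi_i$ and using that $\beta = 1$ forces $|\cC_i^1| = |\cC(\cN_i)| + 1$, the only real work is to evaluate $\psi_i^\intercal \E[\tbw_i\tbw_i^\intercal]^\dagger \psi_i$ in each of the two regimes $|\cC(\cN_i)| < m$ and $|\cC(\cN_i)| = m$. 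The crucial simplifying observation is that $\psi_i = [0\;1\;\cdots\;1]^\intercal$, so for any vector $v$ the product $\psi_i^\intercal v$ is simply the sum of all entries of $v$ other than the first; hence $\psi_i^\intercal \E[\tbw_i\tbw_i^\intercal]^\dagger \psi_i$ equals the sum of the last $|\cC(\cN_i)|$ coordinates of the vector $\E[\tbw_i\tbw_i^\intercal]^\dagger \psi_i$.

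First I would handle the case $|\cC(\cN_i)| < m$. The proof of Lemma~\ref{lem:pi_crd_form} shows that $\E[\tbw_i\tbw_i^\intercal]^\dagger \psi_i = \tfrac{m(m-1)}{k(m-k)(m-|\cC(\cN_i)|)} \cdot \big[-|\cC(\cN_i)|k,\; m,\; \ldots,\; m\big]^\intercal$. Summing its final $|\cC(\cN_i)|$ entries (each equal to $m$) and restoring the scalar prefactor gives $\psi_i^\intercal \E[\tbw_i\tbw_i^\intercal]^\dagger \psi_i = \tfrac{m^2(m-1)|\cC(\cN_i)|}{k(m-k)(m-|\cC(\cN_i)|)}$. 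Rewriting $\tfrac{k(m-k)}{m^2} = (k/m)(1-k/m)$ and multiplying by the combinatorial prefactor $|\cC(\cN_i)|+1$ recovers exactly the first branch of the claimed formula. Next I would handle $|\cC(\cN_i)| = m$: the same proof gives $\E[\tbw_i\tbw_i^\intercal]^\dagger \psi_i = \tfrac{mk}{(k^2+m)^2} \cdot \big[m,\; k,\; \ldots,\; k\big]^\intercal$, so summing its last $m$ coordinates (each equal to $k$) yields $\psi_i^\intercal \E[\tbw_i\tbw_i^\intercal]^\dagger \psi_i = \tfrac{m^2k^2}{(k^2+m)^2}$, and multiplying by $|\cC(\cN_i)|+1 = m+1$ produces the second branch.

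Since all of the substantive linear algebra --- the closed form of the pseudoinverse and, in the rank-deficient case $|\cC(\cN_i)| = m$, the verification of the Moore--Penrose conditions --- has already been discharged in the proof of Lemma~\ref{lem:pi_crd_form}, I do not anticipate any genuine obstacle. The only care required is bookkeeping: correctly identifying which coordinates $\psi_i$ selects and propagating the scalar prefactors through the algebra. The single point worth flagging is the combinatorial factor $|\cC_i^1| = |\cC(\cN_i)|+1$ appearing in $\gamma_i^2$, which must be carried consistently throughout so that the final constants match the stated expressions.
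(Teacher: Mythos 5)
Your proposal is correct and follows essentially the same route as the paper's own proof: both read off the explicit vectors $\E[\tbw_i\tbw_i^\intercal]^\dagger\psi_i$ computed in the proof of Lemma~\ref{lem:pi_crd_form}, contract with $\psi_i$ to get $\tfrac{m^2|\cC(\cN_i)|(m-1)}{k(m-k)(m-|\cC(\cN_i)|)}$ and $\tfrac{m^2k^2}{(k^2+m)^2}$ in the two regimes, and multiply by $|\cC_i^1| = |\cC(\cN_i)|+1$. Your flagged bookkeeping point about carrying the $|\cC_i^1|$ factor is indeed the only subtlety, and you handle it exactly as the paper does.
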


Note that this lemma still holds if we replace $|\cC(\cN_i)|$ with an upper bound.

\begin{proof}\textit{Proof: } 
    Using the intermediary calculations from the proof of Lemma~\ref{lem:pi_crd_form}, we find that for any $i$ with $|\cC(\cN_i)| < m$,
    \begin{align*}
        \psi_i^\intercal\E[\tbw_i\tbw_i^\intercal]^\dagger\psi_i &= \frac{m(m-1)}{k(m-k)(m-|\cC(\cN_i)|)} \begin{bmatrix} 
            0 & 1 & \hdots & 1
        \end{bmatrix}
        \begin{bmatrix}
            -|\cC(\cN_i)| \cdot k \\
            m \\
            \vdots \\
            m
        \end{bmatrix}\\
        &=
        \frac{m^2|\cC(\cN_i)|(m-1)}{k(m-k)(m-|\cC(\cN_i)|)}\\
        &=\frac{|\cC(\cN_i)|(m-1)}{k/m(1-k/m)(m-|\cC(\cN_i)|)}.
    \end{align*}
    Similarly, for any $i$ with $|\cC(\cN_i)| = m$, 
    \[
        \psi_i^\intercal\E[\tbw_i\tbw_i^\intercal]^\dagger\psi_i = \frac{mk}{(k^2+m)^2} \begin{bmatrix} 
            0 & 1 & \hdots & 1
        \end{bmatrix}
        \begin{bmatrix}
            m \\
            k \\
            \vdots \\
            k \\
        \end{bmatrix} 
        =
        \frac{m^2k^2}{(k^2+m)^2}.
    \]
    Plugging these expressions into the formula $\gamma_i = \sqrt{|\cC_i^1| \cdot \psi_i^\intercal\E[\tbw_i\tbw_i^\intercal]^\dagger\psi_i}$, and noting that $|\cC_i^1| = (|\cC(\cN_i)|+1)$ when $\beta=1$ gives the expression from the lemma.
\end{proof}

From Lemma~\ref{lem:unit_crd_gamma}, we see that the value of $\gamma_i^2$ increases monotonically with $|\cC(\cN_i)|$ for $|\cC(\cN_i)|<m$, but then decreases at $|\cC(\cN_i)|=m$. Indeed, for $|\cC(\cN_i)|=m-1$, we have $\gamma_i^2=m(m-1)^2/((k/m)(1-k/m))$, while for $|\cC(\cN_i)|=m$, we have $\gamma_i=(m+1)m^2/(k+m/k)^2$. The numerators of both expressions are of order $m^3$, but the denominator of the latter expression is much larger (in particular, it is greater than $1$, unlike $k/m(1-k/m)$). Thus, the value of $\gamma_i$ will typically be much smaller when $|\cC(\cN_i)|=m$. Since these units with $|\cC(\cN_i)|=m$ control the scale of the bias, \skedit{the relative bias and variance of our estimator are mediated by the number of units with $|\cC(\cN_i)|=m$: when there are no units with $|\cC(\cN_i)|=m$, the mean-squared error is simply the variance, while when there are many units with $|\cC(\cN_i)|=m$, the mean-squared error is dominated by the bias. We note that this differs from typical bias-variance trade-offs in that the analyst does not control the number of such nodes, which is a property of the graph.}

This bias-variance trade-off under $\GCR_{\Comp}(\cC, k_c)$ designs is a somewhat unusual phenomenon that differs from those previously observed in the literature. Other estimators are either unbiased for all clusterings $\cC$ (e.g.,~the Horvitz--Thompson estimator of \eqref{eq:tte_ht}) or biased for all clusterings $\cC$ (e.g.,~the difference in means estimator). For estimators that are unbiased for all clusterings, we typically expect their variances strictly increase as the values of $|\cC(\cN_i)|$ increase. Here, we find that the variance of $\widehat{\TTE}_1$ increases as the value of $|\cC(\cN_i)|$ for a node increases, until we reach the threshold $|\cC(\cN_i)|=m$, at which point the variance decreases and we incur bias instead. This bias-variance trade-off mediated by the quantity $\#\{i: |\cC(\cN_i)|=m\}$ is, to the best of our knowledge, unique to this setting, and our results uncovering these properties are novel to our work. The empirical bias of $\widehat{\TTE}_1$ in our simulations in Section~\ref{subsec:comp_gcr_bias_sim} illustrates the dependence on $\#\{i: |\cC(\cN_i)|=m\}$ predicted by our upper bounds. Thus, our bounds correctly capture the dependence of the bias on the clustering.

Finally, combining our two lemmas with Theorem~\ref{thm:tte_var_bound}, we obtain the following result on the variance of $\widehat{\TTE}_1$ under completely randomized designs when $\beta=1$.

\begin{corollary}
    \label{cor:crd_var_bound}
    Suppose that $\bz\sim \GCR_{\Comp}(\cC, k)$ where each cluster has size at most $N$, $\max_i |\cC(\cN_i)| = C < m$, Assumption~\ref{ass:low_deg} holds with $\beta=1$, Assumption~\ref{ass:bounded} holds for some $B$, Assumption~\ref{ass:strong_mono} holds, and $k/m<1/2$. Then 
	\begin{equation*}
		\var(\widehat{\TTE}_1)\leq O\left(\frac{B^2}{n}\cdot\frac{C^3 N \dmax}{k/m}\cdot \frac{m}{m-C}\right).
	\end{equation*}
\end{corollary}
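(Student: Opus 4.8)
The plan is to instantiate the general variance bound of Theorem~\ref{thm:tte_var_bound} in its cluster-level form \eqref{eq:tte_var_bound_W}, namely
\[
    \var(\widehat{\TTE}_1) \leq \frac{B^2}{n^2} \sum_{i,j=1}^n \gamma_i \gamma_j \cdot \Ind\big(\cov(\langle \widehat{\bx}_i, \psi_i\rangle, \langle \widehat{\bx}_j, \psi_j\rangle) > 0\big),
\]
and then control the double sum by separately bounding (i) the number of surviving terms and (ii) the magnitude of each $\gamma_i\gamma_j$ over those terms. The two hypotheses $\max_i|\cC(\cN_i)| = C < m$, Assumption~\ref{ass:strong_mono}, and $k/m < 1/2$ are exactly what make each of these steps go through cleanly.

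For step (i), I would invoke Lemma~\ref{lem:crd_var_neg_corr}: under the monotonicity Assumption~\ref{ass:strong_mono}, any pair with $\cC(\cN_i) \cap \cC(\cN_j) = \varnothing$ has strictly negative covariance, so its indicator vanishes and only pairs with $\cC(\cN_i) \cap \cC(\cN_j) \neq \varnothing$ contribute. Counting these proceeds exactly as in the proof of Corollary~\ref{thm:pi_gcr}: each $i$ is adjacent to at most $C$ clusters, each containing at most $N$ units, each of which is an in-neighbor of at most $\dmax$ units $j$; thus for each $i$ there are at most $CN\dmax$ surviving $j$, giving at most $nCN\dmax$ surviving pairs overall.

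For step (ii), since every unit has $|\cC(\cN_i)| \leq C < m$, each falls into the $|\cC(\cN_i)| < m$ branch of Lemma~\ref{lem:unit_crd_gamma}, so
\[
    \gamma_i^2 = \frac{(|\cC(\cN_i)|+1)\,|\cC(\cN_i)|\,(m-1)}{(k/m)(1-k/m)(m-|\cC(\cN_i)|)}.
\]
I would bound this monotonically in $|\cC(\cN_i)| \leq C$: the numerator is $O(C^2 m)$, the hypothesis $k/m < 1/2$ forces $1-k/m > 1/2$ and hence an $O(1)$ factor, and $m - |\cC(\cN_i)| \geq m - C$. This yields the uniform bound $\gamma_i^2 = O\!\big(C^2 m / ((k/m)(m-C))\big)$, and the same bound on every product $\gamma_i\gamma_j$ by AM--GM.

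Combining the two pieces gives
\[
    \var(\widehat{\TTE}_1) \leq \tfrac{B^2}{n^2}\cdot nCN\dmax \cdot O\!\big(\tfrac{C^2 m}{(k/m)(m-C)}\big) = O\!\big(\tfrac{B^2}{n}\cdot \tfrac{C^3 N\dmax}{k/m}\cdot \tfrac{m}{m-C}\big),
\]
as claimed. The corollary itself is essentially bookkeeping; the genuine difficulty is already isolated in Lemma~\ref{lem:crd_var_neg_corr}, whose multi-case sign analysis establishing negativity of the covariance for disjoint cluster neighborhoods is what licenses discarding all off-support terms. Without that negativity---special to the complete-randomization correlation structure together with the monotonicity assumption---the indicator could not be replaced by the cheap combinatorial overlap criterion, and the clean counting step would not apply. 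I therefore expect the only real care in the corollary's own proof to be confirming that the $k/m<1/2$ and $C<m$ hypotheses keep the denominators bounded away from zero so that the $\gamma_i^2$ estimate is uniform.
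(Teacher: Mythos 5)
Your proposal is correct and follows essentially the same route as the paper's own proof: both use Lemma~\ref{lem:crd_var_neg_corr} to zero out the indicator for pairs with disjoint cluster neighborhoods, count at most $nCN\dmax$ surviving pairs by the same combinatorial argument, and uniformly bound each $\gamma_i\gamma_j$ via the $|\cC(\cN_i)|<m$ branch of Lemma~\ref{lem:unit_crd_gamma}, with the hypothesis $k/m<1/2$ supplying the constant through $1/(1-k/m)<2$. Your added remarks on monotonicity of $\gamma_i^2$ in $|\cC(\cN_i)|$ and the AM--GM step for $\gamma_i\gamma_j$ are just slightly more explicit bookkeeping than the paper provides.
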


\begin{proof}{\it Proof:}
By Lemma~\ref{lem:crd_var_neg_corr}, the only nonzero terms in the variance are those pairs $i,j$ for which $\psi_i^T\cov(\hat{\bx}_i, \hat{\bx}_j)\psi_j>0$.
As each unit $i$ is connected to at most $C$ clusters, and since each of those clusters has at most $N$ units, each of which has at most $\dmax$ neighbors, there are at most $n C N \dmax$ non-zero terms in the variance bound given by Theorem~\ref{thm:tte_var_bound}. 
Lemma \ref{lem:unit_crd_gamma} implies that each of these non-zero $\gamma_i\gamma_j$ terms will be at most $mC^2/( (k/m)(1-k/m)(m-C))$, for a total variance of $B^2 m C^3 N \dmax/(n (k/m)(1-k/m)(m-C)$. The final bound comes from the assumption that $k/m < 1/2$ implies $1/(1-k/m) < 2$.
\end{proof}

\meedit{We can compare this bound to the variance bound from Corollary~\ref{thm:pi_gcr} for GCR$(\cC,p)$ designs, specialized to the $\beta = 1$ setting with treatment probability $p = \frac{k}{m}$:
\[
    \var(\widehat{\TTE}_1) \leq O\left( \frac{B^2}{n} \cdot \frac{C^2N\dmax}{k/m} \right). 
\]
When comparing this bound to the bound in Corollary~\ref{cor:crd_var_bound}, note that the variances (left hand side) are with regards to different designs. The two bounds agree up to a factor of $Cm/(m-C)$. Note that the assumption that $\max_i |\cC(\cN_i)| = C < m$ ensures that all nodes fall into the first case of Lemma~\ref{lem:unit_crd_gamma}, but as discussed earlier, this is a worst case scenario as Lemma \ref{lem:unit_crd_gamma} suggests the variance bound would only be smaller if for some units $|\cC(\cN_i)| = m$.} Corollary~\ref{cor:crd_var_bound} establishes that the variance of $\widehat{\TTE}_1$ is polynomial in $\dmax$, rather than exponential, for completely randomized unit designs when $\beta=1$, thus generalizing the $\beta=1$ case of the variance bound of \citet{cortez2023exploiting} to the completely randomized setting. More generally, this section and its results show how our framework can be used to obtain bounds on the variance of the pseudoinverse estimator under arbitrary designs. 

\subsection{Experiments Illustrating Bias in the $\GCR_{\Comp}$ Design}
\label{subsec:comp_gcr_bias_sim}

In this section, we experimentally verify the theoretical results on the bias-variance trade-off exhibited by $\widehat{\TTE}_1$ under a $\GCR_{\Comp}$ design. We do this by focusing on a single one of the settings considered in Section~\ref{subsec:choose_cluster}, namely the METIS clusterings of the $\text{SBM}(0.5, 0.2)$ graph, with responses generated so that $\TTE_i=1$. For each of these clusterings, we compute the number of units with $\cC(\cN_i)=m$, which is the quantity that Lemma~\ref{lem:crd_bias} indicated would control the bias, as well as the bias, variance, and RMSE in simulation. The results are shown in Figure~\ref{fig:cluster_gcr_bias}.

\begin{figure}[t]
    \begin{center}
    \includegraphics[width=0.7\textwidth]{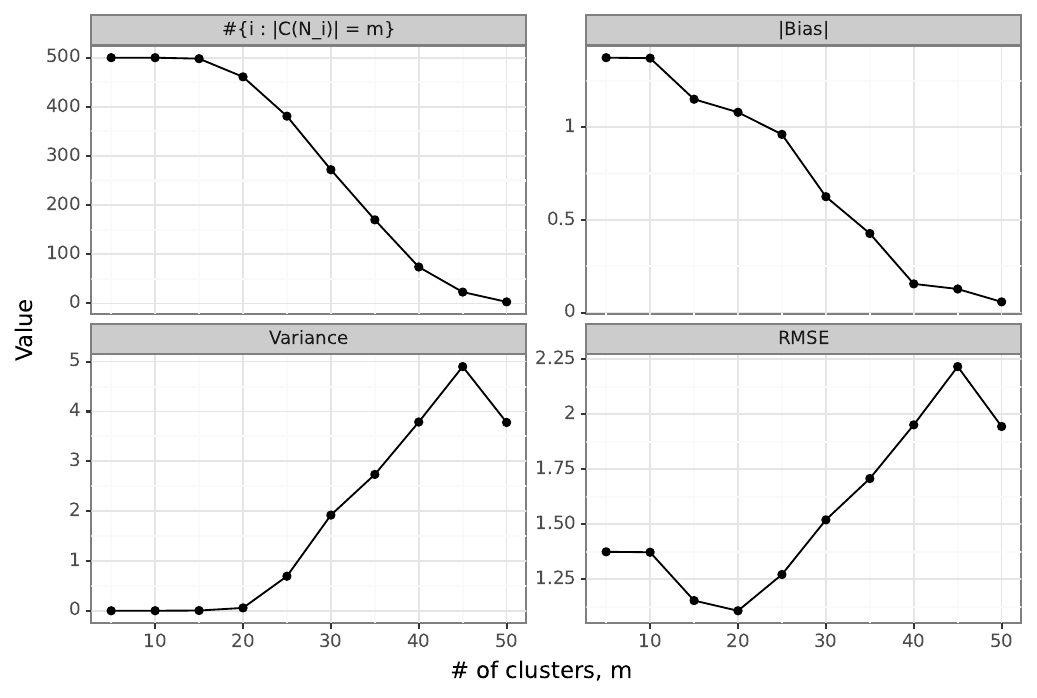}
    \caption{The value of $\{i: |\cC(\cN_i)|=m\}$ (i.e., the number of nodes that have a neighbor in each of the $m$ clusters), as well as bias, variance, and RMSE in simulation for each METIS clustering of the $\text{SBM}(0.5, 0.2)$ graph. Each cluster has a size of approximately $n/m$. As predicted by our theory, the bias decreases as $\{i: |\cC(\cN_i)|=m\}$ decreases, while the variance increases, leading to a bias-variance trade-off and local minimum in the RMSE.}
    \label{fig:cluster_gcr_bias}
    \end{center}
\end{figure}

We see in Figure~\ref{fig:cluster_gcr_bias} that, as predicted by the bound of Lemma~\ref{lem:crd_bias}, the bias of $\widehat{\TTE}_1$ decreases as $|\{i: |\cC(\cN_i)|=m\}$ decreases, while the variance of $\widehat{\TTE}_1$ increases. This creates a bias-variance trade-off in the RMSE, leading to a local minimum when using the METIS clustering into $m=20$ clusters. Furthermore, we see in row six of Table~\ref{tab:cluster_selection} that our bounds select a clustering that has nearly optimal RMSE for this graph and response model, meaning that our bounds can effectively guide a practitioner in navigating this bias-variance trade-off.

\end{document}